\DeclareFontFamily{U}{cbgreek}{}
\DeclareFontShape{U}{cbgreek}{m}{n}{
        <-6>    grmn0500
        <6-7>   grmn0600
        <7-8>   grmn0700
        <8-9>   grmn0800
        <9-10>  grmn0900
        <10-12> grmn1000
        <12-17> grmn1200
        <17->   grmn1728
      }{}
\DeclareFontShape{U}{cbgreek}{bx}{n}{
        <-6>    grxn0500
        <6-7>   grxn0600
        <7-8>   grxn0700
        <8-9>   grxn0800
        <9-10>  grxn0900
        <10-12> grxn1000
        <12-17> grxn1200
        <17->   grxn1728
      }{}
\newcommand{\normalorbold}{%
  \ifnum\pdf@strcmp{\math@version}{bold}=\z@ bx\else m\fi
}
\theoremstyle{plain}
\newtheorem{ithm}{Theorem}
\newtheorem{theorem}{Theorem}
\newtheorem{athm}{Theorem}[section]
\newtheorem{iconjm}{Conjecture}
\newtheorem{conjm}{Conjecture}
\newtheorem{obs}[athm]{Observation}
\newtheorem{prop}[athm]{Proposition}
\newtheorem{alem}[athm]{Lemma}
\newtheorem{coro}{Corollary}[athm]
\theoremstyle{definition}
\newtheorem{definition}{Definition}
\newtheorem{defs}[athm]{Definition}
\newtheorem{remm}[athm]{Remark}
\newcommand*{\eins}{\ensuremath{\mathbbm 1}}
\def\gbm#1{{\let\alpha\upalpha \let\phi\upphi \let\lambda\uplambda \let\mu\upmu \let\rho\uprho \let\sigma\upsigma \let\tau\uptau \let\theta\uptheta \let\eta\upeta \let\xi\upxi \bm{#1}}}
\newcommand*{\bbC}{\mathbb{C}}
\newcommand*{\bbR}{\mathbb{R}}
\newcommand*{\bbZ}{\mathbb{Z}}
\newcommand*{\cA}{\mathcal{A}}
\newcommand*{\cC}{\mathcal{C}}
\newcommand*{\cD}{\mathcal{D}}
\newcommand*{\cE}{\mathcal{E}}
\newcommand*{\cF}{\mathcal{F}}
\newcommand*{\cG}{\mathcal{G}}
\newcommand*{\cH}{\mathcal{H}}
\newcommand*{\cI}{\mathcal{I}}
\newcommand*{\cL}{\mathcal{L}}
\newcommand*{\cM}{\mathcal{M}}
\newcommand*{\cP}{\mathcal{P}}
\newcommand*{\cS}{\mathcal{S}}
\newcommand*{\cT}{\mathcal{T}}
\newcommand*{\cU}{\mathcal{U}}
\newcommand*{\cV}{\mathcal{V}}
\newcommand*{\cW}{\mathcal{W}}
\newcommand*{\cX}{\mathcal{X}}
\newcommand*{\sI}{\mathscr{I}}
\newcommand*{\sT}{\mathscr{T}}
\newcommand*{\sV}{\mathscr{V}}
\newcommand*{\sW}{\mathscr{W}}
\newcommand*{\rA}{\mathrm{A}}
\newcommand*{\rB}{\mathrm{B}}
\newcommand*{\rC}{\mathrm{C}}
\newcommand*{\rJ}{\mathrm{J}}
\newcommand*{\rM}{\mathrm{M}}
\newcommand*{\rR}{\mathrm{R}}
\newcommand*{\rS}{\mathrm{S}}
\newcommand*{\ket}[1]{\left|#1\right\rangle}
\newcommand*{\bra}[1]{\left\langle #1\right|}
\newcommand*{\bket}[2]{\bra{#1\vphantom{#1 #2}}\left.#2\vphantom{#1 #2}\right\rangle}
\newcommand*{\kbra}[2]{\ket{#1\vphantom{#1 #2}}\bra{#2\vphantom{#1 #2}}}
\newcommand*{\bkt}[1]{\bket{#1}{#1}}
\newcommand*{\proj}[1]{\kbra{#1}{#1}}
\newcommand*{\tr}{\mathrm{Tr}}
\newcommand{\nrm}[1]{\left\|#1\right\|}
\newcommand{\abs}[1]{\left|#1\right|}
\newcommand{\vecg}[1]{\gbm{#1}}
\newcommand{\vect}[1]{\mathbf{#1}}
\newcommand{\n}{\textendash}
\newcommand{\m}{\textemdash}
\begin{document}
\title{The coherent measurement cost of coherence distillation}
\date{10 April 2025}
\author{Varun Narasimhachar}
\email{varun.achar@gmail.com}
\orcid{0000-0001-8155-6120}
\affiliation{Institute of High Performance Computing (IHPC),
Agency for Science, Technology and Research (A*STAR), 1 Fusionopolis Way, Republic of Singapore 138632}

\begin{abstract}
    Quantum coherence\m an indispensable resource for quantum technologies\m is known to be distillable from a noisy form using operations that cannot create it. However, distillation exacts a hidden coherent \emph{measurement} cost, which has not previously been examined. We devise the \emph{target effect} construction to characterize this cost through detailed conditions on the coherence-measuring structure necessary in any process realizing exact (maximal or non-maximal) or approximate distillation. As a corollary, we lower-bound the requisite measurement coherence, as quantified by operationally-relevant measures. We then consider the asymptotic limit of distilling from many copies of a given noisy coherent state, where we offer rigorous arguments to support the conjecture that the (necessary and sufficient) coherent measurement cost scales extensively in the number of copies. We also show that this cost is no smaller than the coherence of measurements saturating the scaling law in the generalized quantum Stein's lemma. Our results and conjectures apply to any task whereof coherence distillation is an incidental outcome (e.g.,\ incoherent randomness extraction). But if pure coherence is the only desired outcome, our conjectures would have the cautionary implication that the measurement cost is often higher than the distilled yield, in which case coherence should rather be prepared afresh than distilled from a noisy input.
\end{abstract}

\maketitle


\section{Introduction}
Coherence is a cornerstone of quantum mechanics, playing a central role in the wavelike interference effects that epitomize quantum phenomena. It is also a valuable resource, powering transformative quantum technologies such as quantum computing, quantum communication, and quantum metrology \cite{SAP17}.

A central concept in quantum information science is that of \emph{resource distillation}: the conversion of a resource (like coherence) from an impure form into a standard, pure form. The distillation protocol must not itself consume or generate the resource under question\m it must just convert the resource to the desired form. Under the formalism of quantum processes, this condition is not unambiguous. Depending on how it is interpreted, it begets formal conditions of varying strength. In the case of coherence, the least-constrained distillation protocols are \emph{coherence\n non-creating channels} \cite{aaberg2004subspace,aberg2006quantifying}, established in the literature as the \emph{maximal incoherent operations} (MIO). When acting on incoherent input states, MIO produce outputs that are also incoherent.

But this condition is so weak that it may yet admit operationally-unreasonable possibilities, such as a process that amplifies certain aspects of coherence in an already-coherent input. An operational subclass of MIO, called the \emph{incoherent operations} (IO) \cite{baumgratz2014quantifying}, mitigate this incongruity by placing the additional constraint that the process must be implementable through elementary sub-processes (formally, Kraus operators) that are \emph{each} coherence\n non-creating. Winter and Yang \cite{WY16} found that the IO subclass is already powerful enough to distill coherence maximally, i.e.\ as efficiently as MIO.

Moving further in the operationalist direction, Yadin \textit{et al.}\ \cite{YMG+16} defined a subclass of IO called the \emph{strictly incoherent operations} (SIO). These processes are implementable through sub-processes that individually \emph{neither create nor detect} (i.e.,\ measure) coherence. Lami \textit{et al.}\ \cite{lami_2019-1} found that this constraint is too strong to admit full distillation; subsequently, Lami \cite{lami_2019} exactly quantified how efficiently SIO can distill.

It is evident from these developments that the coherence-measuring capability of the components implementing IO is essential to enabling maximal distillation. But there has been no investigation yet on exactly \emph{how much} coherence-measuring power is required. This is the question that motivates our work.

We approach the problem by constructing a measurement that we call the \emph{target effect} of a process. A process' target effect captures the probability with which it succeeds in converting an arbitrary input to a desired standard form of the resource\m in other words, the process' efficacy for the purpose of distillation. The target effect also contains significant information about the structure of coherent measurements required to implement the process. Our foundational lemmas put stringent conditions on the structure of the target effect for the associated process to succeed in various instances of distillation\m exact (maximal and non-maximal) and approximate. Based on these lemmas, our main results provide lower bounds on the requisite coherent measurement cost (formalized in the relevant technical section below) of the respective distillation instances. Our results apply not only to IO, but to the most general operational class MIO.

We then consider the so-called \emph{asymptotic limit} of distillation, where the input is a large number of copies of a given resource state. Based on our results on approximate distillation, we conjecture that the necessary and sufficient coherent measurement cost of maximal asymptotic distillation is what we call the input's \emph{irretrievable coherence}\m a quantity related to the irreversibility of resource conversion under IO. We make substantial progress towards proving this conjecture and discuss the technical difficulties that hinder the proof. In the process, we establish a relationship between our problem and an important quantum-statistical result called the generalized quantum Stein's lemma \cite{hayashi2024generalized,lami2024solution}: the requisite measurement coherence of asymptotically maximal distillation is bounded below by the coherence of hypothesis-testing measurement effects that saturate the error exponent scaling bound set by the lemma. Finally, we discuss the implications of our conjectures: (1) the asymptotic coherent measurement cost may outweigh the very distilled yield in a significant fraction of cases\m thus rendering distillation wasteful; (2) possibilities of a tradeoff between the coherent measurement budget and the distilled yield.

In the rest of this section, we will provide a brief background and motivation for our research problem, followed by a summary of our main results. We strive to keep the discussion high-level; but it will, of necessity, get gradually more technical by the end of the section. We will often use some technical terms and notation without interrupting the flow of presentation with their formal definitions, which we defer until the next section.

\subsection{Coherence relative to bases}
Quantum coherence refers to the presence of superposition effects in quantum mechanics. But of course, this begs the question ``a superposition of what sort of entities?'' Indeed, coherence can be given different formal definitions based on what we consider the ``unsuperposed'' objects; examples include eigenstates of conserved quantities \cite{MS16}, orthogonal subspaces induced by measurements \cite{aberg2006quantifying,BKB21}, and even mutually-nonorthogonal elements, such as the classical states of bosonic modes \cite{TKEP17}. For our purposes it suffices to consider the unsuperposed objects to be mutually-orthogonal subspaces of the Hilbert space of quantum states, which we shall for brevity call elementary subspaces.

Within this notion, there is a broad distinction between two subtypes, according to what happens when two systems $\rA$ and $\rB$ are brought together:
\begin{enumerate}
    \item The elementary subspaces $\cV^{\rA\rB}$ of the composite $\rA\rB$ are all and only the pairwise tensor products of those of $\rA$ and $\rB$: $\cV_{j,k}^{\rA\rB}=\cV_{j}^\rA\otimes\cV_{k}^\rB$. Under this composition rule, two distinct elementary subspaces $\cV_{j_1}^\rA$, $\cV_{j_2}^\rA$ of the same subsystem never contribute to the same elementary subspace of the composite.
    \item The elementary subspaces of $\rA\rB$ depend on those of $\rA$ and $\rB$ in a manner different from the simple tensor-product composition. An example of this is when the elementary subspaces are eigenspaces of operators (e.g.,\ local Hamiltonians): $\rA\rB$ may have degenerate eigenspaces that combine distinct $\cV_{j_1}^\rA,\cV_{j_2}^\rA$ (in the Hamiltonian example, when the energies of these eigenspaces are related to those of a pair of $\rB$ eigenspaces through $E_{j_1}^\rA+E_{k_1}^\rB=E_{j_2}^\rA+E_{k_2}^\rB$).
\end{enumerate}
We will restrict our attention to the first case, where the composition of two or more systems is governed by a simple tensor product. Furthermore, we will only consider the case of one-dimensional elementary subspaces; it is evident that this dimensionality is preserved under tensor-product composition. In other words, the coherence of interest to us is relative to a certain fixed \emph{orthogonal basis} on each system, and the corresponding tensor-product bases of compositions thereof. This special case has been studied extensively under the resource-theoretic paradigm, which we will review in a later section.

\subsection{Operational aspects of coherence}\label{cohasp}
The \emph{static} form of coherence, present in quantum states, is usually most familiar and easily understood: for instance, a quantum bit (qubit) in one of the states $\ket\pm:=\left(\ket0\pm\ket1\right)/\sqrt2$ contains coherence relative to the so-called computational basis $\left\{\ket0,\ket1\right\}$. But there is also a \emph{dynamical} aspect to coherence, manifested in the action of processes that can transform an incoherent state to a coherent one. In addition, a \emph{mensural} aspect of coherence is embodied by the ability of a measurement device to detect coherent superpositions. An elementary prototype of dynamical coherence is the unitary process effected by the qubit Hadamard gate, given by its action $H\ket0=\ket+$ and $H\ket1=\left(\ket0-\ket1\right)/\sqrt2$ on the computational basis states. In turn, the resulting so-called Hadamard basis states $\ket\pm$ can be considered elementary prototypes of static coherence, and a measurement in this basis one of measurement coherence.

Under the tensor-product composition model within which we are working, these three aspects of coherence are not equal in their operational power: indeed, dynamical coherence is strictly more powerful than the other two, as it can be used to elevate both incoherent states and incoherent measurements to their respective coherent counterparts. Meanwhile, static coherence cannot be used to simulate either dynamical or measurement coherence, and measurement coherence is likewise restricted. Note that this restriction may not present itself under other composition rules\m notably, when the elementary subspaces are eigenspaces of conserved quantities, the composition of two or more systems can admit static coherence to be turned dynamical \cite{aaberg2014catalytic}.

Some important clarifications are in order regarding our use of the term ``coherent measurement''. Firstly, by ``measurement'', we do not mean necessarily an action \emph{intended} as a measurement, or one accompanied by a readout or wavefunction collapse. Rather, we are referring to a property on the mathematical level of description, which on the operational level corresponds to elementary features of the interaction between the system and the apparatus carrying out a process thereon; these elementary interactions necessarily cause certain pieces of information about the system's state to impinge on the apparatus'\m\emph{effectively} a measurement of the system by the apparatus. Furthermore, when we say a process requires coherent measurement over $k$ levels of the system, we mean that at least some of the information inevitably imprinted on the apparatus can distinguish $k$-fold superposed states of the system from other states. In other words, regardless of whether an agent sets up the apparatus for the express purpose of such measurement, carrying out the process entails the apparatus \emph{effectively} measuring (in the above sense) relative to $k$-fold superpositions over the system's computational states. This is a distinctly ``measurement-flavoured'' coherent capability that the apparatus \emph{must} possess for carrying out the process, in contrast to static and dynamical coherence. Consider the Hadamard basis measurement mentioned above (with or without readout): one way of implementing it is to simply project the input onto the Hadamard basis states (corresponding to the Kraus operators $\proj\pm$); in this case, the measuring process has dynamical coherence, since it can (probabilistically) map incoherent inputs to coherent outputs. But another way to implement the same measurement is to use incoherent ``flags'' for the two outcomes (e.g.,\ via the Kraus operators $\kbra{0}{+}$ and $\kbra{1}{-}$), in which case the process lacks dynamical coherence. Measurement coherence is that operational capability which the process possesses in both cases alike.

Just as the static coherence consumed by a process is considered a cost, so should the requisite measurement coherence for a task (such as coherence distillation), even though we do not usually think of measurement as a ``consumable''. A spiritually kindred concept is that of \emph{query complexity}: the number of times a certain operational element needs to be activated in an algorithm. A further justification for considering it a cost is that both static and measurement coherence can be derived from dynamical coherence\m indeed, they always are in practice, though their operational distinction from dynamical coherence is an interesting feature of the quantum formalism (this point will be discussed at length in Section \ref{secram}).

Before moving on to the next section, we note that our notion of measurement coherence is closely related to that formalized by Kim and Lee \cite{kim2022relation,kim2024maneuvering}. But their notion is defined on POVMs\m a formalization that does not incorporate the dynamical details of what happens to the system post-measurement. Also, POVMs are used in contexts with operational \emph{intent} to measure (in contrast with ours, where operational \emph{capability} takes precedence over intent). There are also differences in the finer technical details of Kim and Lee's treatment besides these. Another notion in the literature distinct from ours is \emph{coherence relative to measurements}, also called POVM coherence \cite{aberg2006quantifying,BKB21}: while we are concerned with the coherence (relative to the fixed incoherent basis) \emph{within} the effective measurement action in the Kraus operators, this notion of POVM coherence focuses on coherence \emph{relative to} the block structure induced by the POVM effects themselves. This is, in a sense, complementary to our notion and that of Kim and Lee.

\subsection{Resource theories of coherence}
Various formalizations of the concept of coherence have been explored under the broad umbrella of \emph{resource theories}; for a general exposition on coherence resource theories, we direct the reader to \cite{SAP17}. Here we will provide a brief introduction to certain specific resource theories of coherence, adequate for our purposes.

A resource theory formalizes the study of a quantum resource by identifying the operational capabilities required to create or proliferate it. Such capabilities are axiomatically forbidden, leaving only certain constrained actions that can be performed, called the ``free operations''. The theory then endeavours to chart out what can and cannot be done using only the free operations\m spiritually akin to determining, say, the plane figures that can be constructed using only a compass and a straightedge\footnote{We are indebted to Gilad Gour for this evocative analogy.}. Typically, the free operations in quantum resource theories are a class of quantum (sub)channels, including the preparation of so-called ``free states''\footnote{We use the term ``states'' to refer not only to pure states (associated with wavefunctions spanning a Hilbert space), but also to mixed states (formally identified with density operators acting on the Hilbert space).}. All states that are not free are called resource states.

In the resource theories we will consider, the resource is coherence relative to a fixed orthogonal basis of the Hilbert space of a given quantum system
. This basis is variously termed computational, canonical, classical, etc.;\ we will simply call it the \emph{incoherent basis}. Furthermore, we will work in the paradigm where the incoherent basis of a composite system is just the tensor product of its subsystems' incoherent bases. This notion of coherence falls under what has been called \emph{speakable coherence} in the literature \cite{MS16}. It is operationally relevant for, e.g.,\ gate-based quantum computing, where every elementary system has a computational basis and where tensor products of computational-basis states are easy to prepare.

The free states in these coherence resource theories are the incoherent states, i.e.\ the states whose density matrices are diagonal in the incoherent basis. We will refer to all other states\m the resource states\m as \emph{coherent}. Most of the extensively studied resource theories are primarily concerned with static resources, and this is true also of coherence resource theories. Nevertheless, dynamical and mensural considerations do play a part in axiomatizing the class of free operations.

Adhering to the basic tenets of the resource-theoretic paradigm, the free operations in all of these resource theories are constrained to be incapable of creating coherence. But as it turns out, there are diverse ways to choose families of free operations obeying this constraint, spawning a veritable zoo of distinct coherence resource theories. Amongst these, we will focus on the resource theory whose free operations are the so-called \emph{incoherent operations}\footnote{We will adhere to the term ``incoherent operations'' established in the literature, notwithstanding its regrettable inspecificity.} (IO) \cite{baumgratz2014quantifying}. Informally, an IO is a quantum process that can be broken down into sub-processes that \emph{may detect} coherence (i.e.,\ measure relative to coherent states, in the sense explained in Section \ref{cohasp}) but \emph{must not create} coherence when acting on incoherent input states. In other words, each of the sub-processes that constitute the process is devoid of dynamical coherence but may contain measurement coherence.

\subsection{Distillation of coherence-resource}
The main motivation for this work comes from the resource-theoretic concept of \emph{distillation}: the task of converting an arbitrary resource state to a standard form. Resource distillation is often an essential part of applications \cite{chitambar2019quantum}; for example, coherence distillation is closely related to the task of randomness extraction using incoherent measurements \cite{HFW21}. Beyond this direct value, the study of resource distillation also offers valuable insight into the structure of a resource theory.

In all resource theories of coherence, the standard form of coherence-resource is a pure state containing a uniform superposition of some number of incoherent basis elements, e.g.\ $\ket{\Psi_M}:=M^{-1/2}\sum_{m\in[M]}\ket m$. This choice is justified by several factors. Firstly, these states are usually optimal for applications (e.g.,\ phase estimation) requiring coherence. Secondly, the free operations can produce any required state from a single copy of one of these. A third justification comes from the \emph{asymptotic} or \emph{independent and identically-distributed} (i.i.d.)\ limit of the resource theory, where the free operations act on large numbers of independent copies of identical states. In this limit, the standard coherent states of different $M$ values admit reversible\footnote{Note that this reversibility is to leading order in the number of copies; in this work we will not consider higher-order effects \cite{korzekwa2019avoiding}.} ``currency exchange'' at a rate proportional to $\log_2M$, which is the equivalent number of standard coherent bits (or \emph{cobits}) $\ket{\Psi_2}$ (more familiar as the Hadamard state $\ket+$ discussed above). The cobit thus functions as a convenient unit for quantifying coherence. For the same reason, the Hadamard gate makes a good standard unit for dynamical coherence, and the Hadamard measurement one for measurement coherence.

The asymptotic limit of the IO resource theory affords an added feature: copies of \emph{any} coherent state\m pure or mixed\m can be converted (albeit \emph{not} reversibly\m more on this later) by IO to cobits at a rate that is maximal in a resource-theoretic sense \cite{BG15}. In other words, coherence is asymptotically \emph{universally distillable} by IO. But at the heart of this universal distillability lies the central question that motivates our work.

\subsection{What powers coherence distillation?}
Recall that IO can be implemented using components that do not create coherence, but may nevertheless detect it. \emph{Strictly incoherent operations} (SIO) are the sub-class of IO that use only components that \emph{cannot even detect} coherence \cite{YMG+16}. This restriction ends up breaking the asymptotic universal distillability seen under IO. Indeed, SIO exhibit a particularly severe form of non-distillable, or ``bound'', coherence: any number\m however large\m of copies of certain coherent states cannot be converted, even approximately, to even a \emph{single} cobit \cite{lami_2019-1}.

In summary, the \emph{unbounded} measurement coherence of IO enables universal distillation, while the strictly coherence\n non-detecting SIO are too constrained to distill universally. But what lies between these two extremes? Our paper is an attempt to understand this intervening operational landscape, by answering questions such as:
\begin{enumerate}
    \item How much measurement coherence (quantified in a way that will be discussed later) is necessary to recover the maximal distillability afforded by IO?
    \item What are the corresponding costs for non-maximal and approximate distillation?
    \item How does this coherent measurement cost behave in the asymptotic limit?
    \item What are the conditions for the lower bound on the cost to be attainable? Are these conditions met in the asymptotic limit?
\end{enumerate}
We approach these questions using a construction that we call the \emph{target effect}: a measurement associated with a given quantum process, containing information about the efficacy of the process at mapping arbitrary inputs to a desired target output, as well as about the coherence-detecting power of the process. Among other things, we show that the coherent measurement cost of distillation is bounded by a particular property of the target effect, which we will now discuss.

\subsection{Irretrievable coherence}
In a resource theory, a real-valued function of states is called a \emph{resource measure} if it satisfies the following two conditions: (1) It is a non-increasing monotone under the free operations; (2) It is \emph{faithful}, i.e.\ takes nonzero values on all and only the resource (non-free) states. The answers to our central questions turn out to involve some important measures of coherence.

Given a state $\rho$, its \emph{relative entropy of coherence} is defined as
\begin{equation}
    C_r(\rho)=\min\left\{S\left(\rho\|\sigma\right):\;\Delta[\sigma]=\sigma\right\},
\end{equation}
where $\sigma$ takes values of density operators, $S\left(\cdot\|\cdot\right)$ is the Umegaki quantum relative entropy \cite{umegaki1959conditional}, and $\Delta(\cdot)$ denotes the diagonal part (in the incoherent basis representation) of the argument. Thus, the minimization is over all diagonal states $\sigma$\m in other words, the free states. Conveniently, the minimization evaluates to $C_r(\rho)=S\left(\rho\|\Delta[\rho]\right)=S\left(\Delta[\rho]\right)-S(\rho)$, where $S(\cdot)$ is the von Neumann entropy. Meanwhile, $\rho$'s \emph{coherence of formation} is given by the so-called \emph{convex-roof extension} of the restriction of $C_r$ to pure states:
\begin{equation}\label{cof}
    C_f(\rho)=\min_{p_x\ge0;\;\sum_xp_x\proj{\phi_x}=\rho}\sum_xp_xC_r\left(\proj{\phi_x}\right),
\end{equation}
where the minimization is over all convex decompositions of $\rho$ into pure states. Notice that $C_r(\psi)=C_f(\psi)=S\left(\Delta[\psi]\right)=H\left(\vect p\right)$ (where $H$ is the Shannon entropy) for a pure state $\psi\equiv\proj{\psi}$\footnote{We will use this shorthand for rank-1 projectors, where it is possible without ambiguity.} with incoherent-basis distribution $\abs{\bket i\psi}^2=p_i$. In particular, $C_r\left(\Psi_M\right)=C_f\left(\Psi_M\right)=\log_2M$ for the standard resources.

These measures have operational significance in the IO resource theory. Firstly, $C_r(\rho)$ is the regularized asymptotic \emph{distillable coherence} under IO, defined as the maximum asymptotic rate at which cobits can be distilled from copies of $\rho$ by IO. That is, $C_r(\rho)$ is the largest $r\in\bbR$ such that the transformation $\rho^{\otimes n}\mapsto\Psi_2^{\otimes rn}$ can be achieved by IO to an arbitrarily good approximation as $n\to\infty$. Likewise, $C_f(\rho)$ is the regularized asymptotic \emph{coherence cost} under IO: the minimum asymptotic rate at which cobits must be \emph{consumed} to \emph{prepare} copies of $\rho$ by IO, in an operational task called \emph{resource dilution} or \emph{formation}\m the opposite of distillation. Mathematically, $C_f(\rho)$ is the smallest $r\in\bbR$ such that the transformation $\Psi_2^{\otimes rn}\mapsto\rho^{\otimes n}$ can be achieved arbitrarily well as $n\to\infty$.

For almost all states $\rho$ (in a measure-theoretic sense), $C_f$ is strictly larger than $C_r$ \cite{WY16}. Hence, the coherence distillable by IO from a given input is generically smaller than that required to prepare the same input. Thus, IO presents an instance of \emph{irreversibility} in resource theories, a topic of current interest \cite{LR23,LRS23}. IO's is a particularly strong form of irreversibility, since it persists even in the asymptotic limit and is, moreover, already present in the lowest order (i.e.,\ between the regularized distillation and formation rates). Nevertheless, as we alluded to above, the IO-\emph{distillable} coherence is in fact maximal under general resource-theoretic constraints \cite{BG15}; therefore, the culprit behind the irreversibility is the inflated \emph{cost} of resource formation under IO. Incidentally (as the reader may have anticipated from our earlier statements), the SIO resource theory is even more irreversible\m this additional disparity owing solely to SIO's inferior \emph{distillable} coherence compared to IO's, the two theories' coherence costs being equal!

In our main results, these two coherence measures feature in the form of their difference $\ell(\rho):=C_f(\rho)-C_r(\rho)$. Because of the operational meaning of this quantity vis-\`a-vis the asymptotic irreversibility of the IO resource theory, we christen it the \emph{irretrievable coherence}. It has, in fact, been encountered (though not named) in the literature in a different operational context: it quantifies the difference between the quantum and the classical values of the so-called \emph{intrinsic randomness} of a state \cite{YZCM15,YZGM19}. It is worth noting that, while the irretrievable coherence is determined by two coherence measures and is itself a \emph{signature} of coherence\m it is nonzero only for coherent states\m it is \emph{not} a coherence measure in the resource-theoretic sense. It fails to be faithful, as can be seen from the case of pure states. But more importantly, it is not a monotone under IO or, indeed, any reasonable class of free operations.

\subsection{Clues in the literature}\label{secclue}
Recall that distillation in the asymptotic limit is the task of converting many copies of a given input to an output close to a standard resource. Formally, for every $n\in\bbZ_+$, the input is $\varrho_n\equiv\rho^{\otimes n}$ and is to be mapped approximately to $\Psi_2^{\otimes m_n}$ for some $m_n$. ``Asymptotic'' refers to the limit $n\to\infty$, and the asymptotic rate of distillation is the value  $r=\lim_{n\to\infty}\left(m_n/n\right)$. As we alluded to earlier, the highest achievable rate for a given $\rho$ is $r=C_r(\rho)$.

Winter and Yang \cite{WY16} constructed an IO protocol achieving this maximal distillation rate. A high-level examination of the protocol already hints at connections between asymptotic irreversibility and the object of our interest, viz.\ the coherent measurement cost of distillation. Crucially, the protocol consists (apart from some asymptotically-inconsequential measurements) of just a unitary transformation of the input followed by a partial trace. Considering the purity required of the output (distillate), the effect of the protocol before the final partial trace can be summarized approximately as
\begin{equation}\label{hintst}
    \varrho_n^\rA\stackrel\cU\longmapsto\tau^\rS\otimes{\Psi_M}^\rM.
\end{equation}
Here the superscript $\rA$ labels the input system, $\rM$ the output system, and $\rS$ the part that will be traced out. The question of how much coherent measurement the IO needs translates to how coherently this unitary channel $\cU$ must act. Since the unitary does not involve any additional systems, the systems' dimensionalities (which we denote by italicizing the corresponding labels) satisfy $A=SM$. Let us now make some heuristic estimates for these numbers, appealing to (an extremely crude form of) asymptotic typicality \cite{wilde2013quantum}; for brevity, we will omit qualifiers like ``approximate'' and ``typical part'' in the following statements, but stress that these qualifications are implicit.

Consider the input $\varrho_n\equiv\rho^{\otimes n}$: its rank (by unitarity, also the rank of $\tau$) is\footnote{Throughout this paper, we will use the notation $\exp_2(\cdot)\equiv2^{(\cdot)}.$} $S_0:=\exp_2\left[n\,S(\rho)\right]$, due to the asymptotic equipartition property (AEP). Applying AEP on the diagonal part $\Delta\left(\varrho_n\right)$, which is in fact $\left[\Delta(\rho)\right]^{\otimes n}$, we see that the relevant dimensionality of the input (covering all the incoherent basis labels that occur with nonzero amplitudes) is $A=\exp_2\left[n\,S\left(\Delta[\rho]\right)\right]$. Finally, the size of the maximal distillate is $M=\exp_2\left[n\,C_r(\rho)\right]=\exp_2\left[n\left(S\left[\Delta(\rho)\right]-S[\rho]\right)\right]$. Notice that $M=A/S_0$, and therefore, $S_0=S$. A basic consequence of this is that coherent measurements of rank $S$ are sufficient for maximal distillation; this therefore serves as a reference value against which to evaluate nontrivial bounds on the coherent measurement cost, which is our main concern.

Noting (again from equipartition) that the input's spectrum must be flat, we conclude that $\tau$ must be maximally mixed. In particular, this means that the subsystem $\rS$ is discarded in an incoherent state, uncorrelated with the distillate $\rM$. Now let us view the protocol in reverse: we take $\Psi_M$, append to it an auxiliary system $\rS$ in the \emph{incoherent} state $\tau$, and apply the unitary channel $\cU^\dagger$ to map the composite $\rS\rM$ to $\varrho_n^{\rA\equiv\rS\rM}$. How much coherence must $\cU^\dagger$ generate for this? If it had had to act on a fully incoherent input, it would have needed to create all of the coherence in $\varrho_n$ from scratch; considering that it has the $\Psi_M$ to begin with, it still needs to account for the deficit\m a hint at the difference between the formation cost and the distillable coherence.

To be sure, how much coherence an operation needs to generate to prepare a required resource is not an unambiguous concept: it depends on the class of operations considered. The Winter\n Yang IO protocol's reversal, which we considered, is not itself an IO; nor is it in any of the other classes of incoherent operations defined in the literature. Besides, this maximal distillation protocol is but one possibility; in general, a protocol may use auxiliary systems, instead of acting unitarily on just the input. In this connection, a further difficulty is that there is not yet an operational understanding of IO in terms of their unitary implementations (or \emph{dilations}): IO are only understood in terms of the abstract mathematical objects called Kraus operators. In contrast, SIO (for example) are understood through both their Kraus operators and their dilations.

But another hint in the same direction\m again from the near\n maximally-mixed fate of $\rS$ and the near-unitarity of the protocol\m is that IO distillation seems to entail fully ``resolving'' the space on which the input is supported, so as to retain in the distillate not only all of the input's coherence but also all of its \emph{purity}. As such, the requisite coherent measurement of distillation translates to that of fully resolving the input's support. As a simple illustration of what we mean, consider the 2-dimensional subspace of a ququart (4-dimensional system) spanned by the vectors
\begin{align}
    \ket{v_0}\propto\ket++\ket2;\;\ket{v_1}\propto\ket-+\ket3,
\end{align}
where $\ket\pm\propto\ket0\pm\ket1$ as usual. An IO can distill one cobit from this subspace, e.g.\ using the Kraus operators $K_0=\kbra0++\kbra12$ and $K_1=\kbra0-+\kbra13$. But to do so it must necessarily act coherently on the $\ket\pm$ part.

Let us now try to estimate the coherent measurement rank required to resolve (in this sense) the support of $\varrho_n$. Asymptotically, the projector onto this support is close to $\varrho_n$ itself, as the latter's spectrum flattens out. Let $\varrho_n=\sum_jq_j\phi_j$ be some convex decomposition into pure components. Now consider the typical letter strings (i.e.,\ incoherent basis indices) that occur in $\varrho_n$ (or, equivalently, along its diagonal). As noted above, these are $A=\exp_2\left[n\,S\left(\Delta[\rho]\right)\right]$ in number; and by asymptotic typicality, each of them carries a weight of $A^{-1}$ in the overall input. On the other hand, the weight that any of these strings accrues by virtue of its occurrence in any single $\phi_j$ is $\lesssim\left(S\exp_2\left[nC_f(\rho)\right]\right)^{-1}$: the $S$ factor is the dimensionality of the support, while $\exp_2\left[nC_f(\rho)\right]$ lower-bounds the number of (approximately equally-superposed) typical strings in each pure component, as can be seen from the definition \eqref{cof}. Thus, in order to account for all of a string's weight in $\varrho_n$, it must occur in no less than
\begin{equation}
\frac{S2^{nC_f(\rho)}}A=\exp_2\left(n\left[C_f(\rho)-C_r(\rho)\right]\right)=2^{n\ell(\rho)}
\end{equation}
distinct $\phi_j$'s. This bound applies alike to all of the typical strings. Therefore, a measurement that resolves the entire support of $\varrho_n$ can be expected to involve coherence over blocks no smaller than this size.

Though these hints are based on loose intuition and crude estimates, they proved helpful in our project, directing us towards more rigorous investigations and methods. In particular, they inspired us to consider a non-asymptotic idealization of the above ``crudely-typicalized'' case of maximal distillation, yielding a result (Theorem \ref{itfinmax} below) that somewhat validates the hints and informs our conjectures \ref{ipan} and \ref{ipas} on the asymptotic case.

\subsection{Summary of contributions}\label{secsum}
While the technical details in the following summary are included for the benefit of the expert reader, the essence of our contributions can be appreciated in light of the foregoing background discussion. In particular, one may note that our main results provide lower bounds on the coherent measurement cost of various instances of coherence distillation. Also notable is that the irretrievable coherence and its variants feature among these bounds, as our intuitive arguments above had suggested.

Our first contribution is to establish a formal connection between the coherent measurement cost of distillation and a construction that we call the \emph{target effect} (Section \ref{sectw}). For any given quantum process (channel) $\cE$ and designated target state $\ket\alpha$, we define an associated target effect $T_\cE^\alpha$, constructed to capture the probability with which $\cE$ maps an arbitrary input $\rho$ to the target $\alpha$. We then show that the requisite coherent measurement action in any implementation of $\cE$ is quantified by the \emph{static} coherence of $T_\cE^\alpha$ considered as a state. We then use this result as the foundation to study the coherent measurement cost in several forms of the distillation task. Although the construction is motivated by our interest in the gap between SIO and IO, its applicability is general enough that all our results apply to the most general class of free operations in coherence resource theories, the \emph{coherence\n non-creating channels} (established in the literature as ``maximal incoherent operations'', or MIO).

As mentioned above, we first consider a certain single-shot (i.e.,\ finite-sized and non-asymptotic) variant containing idealized versions of the AEP-related features encountered in maximal asymptotic distillation (Section \ref{secmax}):
\begin{ithm}\label{itfinmax}
    Any coherence\n non-creating channel that deterministically maps a rank-$S$ state $\rho^\rA$ to a standard coherent resource $\Psi_M$ with $M=A/S$ must involve coherent measurement over at least
    \begin{equation}
        M^{-1}r_C\left(\tau_\rho\right)\ge M^{-1}\exp_2C_f\left(\tau_\rho\right)\ge\exp_2\ell\left(\tau_\rho\right)
    \end{equation}
    elements of $\rA$'s incoherent basis, where $\tau_\rho:=\eins_\rho/S$.
\end{ithm}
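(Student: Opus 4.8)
The plan is to read off, from the hypotheses, the exact form of the channel's target effect, and then invoke the foundational bound of Section~\ref{sectw}. Let $\cE$ denote the hypothesized coherence\n non-creating channel taking $\rho^\rA$ to $\Psi_M$, and set $T:=T_\cE^{\Psi_M}=\cE^\dagger\!\left(\proj{\Psi_M}\right)$, its target effect, so that $\tr[T\rho']=\bra{\Psi_M}\cE(\rho')\ket{\Psi_M}$ for every input $\rho'$. Since $\cE$ is trace-preserving, $\cE^\dagger$ is positive and unital, hence $0\le T\le\eins$; and deterministic success, $\cE(\rho)=\proj{\Psi_M}$, gives $\tr[T\rho]=1$. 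First I would note that $\tr[(\eins-T)\rho]=0$ with $\eins-T\ge0$ and $\rho\ge0$ forces the Hilbert\n Schmidt norm of $\rho^{1/2}(\eins-T)^{1/2}$ to vanish, so $T$ acts as the identity on $\mathrm{supp}\,\rho$ and is block-diagonal across $\mathrm{supp}\,\rho\oplus(\mathrm{supp}\,\rho)^\perp$: write $T=\eins_\rho+T'$ with $T'\ge0$ living on $(\mathrm{supp}\,\rho)^\perp$, so that $\tr T=S+\tr T'\ge S$.

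Next I would use the coherence\n non-creating property to cap $\tr T$ from above. Coherence non-creation means $\Delta\circ\cE\circ\Delta=\cE\circ\Delta$ as superoperators; taking Hilbert\n Schmidt adjoints gives $\Delta\circ\cE^\dagger\circ\Delta=\Delta\circ\cE^\dagger$. Hence $\Delta(T)=\Delta\!\left(\cE^\dagger\!\left[\Delta\proj{\Psi_M}\right]\right)=M^{-1}\Delta\!\left(\cE^\dagger[P]\right)$, where $P$ projects onto the $M$-dimensional incoherent subspace carrying $\Psi_M$; since $\cE^\dagger$ is positive and unital, $\cE^\dagger(P)\le\eins_\rA$, so $\Delta(T)\le M^{-1}\eins_\rA$ and therefore $\tr T=\tr\Delta(T)\le A/M=S$. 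Together with $\tr T\ge S$ this forces $\tr T'=0$, hence $T'=0$ and $T=\eins_\rho$ \emph{exactly}; moreover $\Delta(\eins_\rho)=M^{-1}\eins_\rA$ is then pinned, so $\Delta(\tau_\rho)=A^{-1}\eins_\rA$ and $C_r(\tau_\rho)=\log_2A-\log_2S=\log_2M$ exactly. Thus the normalized target effect of $\cE$ is precisely $\tau_\rho$.

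The bound then follows by feeding this into the result of Section~\ref{sectw}: the coherent measurement cost of $\cE$ is bounded below by the coherence rank of its normalized target effect divided by the coherence rank $M$ of the target $\Psi_M$, the factor $M^{-1}$ reflecting that pulling an $M$-fold incoherent superposition back through a single elementary sub-process of detection rank $k$ yields a vector of coherence rank at most $Mk$. This produces the first quantity $M^{-1}r_C(\tau_\rho)$. The remaining inequalities are generic: $r_C(\sigma)\ge\exp_2C_f(\sigma)$ for every state $\sigma$ (indeed $C_f(\sigma)\le\log_2 r_C(\sigma)$, since in an optimal coherence-rank decomposition each pure component has coherence at most $\log_2 r_C(\sigma)$; cf.~\eqref{cof}), and $M^{-1}\exp_2C_f(\tau_\rho)=\exp_2\!\left[C_f(\tau_\rho)-\log_2M\right]=\exp_2\!\left[C_f(\tau_\rho)-C_r(\tau_\rho)\right]=\exp_2\ell(\tau_\rho)$, using the exact value of $C_r(\tau_\rho)$ obtained above.

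I expect the one genuinely delicate point to be the rigidity step, namely that the hypotheses force the target effect to be \emph{exactly} the support projector $\eins_\rho$ rather than merely to dominate it. This is what turns the heuristic ``distillation must fully resolve the input's support'' of Section~\ref{secclue} into a hard bound, and it rests on the tight interplay among deterministic success, trace preservation, coherence non-creation, and the dimension relation $M=A/S$. Everything downstream is either the separately established foundational bound of Section~\ref{sectw} or a routine coherence-measure inequality.
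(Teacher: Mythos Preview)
Your proof is correct and follows essentially the same route as the paper's: the paper's Lemma~\ref{TWisV} and Lemma~\ref{lwcu} establish the rigidity step $T_\cE=\eins_\rho$ exactly as you do, and Proposition~\ref{pfinu} then derives the rank bound via the convex decomposition $T_\cE=\sum_c\proj{w_c}$ and the subadditivity $r_C(w_c)\le\sum_m r_C(w_{c,m})\le M\max_m r_C(\phi_{c,m})$ that you describe. One small imprecision: the explicit chain yielding the factor $M^{-1}$ is spelled out in Proposition~\ref{pfinu} rather than stated as a standalone bound in Section~\ref{sectw}; and the paper obtains the last inequality from $\log_2M\le C_r(\tau_\rho)$ via MIO monotonicity, whereas you derive the (stronger, and also valid here) equality $C_r(\tau_\rho)=\log_2M$ from the pinned diagonal $\Delta(\eins_\rho)=M^{-1}\eins^\rA$.
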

Here, $r_C$ denotes the \emph{coherence rank} of a state, defined for pure states as $r_C(\psi)=\mathrm{rank}\Delta(\psi)$ and for mixed states as
\begin{equation}\label{crank1}
    r_C(\rho)=\min_{p_x\ge0;\;\sum_xp_x\phi_x=\rho}\max_xr_C\left(\phi_x\right).
\end{equation}
We prove Theorem \ref{itfinmax} by showing that the target effect associated with any channel achieving such a transformation must be proportional to $\tau_\rho$. When we present our results in detail, we will see that the condition $M=A/S$ is always associated with the distilled resource' being maximal. In general, if an IO can distill $\Psi_M$ from a rank-$S$ state in $A$ dimensions, then $M\le A/S$. Our next result (Section \ref{secnonm}) applies to exact \emph{non-maximal} distillation, again in the single-shot regime.
\begin{ithm}\label{itfin}
    Any coherence\n non-creating channel that deterministically maps a rank-$S$ state $\rho^\rA$ to a standard coherent resource $\Psi_M$ with $M=pA/S$ must involve coherent measurement over at least $M^{-1}r_{C;p}\left(\tau_\rho\right)\ge M^{-1}\exp_2C_{f;p}\left(\tau_\rho\right)\ge\exp_2\ell_{;p}\left(\tau_\rho\right)$ elements of $\rA$'s incoherent basis, where $\tau_\rho:=\eins_\rho/S$ and the bounds are defined as follows:
    \begin{align}
        r_{C;p}(\tau)&:=\min_{\gamma\in\sT^\rA_p(\tau)}r_C(\gamma);\\
        C_{f;p}(\tau)&:=\min_{\gamma\in\sT^\rA_p(\tau)}C_f(\gamma);\\
        \ell_{;p}(\tau)&:=C_{f;p}(\tau)-C_r(\tau)
    \end{align}
    with
    \begin{equation}
        \sT^\rA_p(\tau):=\left\{\gamma=p\tau+(1-p)\tau_\perp:\Delta(\gamma)=\frac{\eins^\rA}A\right\},
    \end{equation}
    $\tau_\perp$ taking values as density operators on the subspace complementary to $\tau$'s support.
\end{ithm}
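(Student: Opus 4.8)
The plan is to follow the template used for Theorem \ref{itfinmax}, softening its conclusion: instead of showing the target effect is \emph{proportional} to $\tau_\rho$, I show that its normalization lies in the set $\sT_p^\rA(\tau_\rho)$, and then squeeze out the three stated bounds with generic coherence inequalities. Write $T:=T_\cE^{\Psi_M}=\cE^\dagger(\proj{\Psi_M})$ for the target effect of a coherence\n non-creating $\cE$ with $\cE(\rho)=\Psi_M$ and $M=pA/S$; let $\eins_\rho$ be the projector onto $\rho$'s support, so $\tau_\rho=\eins_\rho/S$. Two structural facts pin down $T$. First, deterministic success gives $\tr(T\rho)=\bra{\Psi_M}\cE(\rho)\ket{\Psi_M}=1$, which together with $0\le T\le\eins$ and $\tr\rho=1$ forces $(\eins-T)\eins_\rho=0$; since $T$ is Hermitian this makes $T$ block-diagonal across the splitting $\eins_\rho\oplus(\eins-\eins_\rho)$, i.e.\ $T=\eins_\rho+T_\perp$ with $0\le T_\perp\le\eins-\eins_\rho$. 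Second, coherence\n non-creation gives $\Delta\cE^\dagger(X)=\Delta\cE^\dagger(\Delta X)$ for every $X$ — because $\cE$ sends diagonal operators to diagonal operators, so $\tr(\cE^\dagger(X)\proj j)=\tr(X\,\cE(\proj j))=\tr(\Delta X\,\cE(\proj j))=\tr(\cE^\dagger(\Delta X)\proj j)$ for all $j$ — and applying this to $X=\proj{\Psi_M}$ with $\Delta\proj{\Psi_M}=\eins^\rM/M$ and the unitality $\cE^\dagger(\eins^\rM)=\eins^\rA$ yields $\Delta T=M^{-1}\eins^\rA$. In particular $\tr T=M^{-1}A=S/p$, $\tr T_\perp=S(1-p)/p$, and every diagonal entry of $\eins_\rho$ is at most $M^{-1}$.

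Now set $\gamma_\ast:=T/\tr T=(p/S)\,T$ and, for $p<1$, $\tau_\perp:=\tfrac{p}{S(1-p)}T_\perp$ (the case $p=1$ is Theorem \ref{itfinmax}). Then $\tau_\perp$ is a density operator living on $\rho$'s orthogonal complement, a direct computation gives $\gamma_\ast=p\tau_\rho+(1-p)\tau_\perp$, and $\Delta\gamma_\ast=(p/S)M^{-1}\eins^\rA=\eins^\rA/A$, so $\gamma_\ast\in\sT_p^\rA(\tau_\rho)$ (in particular $\sT_p^\rA(\tau_\rho)$ is nonempty, as it must be for the bound to be meaningful). By the target-effect lower bound of Section \ref{sectw} — applied to the target $\Psi_M$, whose coherence rank is $M$, so that the factor $M^{-1}$ is the reciprocal of the target's coherence rank — any implementation of $\cE$ must use coherent measurement over at least $M^{-1}r_C(T)$ incoherent-basis elements; since $r_C$ is insensitive to normalization, $r_C(T)=r_C(\gamma_\ast)\ge\min_{\gamma\in\sT_p^\rA(\tau_\rho)}r_C(\gamma)=r_{C;p}(\tau_\rho)$, which is the first bound. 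The second follows from the pointwise inequality $r_C(\gamma)\ge\exp_2 C_f(\gamma)$ and monotonicity of $\exp_2$. The third, $M^{-1}\exp_2 C_{f;p}(\tau_\rho)\ge\exp_2\ell_{;p}(\tau_\rho)$, is equivalent to $C_r(\tau_\rho)\ge\log_2 M$, which holds because $C_r(\tau_\rho)=S(\Delta\tau_\rho)-\log_2 S$ while the bound $\max_j(\Delta\tau_\rho)_{jj}\le(MS)^{-1}$ from the previous paragraph gives $S(\Delta\tau_\rho)\ge\log_2(MS)$.

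The crux of the argument — and the only genuinely delicate point — is the two-part structural characterization of $T$ together with the normalization bookkeeping that places $\gamma_\ast$ \emph{exactly} inside $\sT_p^\rA(\tau_\rho)$: one must check that the success constraint (the $\eins_\rho$ block) and the coherence\n non-creation constraint (the uniform diagonal $M^{-1}\eins^\rA$) are mutually consistent and combine to yield both the requisite block-plus-complement form and the diagonal bound, the latter being precisely what makes the final entropic step $C_r(\tau_\rho)\ge\log_2 M$ go through. Everything downstream — the reduction to $r_{C;p}$ via Section \ref{sectw} and the chain $r_{C;p}(\tau_\rho)\ge\exp_2 C_{f;p}(\tau_\rho)$ and $M^{-1}\exp_2 C_{f;p}(\tau_\rho)\ge\exp_2\ell_{;p}(\tau_\rho)$ — is routine once those facts are in hand.
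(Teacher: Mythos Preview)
Your proposal is correct and follows essentially the same route as the paper: you reproduce Lemma~\ref{TWisV} (the block form $T=\eins_\rho+T_\perp$), Observation~\ref{tbound}/Lemma~\ref{lwc} (the uniform diagonal $\Delta T=M^{-1}\eins^\rA$), and the Corollary to Proposition~\ref{pfin} (that the normalized target effect lies in $\sT_p^\rA(\tau_\rho)$), then read off the rank chain exactly as the paper does. The one minor deviation is your proof of $C_r(\tau_\rho)\ge\log_2 M$ via the diagonal bound $(\Delta\tau_\rho)_{jj}\le(MS)^{-1}$, whereas the paper obtains it from MIO monotonicity of $C_r$ applied to $\cE(\tau_\rho)=\Psi_M$; both are valid and equally short.
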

We arrive at this result by methods similar to those of Theorem \ref{itfinmax}: showing that the target effect of any viable channel is associated with a state $\sigma$ containing a fraction $p$ of $\tau_\rho$ mixed with some state orthogonal thereto.

Moving on, we have an \emph{approximate} version of the maximal case\m a lower bound on the coherent measurement cost of mapping a given input to an output that is close enough (i.e.,\ has high enough fidelity) to a near-maximal standard resource (Section \ref{secappr}).
\begin{ithm}\label{itfina}
    Any coherence\n non-creating channel that deterministically maps a rank-$S$ state $\rho^\rA$, satisfying $r_{\min}\eins^\rA\le\rho^\rA\le r_{\max}\eins^\rA$, to an output $\sigma^\rM$ such that $F\left(\sigma,\Psi_M\right)\ge1-\epsilon$ for $M=A/\tilde S$ must involve coherent measurement over at least
    \begin{equation}
        \frac{\exp_2\left[C_f\left(\tau_\rho\right)-\delta\log_2A-(1+\delta)h\left(\frac\delta{1+\delta}\right)\right]}M
    \end{equation}
    elements of $\rA$'s incoherent basis, where $\tau_\rho:=\eins_\rho/S$ and $\delta:=\sqrt{2\left(1-\frac{S_{\rho,\epsilon}}{\sqrt{S\tilde S}}\right)}$
    with
    \begin{equation}
        S_{\rho,\epsilon}:=\max\left\{\frac{1-\epsilon}{r_{\max}},\,S\left(1-\frac{\epsilon}{r_{\min}}\right)\right\}.
    \end{equation}
\end{ithm}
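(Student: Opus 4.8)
The plan is to run the target-effect argument of Theorems \ref{itfinmax} and \ref{itfin} in a smoothed form. As in those proofs, the coherent measurement cost of $\cE$ is bounded below by $M^{-1}r_C(T)$, where $T:=T_\cE^{\Psi_M}=\cE^\dagger(\proj{\Psi_M})$ is the target effect, so it suffices to lower-bound $r_C(T)$. The exact maximal case pins $T\propto\tau_\rho$ and then invokes $r_C(\tau_\rho)\ge\exp_2 C_f(\tau_\rho)$; here I would instead show that the normalized target effect is \emph{close} to $\tau_\rho$ in trace distance, and transport the inequality $r_C\ge\exp_2 C_f$ across that gap by the uniform continuity of $C_f$.

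First I would gather the structural facts. Because $\cE$ is trace-preserving, $\cE^\dagger$ is unital, so $0\le T\le\cE^\dagger(\eins^\rM)=\eins^\rA$ and thus $\nrm T_\infty\le1$. Coherence non-creation, read in the Heisenberg picture as $\Delta\cE^\dagger=\Delta\cE^\dagger\Delta$, applied to $\proj{\Psi_M}$ (whose diagonal is $\eins^\rM/M$) gives $\Delta(T)=M^{-1}\Delta\cE^\dagger(\eins^\rM)=M^{-1}\eins^\rA$, whence $\tr T=A/M=\tilde S$ and $\gamma:=T/\tilde S$ is a state with $\nrm\gamma_\infty\le1/\tilde S$. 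From the distillation hypothesis, $\tr(\rho T)=\bra{\Psi_M}\cE(\rho)\ket{\Psi_M}\ge1-\epsilon$ (modulo the fidelity convention), and feeding this into the two-sided bound $r_{\min}\eins_\rho\le\rho\le r_{\max}\eins_\rho$ yields, after routine estimates, $\tr(\eins_\rho T\eins_\rho)\ge S_{\rho,\epsilon}$.

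Next I would upgrade this ``local'' data to the ``global'' statement $\gamma\approx\tau_\rho$. The key step is the operator inequality $\sqrt\gamma\ge\gamma/\nrm\gamma_\infty^{1/2}$: conjugating by $\eins_\rho$ and using $\sqrt{\tau_\rho}=\eins_\rho/\sqrt S$, $\nrm\gamma_\infty\le1/\tilde S$ and $\tr(\eins_\rho\gamma\,\eins_\rho)\ge S_{\rho,\epsilon}/\tilde S$, one obtains
\begin{equation}
F(\gamma,\tau_\rho)=\tfrac1{\sqrt S}\nrm{\sqrt\gamma\,\eins_\rho}_1\ge\tfrac1{\sqrt S}\tr(\eins_\rho\sqrt\gamma\,\eins_\rho)\ge\frac{S_{\rho,\epsilon}}{\sqrt{S\tilde S}},
\end{equation}
and then Fuchs--van de Graaf gives $\tfrac12\nrm{\gamma-\tau_\rho}_1\le\sqrt{1-F(\gamma,\tau_\rho)^2}\le\sqrt{2\bigl(1-S_{\rho,\epsilon}/\sqrt{S\tilde S}\bigr)}=\delta$. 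Finally I would compose: $r_C(T)=r_C(\gamma)\ge\exp_2 C_f(\gamma)$ (coherence rank is scale-invariant, and $r_C\ge\exp_2 C_f$ as already used), and the Alicki--Fannes--Winter-type uniform continuity of $C_f$ on the $A$-dimensional system, with $\tfrac12\nrm{\gamma-\tau_\rho}_1\le\delta$, yields $C_f(\gamma)\ge C_f(\tau_\rho)-\delta\log_2 A-(1+\delta)h\bigl(\tfrac\delta{1+\delta}\bigr)$; multiplying the resulting bound on $r_C(T)$ by $M^{-1}$ gives the statement.

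The step I expect to be the main obstacle is passing from the local overlap to global closeness with the stated constant: estimating the trace distance directly, or tracking the eigenvalues of $\gamma$ restricted to the range of $\eins_\rho$, loses a square root and delivers $\sqrt{S_{\rho,\epsilon}}$ in place of $S_{\rho,\epsilon}$, whereas estimating the fidelity via the operator-monotone bound for $\sqrt\cdot$ together with $\nrm\gamma_\infty\le1/\tilde S$ does not --- and it is here that the purity hypotheses on $\rho$ (through $r_{\min},r_{\max}$) and the flatness $\Delta(T)=\eins^\rA/M$ are both consumed. A secondary point of care is using the \emph{sharp} $C_f$ continuity bound so that the dimension term is $\delta\log_2 A$ rather than $2\delta\log_2 A$.
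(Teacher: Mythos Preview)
Your plan matches the paper's proof: bound the normalized target effect $\gamma=T_\cE/\tilde S$ close to $\tau_\rho$ and transport $C_f$ across the gap via the Winter--Yang continuity lemma. Your fidelity estimate is actually cleaner than the paper's: you use the operator inequality $\sqrt\gamma\ge\gamma/\nrm\gamma_\infty^{1/2}$ (together with the Jensen-type bound $\tr\sqrt{\eins_\rho\gamma\eins_\rho}\ge\tr(\eins_\rho\sqrt\gamma\,\eins_\rho)$) to get $\sqrt{F_{\mathrm{UJ}}(\gamma,\tau_\rho)}\ge S_{\rho,\epsilon}/\sqrt{S\tilde S}$ directly, whereas the paper reaches the same bound through an $\ell_p$-norm minimization on the spectrum of the normalized restriction $\tau_\cE^{|\rho}$.

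One point to fix: the continuity lemma you need (the paper's Lemma~\ref{cfcon}, i.e.\ Winter--Yang's Lemma~15) is stated for the \emph{Bures distance}, not the trace distance, and $\tfrac12\nrm{\gamma-\tau_\rho}_1\le\delta$ does not by itself give $B(\gamma,\tau_\rho)\le\delta$. Your Fuchs--van de Graaf detour is therefore both unnecessary and slightly misaligned. But there is no real gap: your root-fidelity bound already reads $B(\gamma,\tau_\rho)=\sqrt{2\bigl(1-\sqrt{F_{\mathrm{UJ}}}\bigr)}\le\sqrt{2\bigl(1-S_{\rho,\epsilon}/\sqrt{S\tilde S}\bigr)}=\delta$, so just drop the trace-distance step and apply Lemma~\ref{cfcon} directly with $B\le\delta$.
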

For small $\epsilon$ and near-maximal distillate (i.e.,\ $\tilde S\approx S$), $\delta$ becomes small and we recover Theorem \ref{itfinmax} as a limiting case. Although we do not delve into approximate \emph{non}-maximal distillation, our methods can be suitably extended to address this case. Theorem \ref{itfina} follows by applying the asymptotic continuity of the coherence of formation \cite{WY16} to approximate versions of Theorem \ref{itfinmax}'s conditions.

Next, in Section \ref{secas}, we look at distillation in the asymptotic limit, where the input is of the form $\varrho_n\equiv\rho^{\otimes n}$. Based on our result on approximate distillation, and on the fact that $\varrho_n$ for large $n$ is approximately maximally-mixed on a large part of its support (due to a statistical effect called asymptotic typicality\m see Appendix \ref{apptyp}), we make the following conjecture.
\begin{iconjm}\label{ipan}
    Suppose a sequence $\cE_n$ of MIO channels is maximally-distilling on copies of an input $\rho$. That is, $\cE_n$ acting on $\varrho_n\equiv\rho^{\otimes n}$ achieves $F\left[\cE_n\left(\varrho_n\right),\Psi_{M_n}\right]\ge1-o(1)$ for $\log_2M_n=n\left[C_r(\rho)-o(1)\right]$. Then any Kraus operator decomposition of $\cE_n$ involves coherent measurement over at least $L_n$ elements of the input's incoherent basis, where
    \begin{equation}
        \log_2L_n\ge n\left[\ell(\rho)-o(1)\right].
    \end{equation}
    This rank bound applies both to the single most coherent measurement element and to the \emph{average} (of the rank's logarithm) over all involved measurements under the distribution induced by the input.
\end{iconjm}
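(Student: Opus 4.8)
The plan is to obtain Conjecture~\ref{ipan} by bootstrapping from the finite-size bound of Theorem~\ref{itfina}, making rigorous the typicality heuristic of Section~\ref{secclue}. Fix $\rho$ and a maximally-distilling sequence $\cE_n$, so that $F[\cE_n(\varrho_n),\Psi_{M_n}]\ge 1-o(1)$ with $\log_2 M_n=n[C_r(\rho)-o(1)]$; since $\Psi_{M_n}$ is pure this yields $\bra{\Psi_{M_n}}\cE_n(\varrho_n)\ket{\Psi_{M_n}}=1-o(1)$. Expand $\varrho_n=\rho^{\otimes n}$ in its eigenbasis and group eigenvectors by \emph{type} (the empirical distribution of single-copy eigenindices): $\varrho_n=\sum_t q_t\,\varrho_{n,t}$, where each $\varrho_{n,t}$ is the \emph{flat} state on the corresponding eigenspace and the number of types is polynomial in $n$. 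By linearity and a Markov estimate, a $q$-fraction $1-o(1)$ of types $t$ satisfy $\bra{\Psi_{M_n}}\cE_n(\varrho_{n,t})\ket{\Psi_{M_n}}=1-o(1)$; among these are \emph{typical} types, close to the true spectrum, each of rank $\exp_2[n(S(\rho)+o(1))]$, and I would fix one such $t$ on which $\cE_n$ distills the flat state $\varrho_{n,t}$ to fidelity $1-o(1)$ with $\Psi_{M_n}$.

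Next I would project $\varrho_{n,t}$ onto the incoherent-typical subspace --- the span of the $\exp_2[n(S(\Delta[\rho])+o(1))]$ incoherent strings typical for $\Delta[\rho]$. The incoherent-basis statistics of any type-$t$ eigenvector of $\rho^{\otimes n}$ form a product distribution with single-letter law $\sum_j t(j)\abs{\bket{a}{e_j}}^2\approx\bra{a}\Delta[\rho]\ket{a}$, so by the law of large numbers this projection sheds only $o(1)$ trace weight. Write $\widetilde\varrho_{n,t}$ for the renormalized result: it has rank $\exp_2[n(S(\rho)+o(1))]$, incoherent support $A_n:=\exp_2[n(S(\Delta[\rho])+o(1))]$, and hence $M_n\approx A_n/\mathrm{rank}(\widetilde\varrho_{n,t})$ --- the near-maximal regime $M=A/\widetilde S$ of Theorem~\ref{itfina}. \emph{If} $\widetilde\varrho_{n,t}$ can be arranged to be nearly flat in the strong sense $r_{\max}\,\mathrm{rank}(\widetilde\varrho_{n,t})=1+o(1)$, then the error parameter $\delta:=\sqrt{2\bigl(1-S_{\rho,\epsilon}/\sqrt{S\widetilde S}\bigr)}$ is $o(1)$, the correction $\delta\log_2 A_n$ is $o(n)$, and Theorem~\ref{itfina} bounds the coherent-measurement rank below by $\exp_2[C_f(\tau_{\widetilde\varrho_{n,t}})-o(n)]/M_n$. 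Near-flatness also makes $\tau_{\widetilde\varrho_{n,t}}\approx\widetilde\varrho_{n,t}\approx\rho^{\otimes n}$ in trace distance, so the asymptotic continuity of $C_f$ together with its identification with the regularized coherence cost (both due to Winter and Yang \cite{WY16}) give $C_f(\tau_{\widetilde\varrho_{n,t}})\ge n[C_f(\rho)-o(1)]$. Dividing by $M_n$ and using $\log_2 M_n=n[C_r(\rho)-o(1)]$ delivers $\log_2 L_n\ge n[C_f(\rho)-C_r(\rho)-o(1)]=n[\ell(\rho)-o(1)]$. The average-over-measurements statement follows the same route, with the target-effect bound of Theorems~\ref{itfinmax}--\ref{itfina} --- which decomposes over the Kraus operators weighted by the input-induced outcome probabilities --- yielding a bound on the probability-weighted average of $\log r_C$ in addition to the single-element bound.

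I expect the main obstacle to be coordinating the two truncations so that $\widetilde\varrho_{n,t}$ is at once $o(1)$-close to $\rho^{\otimes n}$, of rank $\exp_2[n(S(\rho)+o(1))]$, supported on $\exp_2[n(S(\Delta[\rho])+o(1))]$ incoherent strings, \emph{and} nearly flat with $r_{\max}\,\mathrm{rank}=1+o(1)$. Flattening the spectrum (the type decomposition) and truncating the incoherent support happen in different bases, so their composite is generically not flat --- the vectors $\Pi^{\mathrm{incoh}}_{\mathrm{typ}}\ket{e_{j^n}}$ are neither orthogonal nor equinorm. Smoothing recovers flatness only up to a \emph{sub-exponential} factor $r_{\max}\,\mathrm{rank}=\exp_2[o(n)]$, whereas Theorem~\ref{itfina}'s error term $\delta$ --- which sees $r_{\max}$ through $S_{\rho,\epsilon}=\max\{(1-\epsilon)/r_{\max},\,S(1-\epsilon/r_{\min})\}$ --- already drifts away from $0$ once $r_{\max}\,\mathrm{rank}$ grows without bound. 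Closing the gap seems to require either a sharper finite-size bound whose correction scales only with $\log_2(r_{\max}\,\mathrm{rank})$, or an argument that forgoes exact flatness altogether --- in essence a quantitative claim that the eigenvectors of $\rho^{\otimes n}$ populate the incoherent basis uniformly enough on a subspace of near-unit weight. A minor secondary issue is that the Markov step isolates a single usable type only when $F\to 1$ fast relative to the polynomial type count, so slow convergence forces a coarser argument pooling many types. An alternative attack, pursued in Section~\ref{secas}, routes the bound through the generalized quantum Stein's lemma \cite{hayashi2024generalized,lami2024solution}: the coherent measurement cost of asymptotically maximal distillation is at least the coherence of hypothesis-testing effects that saturate the Stein exponent, whence it remains to show those effects are coherent over $\exp_2[n(\ell(\rho)-o(1))]$ levels --- a reformulation concentrating the same obstruction in one place.
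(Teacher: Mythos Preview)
Your plan closely parallels the paper's own proof sketch (Section~\ref{secgenn}), and you correctly identify the obstacle that blocks both from closing --- as expected, since the statement remains a conjecture in the paper.

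The one methodological difference is that you begin with a \emph{type decomposition} $\varrho_n=\sum_t q_t\varrho_{n,t}$ to obtain exactly flat states before projecting onto the incoherent-typical subspace, whereas the paper projects $\varrho_n$ directly onto its $\delta_\rS$-weakly-typical subspace (obtaining a state flat only up to a factor $\sim2^{2n\delta_\rS}$) and then onto the $\delta_\rA$-typical subalphabet. Your route buys exact initial flatness at the cost of a Markov step; the paper's keeps a more direct link to $\varrho_n$ via Lemma~\ref{lmm}. Both hit the same wall: the two truncations live in different bases, so the composite is flat only up to a subexponential factor, and the error parameter $\delta$ in Theorem~\ref{itfina} / Lemma~\ref{lmm} is too sensitive to survive even that. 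The paper phrases this as the exponent $n(\epsilon_n^{(0)}+3\delta_\rS+\delta_\rA)$ in its fidelity bound \eqref{frrv} being irreducible because the typicality parameters necessarily scale as $\delta\sim n^{-1/2}$. It also singles out a second obstacle you do not isolate: the $n\epsilon_n^{(0)}$ term coming from the rate deficit in $M_n$, which is problematic independently of the flatness issue since the definition of maximal distillation puts no constraint on how fast $\epsilon_n^{(0)}\to0$. Beyond the Stein's-lemma reformulation (Observation~\ref{obstein}) that you also mention, the paper speculates on possible escape routes --- a modified asymptotic-continuity property tailored to effect sequences $T_n$ with $\tr(T_n\varrho_n)\to1$ and $\Delta(T_n)=\eins/M_n$, or a smooth-entropy-style AEP of the form $\lim n^{-1}C^\epsilon(\varrho_n^\alpha/\tr\varrho_n^\alpha)=C_f(\rho)$ --- none of which it is able to establish.
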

We make some progress towards proving this conjecture. The proof sketch proceeds very similarly to the approximate single-shot case, with the approximation threshold dictated by $n$-dependent parameters associated with asymptotic equipartition. Essentially, we show that with increasing $n$ the task gets closer to the idealized maximal instance of Theorem \ref{itfinmax}\m a formalization of the observations we made in Section \ref{secclue}. But unfortunately, some subtleties of asymptotic typicality pose obstacles in completing our proof. Nevertheless, in the course of our attempts, we show that the target effects in asymptotically maximal distillation saturate certain scaling laws enforced by the generalized quantum Stein's lemma \cite{hayashi2024generalized,lami2024solution}\m a lemma with far-reaching consequences in quantum statistics and resource theories.

So far, we showed or conjectured the \emph{necessity} of a certain coherent measurement cost for distillation. In the asymptotic limit, we conjecture that the cost scaling in Conjecture \ref{ipan} is also \emph{sufficient}.
\begin{iconjm}\label{ipas}
    For any $\rho$, there exists a sequence $\cE_n$ of maximally-distilling \emph{IO} channels with all but an asymptotically-vanishing fraction of measurements \emph{individually} attaining the bound of Conjecture \ref{ipan}, as well as attaining it on average.
\end{iconjm}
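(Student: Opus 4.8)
The plan is to establish sufficiency constructively: for each $\rho$ I would build an explicit sequence $\cE_n$ of IO channels, obtained by re-engineering the Winter--Yang maximal-distillation protocol \cite{WY16} so that its coherence-detecting components have rank $2^{n[\ell(\rho)+o(1)]}$ rather than more. \textbf{Step 1 (reduction to an idealized input).} By the asymptotic equipartition properties of Appendix \ref{apptyp}, there is an incoherent projector $\Pi_n$ onto a subspace $\cT_n$ that is typical both for the support of $\varrho_n$ (of dimension $S_0:=2^{n[S(\rho)+o(1)]}$) and for its diagonal ($A_n:=2^{n[S(\Delta\rho)+o(1)]}$ incoherent strings), such that the renormalized restriction $\widehat\varrho_n$ is within trace distance $o(1)$ of $\varrho_n$ and is itself $o(1)$-close to a state $\tau^{(n)}=\Pi^{(n)}/S_0$ ($\Pi^{(n)}$ a rank-$S_0$ projector) with flat diagonal $\Delta(\tau^{(n)})=\eins_{\cT_n}/A_n$; moreover $A_n=S_0M_n$ with $\log_2M_n=n[C_r(\rho)-o(1)]$. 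Let $\cE_n$ first apply $\{\Pi_n,\eins-\Pi_n\}$: the second outcome (probability $o(1)$) triggers an incoherent abort, and on the first we run Step 2. By the asymptotic continuity of $C_f$ \cite{WY16} that underlies Theorem \ref{itfina}, it then suffices to treat the idealized input $\tau^{(n)}$ on $\cT_n$, absorbing all $o(1)$ errors into the rate and the output fidelity.

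\textbf{Step 2 (the coherent layer).} On $\cT_n$, fix a pure-state decomposition $\tau^{(n)}=\sum_x q_x\phi_x$ that is near-optimal for $C_f$; taking the product extension of an optimal single-copy decomposition of $\rho$ and using that $C_f(\rho)$ is the asymptotic coherence cost (hence, with subadditivity, additive under tensor powers), each $q$-typical $\phi_x$ is, up to $o(1)$ error, a flat superposition over a set $G_x\subseteq\cT_n$ of incoherent strings with $|G_x|=2^{n[C_f(\rho)+o(1)]}=:K_n$. Then choose a balanced partition $\cT_n=\bigsqcup_{m=1}^{M_n}B_m$ into blocks of size $S_0$ such that every typical $G_x$ meets every block $B_m$ in $(1\pm o(1))\,K_n/M_n=2^{n[\ell(\rho)+o(1)]}=:D_n$ strings; a uniformly random balanced partition works with probability $1-o(1)$, since the relevant hypergeometric deviations are suppressed as $\exp[-\Omega(D_n)]$, which beats the $2^{O(n)}$ union bound over typical $x$ and over $m$ whenever $\ell(\rho)>0$ (the case $\ell(\rho)=0$ being SIO-distillable and trivial). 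Identify $\cT_n\cong[M_n]\times[S_0]$ via the partition; for each block $m$ pick an orthonormal basis $\{\ket{\phi_{m,u}}\}_{u\in[S_0]}$ of $\mathcal H^{B_m}$ with each $\ket{\phi_{m,u}}$ of coherence rank $\le D_n$; and let the coherent layer be the IO instrument with Kraus operators $K_u=\sum_{m=1}^{M_n}\ket{m}^{\rM}\ket{u}^{\rX}\bra{\phi_{m,u}}$ (the incoherent flag $\rX$ may be discarded afterwards). Each $K_u$ is incoherent, and at most $D_n$ incoherent strings map to any fixed output under it, so it detects coherence over at most $D_n=2^{n[\ell(\rho)+o(1)]}$ levels; since $\sum_uK_u^\dagger K_u=\Pi_n$ and the abort branch carries input-weight $o(1)$, both the single most coherent measurement and the input-induced average of $\log_2(\mathrm{rank})$ are $\le n[\ell(\rho)+o(1)]$, matching the lower bound of Conjecture \ref{ipan}. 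Finally, for the channel to \emph{distill}, expand a typical $\phi_x$ in the basis $\{\ket{m}\otimes\ket{\phi_{m,u}}\}$: the balanced-block design forces the magnitudes of the $\ket{\phi_{m,u}}$-components to be $M_n^{-1/2}(1\pm o(1))$, uniform in $m$, so that---provided the $\ket{\phi_{m,u}}$ are also chosen to make those components' \emph{phases} uniform in $m$---one gets $K_u\phi_x\propto\Psi_{M_n}\otimes\ket{u}$ up to $o(1)$; averaging over $x$ and over the typical projection gives $F[\cE_n(\varrho_n),\Psi_{M_n}]\ge1-o(1)$.

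\textbf{Main obstacle.} Everything reduces to the parenthetical proviso in Step 2: producing a \emph{single} family of block bases $\{\ket{\phi_{m,u}}\}$---not permitted to depend on the decomposition label $x$---that is simultaneously rank-$\le D_n$ and phase-coherent for \emph{every} typical component $\phi_x$. The design fixes the magnitudes, but the relative phases of a generic optimal-$C_f$ component across blocks are rigidly tied together by $\rho$ and need not align against any fixed basis; indeed when $2\ell(\rho)<S(\rho)$ a fixed rank-$D_n$ support meets a typical slice $G_x\cap B_m$ in only $0$ or $1$ strings, so any block-diagonal choice of $\{\ket{\phi_{m,u}}\}$ fails outright. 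Resolving this amounts to showing that the lower bound of Theorem \ref{itfinmax} is \emph{tight}---that the target effect $\propto\tau_\rho$ is actually realizable by a channel whose coherent Kraus operators attain $M^{-1}r_C(\tau_\rho)$---and then that this tight single-shot construction survives the passage to typical subspaces. I expect the right move is a randomized (pretty-good-measurement, or random-code) choice of the $\{\ket{\phi_{m,u}}\}$, of the same flavour as the hypothesis-testing measurements that saturate the error-exponent scaling in the generalized quantum Stein's lemma \cite{hayashi2024generalized,lami2024solution}---precisely the measurements shown in Section \ref{secas} to carry the coherence content at issue here. Pushing that argument through while controlling the second-order typicality fluctuations already flagged after Conjecture \ref{ipan} is the part I expect to be genuinely hard.
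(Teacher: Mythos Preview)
Both your proposal and the paper's own attempt (Appendix~\ref{psc2}) are proof \emph{sketches} for an open conjecture; neither is complete. Your Step~1 contains a subtle overreach: the typical-subspace restriction $\widehat\varrho_n$ is \emph{not} $o(1)$-close (in trace or Bures distance) to the flat state $\tau^{(n)}=\Pi^{(n)}/S_0$---typical eigenvalues span a multiplicative range $2^{\Theta(\sqrt n)}$, so the spectrum is flat only in a logarithmic sense. The paper runs into exactly this obstruction in its sketch for Conjecture~\ref{ipan} (see the discussion after \eqref{frrv}) and leaves it unresolved; your reduction to the idealized $\tau^{(n)}$ inherits the same problem and cannot simply invoke asymptotic continuity of $C_f$.

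More substantively, your Step~2 diverges structurally from the paper's construction. The paper indexes its Kraus operators by the \emph{decomposition label}: from an optimal product decomposition $\varrho_n\approx\sum_{\vect j}Q_n(\vect j)\Phi_{\vect j}$ (typicalized in $\vect j$, the eigenbasis, and the alphabet, then truncated in $m$), it sets $K_{\vect j}:=\sqrt{M_n}\sum_m\kbra{m}{\Phi_{\vect j}}\eins_{\cI_m}$ for Winter--Yang's fixed partition $\{\cI_m\}$. Each $K_{\vect j}$ is tailored to its own component, so your phase-alignment obstacle never arises; the paper's difficulty is instead ensuring $\sum_{\vect j}K_{\vect j}^\dagger K_{\vect j}=M_n\sum_m\eins_{\cI_m}T_n\eins_{\cI_m}\le\eins$ (Observation~\ref{consmax}, condition~\ref{MTM}), together with loss of the tensor-product structure needed for near-uniformity after projecting onto the ``good'' subspace $\cV$. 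Your construction instead indexes Kraus operators by a \emph{fixed block-basis label} $u$, which makes $\sum_uK_u^\dagger K_u=\Pi_n$ automatic---a genuine advantage---but then forces a single family $\{\ket{\phi_{m,u}}\}$ to serve \emph{every} typical $\phi_x$ simultaneously, which is precisely the obstacle you flag. In effect, the paper trades your phase-coherence problem for a normalization problem; its route seems the more tractable of the two, since each Kraus operator need only handle one component and the residual normalization issue might yield to the same random-partition symmetry you invoke.
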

This is motivated by certain special properties of the IO distillation protocol constructed by Winter and Yang \cite{WY16}. We also make partial progress towards proving this conjecture, including a general recipe for constructing maximally-distilling IO channels (Appendix \ref{psc2}, Observation \ref{consmax}). But overall, it turns out to be rather more involved than the other direction, requiring putting together several pieces:
\begin{enumerate}
    \item   A construction for a decomposition of $\varrho_n$ that
            \begin{itemize}
                \item   asymptotically approaches the defining bound \eqref{cof} of the coherence of formation and
                \item   possesses some symmetries (thanks to $\varrho_n$'s asymptotic typicality properties), whereby the coherence of each component in the decomposition approaches the overall average value (i.e.,\ $\varrho_n$'s coherence of formation).
            \end{itemize} 
    \item   A sequence of maximally-distilling (candidate) IO subchannels $\cF_n$ based on
            \begin{itemize}
                \item   filtering the above decomposition to further ``typicalize'' or ``flatten'' the coherence in each pure component,
                \item   truncating the remaining components to get rid of parts more coherent than a threshold that asymptotically scales favourably, and
                \item   adapting from Winter and Yang's IO distillation protocol \cite{WY16} a certain ``pooling'' of the classical labels to construct potential maximally-distilling IO channels by connecting their target effects to approximate convex decompositions of $\varrho_n$.
            \end{itemize}
            By virtue of the above truncation, the resulting $\cF_n$ use measurement coherence bounded by the claimed scaling.
    \item   Showing that, despite the above filtering and truncation, the IO subchannel sequence $\cF_n$ asymptotically converges towards trace preservation, so that the maximal distillate it produces is asymptotically deterministic.
    \item   Finally, showing that the (asymptotically negligible but nonzero) trace deficit remaining can be fulfilled by completing each $\cF_n$ to a full channel $\cE_n=\cF_n+\cG_n$ using a subchannel $\cG_n$ that is also IO.
\end{enumerate}
We encounter hurdles related to those of Conjecture \ref{ipan}, but also other, unrelated ones. Due to the added difficulties, we are inclined to place less confidence in Conjecture \ref{ipas}.

In Section \ref{secram} we build on these conjectures to speculate on the requisite coherent measurement cost of \emph{non-}maximal distillation in the asymptotic limit, and on a possible tradeoff between the amount of coherent measurement used and the distilled yield.

Apart from the results summarized above, we also make some progress in understanding cases where the target state is a nonuniform superposition (Section \ref{secnonm}), a semidefinite programming (SDP)\n based approach to estimating the coherent measurement cost (Appendix \ref{appsdp}), the behaviour of certain SIO monotones under constrained IO (Appendix \ref{secmuk}), and connections between coherence distillation and certain linear-algebraic structures that we call \emph{decoupling schemes} (Appendix \ref{secdec}).

With this summary, we are now ready to present our work in full detail.

\section{Technical preliminaries}
Throughout this paper, we will use upright Roman symbols (e.g.,\ $\rA$) for system labels; these labels will be rendered as superscripts where we deem them necessary, and omitted altogether where clear from context. For a system $\rA$, $\cH^\rA$ will denote its associated Hilbert space and $A$ the dimensionality thereof. We will use $a$ for the classical symbols labelling $\rA$'s incoherent basis vectors $\ket a$, and $\cA$ for the collection thereof (i.e.,\ $\rA$'s ``classical alphabet''). For a generic space $\cV\subseteq\cH^\rA$, we will denote the space of its linear automorphisms by $\cL\left(\cV\right)$ and the projector onto $\cV$ by $\eins_\cV$, but use the abbreviation $\eins^\rA\equiv\eins_{\cH^\rA}$ in the case of the entire Hilbert space associated with some system, $\eins_\cI$ for the projector onto $\mathrm{span}\left\{\ket a:\;a\in\cI\subseteq\cA\right\}$, and $\eins_\rho\equiv\eins_{\mathrm{supp}\rho}$ in the case of the support of a positive-semidefinite operator. All vector spaces mentioned will be implicitly assumed to be finite-dimensional. We will use the term ``basis'' to mean specifically ``orthonormal basis''. We will use the established notation $S(\cdot)$ for the von Neumann entropy, $H(\cdot)$ for the Shannon entropy, and $S(\cdot\|\cdot)$ for the Umegaki quantum relative entropy. Though we will also name some variables $S$, this latter use of the symbol will be clear from context, with no scope for confusion. We will abusively express von Neumann (Shannon) entropies with their arguments \emph{density operators} (\emph{distributions}) instead of the systems (random variables) distributed thereby; e.g.,\ $H(\vect p)\equiv H(X)_{\left(p_x\right)_x}$. For a Hermitian operator $T$, we will use the notation $\nrm T_p$ (with $p\in\bbR$) for its $p$\n Schatten norm, defined as the $\ell^p$ norm of the vector of its singular values. We will usually put the arguments of a map / function in round parentheses, e.g.\ $f(x)$, but sometimes use square brackets in the context of nested parenthesization, e.g.\ $\left(x,f[x]\right)$, generally preferring to alternate between square and round in a nested sequence.

In the foregoing discussion, we used the term ``(sub)channel'' a few times. A channel is a quantum operation, or state transformation, induced locally on a quantum system by a unitary interaction with an auxiliary system with which it is initially uncorrelated. Mathematically, a channel mapping states of some system $\rA$ to states of $\rM$\footnote{We will use $\rM$ for the output system since we focus on distillation, a context where denoting the output $\rM$ seems common practice.} can be identified with a completely-positive (CP) trace-preserving (TP) map $\cE:\,\cL\left(\cH^\rA\right)\to\cL\left(\cH^\rM\right)$; we will often identify the input and output systems with the shorthand $\cE^{\rA\to\rM}$, and further abbreviate as $\cE^{\rA}\equiv\cE^{\rA\to\rA}$ when the systems are identical. A subchannel is a CP trace-\emph{nonincreasing} map, and corresponds with the action of a quantum operation conditioned on one of several possible outcomes, each occurring with some (initial state\n dependent) probability; a channel is a special case with only one outcome, occurring deterministically.

Any subchannel can be specified operationally through a so-called \emph{dilation}: a (non-unique) specification of an auxiliary system, a unitary interaction, and an auxiliary measurement, that collectively implement it. Alternately, it can be described somewhat more abstractly through a set (also non-unique) of operators called \emph{Kraus operators}. In the remainder, we will assume the reader has a basic background of these concepts;\ for a detailed introduction, see \cite{wilde2013quantum}. For our purposes, we will need the following definitions for classes of quantum operations. We will primarily have IO in consideration, but most of our results will apply to the largest class of free operations, MIO.
\begin{definition}[Coherence\n non-creating channel]
    $\cE^{\rA\to\rM}$ is a \emph{coherence\n non-creating channel} if it maps all incoherent inputs to incoherent outputs; that is, for all $a\in\cA$,
    \begin{equation}
        \cE^{\rA\to\rM}\left(\proj{a}\right)=\Delta^\rM\circ\cE^{\rA\to\rM}\left(\proj{a}\right).
    \end{equation}
    Since the class of such operations has been established in the literature as the \emph{maximal (class of) incoherent operations} (MIO), we will also use the abbreviation MIO.
\end{definition}
\begin{remm}
    The operations in the MIO class have occasionally been called ``maximally incoherent operations''\m a maximally irresponsible misnomer, considering that MIO are arguably \emph{minimally} incoherent! To avoid such awkwardness, we take the liberty to use the descriptor ``coherence\n non-creating''. Note also that in this paper we use the abbreviation MIO to refer exclusively to \emph{channels} (i.e.,\ trace-preserving operations).
\end{remm}
\begin{definition}[Incoherent operation]
    A (sub) channel $\cE^{\rA\to\rM}$ is an \emph{incoherent operation} (IO) if it admits a Kraus operator decomposition $\cE(\cdot)=\sum\limits_{c\in\cC}K_c(\cdot) K_c^\dagger$ such that $\forall~c\in\cC$ and $a\in\cA$,
    \begin{equation}\label{condIO}
        K_c\ket{a}\propto\ket{m\equiv g_c(a)},
    \end{equation}
    where $g_c:\,\cA\to\cM$ is a function. 
\end{definition}
\begin{remm}\label{iodil}
    For a generic channel $\cE^{\rA\to\rM}(\cdot)\equiv\sum_cK_c(\cdot)K_c^\dagger$, each of its Kraus operators can be expressed (in the incoherent basis) in terms of its rows $\ket{w_{c,m}}^\rA$ as
    \begin{equation}\label{gench}
        K_c=\sum_{m\in\cM}\ket m^\rM\bra{w_{c,m}}^\rA.
    \end{equation}
    In particular, IO Kraus operators have rows of the form
    \begin{equation}\label{genio}
        \ket{w_{c,m}}=\sum_{a\in g_c^{-1}(m)}\xi_{ca}\ket a.
    \end{equation}
    Notice that, by virtue of the IO condition \eqref{condIO}, the $\ket{w_{c,m}}$ for any fixed $c$ and distinct $m$ values involve disjoint subsets of the incoherent basis:
    \begin{equation}
        \Delta\left(\proj{w_{c,m}}\right)\,\perp\,\Delta\left(\proj{w_{c,m'}}\right).
    \end{equation}
    This also implies, of course, that $\bket{w_{c,m}}{w_{c,m'}}\propto\delta_{mm'}$.
\end{remm}
\begin{definition}[Strictly incoherent operation]
    An IO $\cE^{\rA\to\rM}$ is a \emph{strictly incoherent operation} (SIO) if it admits a Kraus operator decomposition $\cE(\cdot)=\sum_{c\in\cC}K_c(\cdot) K_c^\dagger$ that, in addition to satisfying \eqref{condIO}, has every $g_c$ invertible on its image. In other words, $\forall~c\in\cC$ and $m\in g_c\left(\cA\right)$,
    \begin{equation}
        K_c^\dagger\ket{m}\propto\ket{a\equiv g_c^{-1}(m)},
    \end{equation}
    where $g_c^{-1}(m)$ is unique and well-defined.
\end{definition}
\begin{remm}
    Recall that we used $\Delta(\cdot)$ earlier for the diagonal part of the argument in the incoherent basis. In fact, this mapping is a channel\m the dephasing channel\m realizable by SIO: $\Delta(\cdot)=\sum_a\proj a(\cdot)\proj a$ (where $a$ runs over the incoherent labels). Under our assumption that the incoherent basis of a composite system is the tensor product of the constituent systems' incoherent bases, the dephasing channel inherits this convenient multiplicativity: $\Delta^{\rA\rB}=\Delta^\rA\otimes\Delta^\rB$.
\end{remm}
Finally, the formal definition of the quantity that figures in our main results:
\begin{definition}[Irretrievable coherence]
    We define the \emph{irretrievable coherence} of a state $\rho$ as
\begin{equation}
    \ell\left(\rho\right):=C_f\left(\rho\right)-C_r\left(\rho\right),
\end{equation}
where $C_r(\rho):=\min\limits_{\sigma:\;\Delta[\sigma]=\sigma}S\left(\rho\|\sigma\right)=S\left[\rho\|\Delta(\rho)\right]=S\left[\Delta(\rho)\right]-S(\rho)$ is the \emph{relative entropy of coherence} and
\begin{equation}\label{cof2}
    C_f(\rho)=\min_{p_x\ge0;\;\sum_xp_x\phi_x=\rho}\sum_xp_xC_r\left(\phi_x\right)
\end{equation}
its convex-roof extension, the \emph{coherence of formation}.
\end{definition}
In addition to these coherence measures, we will also use the \emph{coherence rank} $r_C$, defined for pure states as $r_C(\psi)=\mathrm{rank}\Delta(\psi)$ and for mixed states as
\begin{equation}\label{crank}
    r_C(\rho)=\min_{p_x\ge0;\;\sum_xp_x\phi_x=\rho}\max_xr_C\left(\phi_x\right).
\end{equation}
Note that $\log_2r_C(\psi)\ge C_r(\psi)$ and therefore $\log_2r_C(\rho)\ge C_f(\rho)$ in general.

\section{The coherent measurement cost and target effects}\label{seccmte}
As we develop various technical tools and results below, we shall bear in mind our ultimate aim\m namely, to estimate the minimum measurement coherence required for performing certain tasks. To this end, let us first formalize this concept.

\subsection{Operationalizing the cost}\label{seccmc}
For a given channel in the form \eqref{gench} in terms of a Kraus operator decomposition, consider the vectors $\ket{w_{c,m}}^\rA$. These capture the coherently-measured input components mapped to different incoherent output labels $m$ (cf.\ the qutrit example in Section \ref{secclue}). Via a suitable Stinespring dilation, they can be operationally interpreted as some part of the requisite coherent measurement when the channel is implemented through (1) a scheme for preparing incoherent basis states $\ket m\otimes\ket c$ \emph{coherently conditioned} on a destructive measurement in some orthonormal basis extending $\left\{\ket{w_{c,m}}\right\}_{c,m}$ (i.e.,\ some $\left\{\ket{\tilde w_{c,m}^{\tilde\rA}}=\ket{w_{c,m}}\oplus\ket{v_{c,m}}\right\}_{c,m}$ with a suitable extension $\tilde\rA$ of $\rA$), followed by (2) an ancillary measurement of the index $c$, effectively decohering the output system relative to this index. Our deliberate choice to expand the dilation isometry in the $\ket m\otimes\ket c$ basis on its output side is in keeping with the resource-theoretic principle of disallowing the preparation of coherent states; of course, the isometry is ultimately just some coherent dynamical process, and its interpretation as ``coherently-conditioned incoherent-state preparation'' is meant only to capture the resource-theoretic spirit in operational terms.

Moving on, note that we can use different measures of coherence to quantify this coherent measurement resource. To keep our theorems clear, we will use the following:
\begin{definition}[Rank-based coherent measurement cost]\label{rcmc}
    We will say that a channel's implementation associated with Kraus operators $K_c$ of the form standardized in Remark \ref{iodil} \emph{involves coherent measurement over at least $r$ elements of the input's incoherent basis}, where
    \begin{equation}
        r:=\max_{c,m}r_C\left(\phi_{c,m}\right).
    \end{equation}
\end{definition}
This definition employs the coherence rank as the underlying coherence measure and, furthermore, applies to specific operator-sum decompositions of channels (rather than to the channels themselves). However, we note the following: (1) the technical lemmas underlying our theorems will provide \emph{measure-agnostic} constraints on the structure of the $\ket{w_{c,m}}^\rA$; (2) moreover, the constraints will hold for \emph{any} channel accomplishing the relevant distillation task and \emph{any} Kraus operator decomposition thereof.
For these reasons, we can meaningfully interpret the constraints as determining the coherent measurement cost of the given task, despite their being derived from Kraus operators instead of from operational descriptions such as Stinespring dilations.
    
While additional coherent measurement action may be required on auxiliary systems, our method accounts for all such action \emph{in the space of the input} $\rA$. On this note, we also assume that $\rA$ is just big enough to contain the incoherent basis elements occurring with nonzero amplitudes in the input under consideration. This brings about no loss in generality, since any channel action outside this subspace is irrelevant for matters concerning the coherent measurement cost of distilling from the given input (except in possibly inflating the cost superfluously).

\subsection{The target effect construction}\label{sectw}
The following construction will prove useful in characterizing the coherence of the vectors $\ket{w_{c,m}}$ and, thereby, the coherent measurement cost of the channel they implement.
\begin{definition}[Target effect]\label{defTE}
    Given a channel $\cE^{\rA\to\rM}$ and a ``target'' state $\alpha^\rM$, define the \emph{target effect}
    \begin{equation}
        T_\cE^\alpha:=\cE^\dagger\left(\alpha\right),
    \end{equation}
    where $\cE^\dagger$ is the \emph{Hilbert\n Schmidt adjoint map} of $\cE$. Consequently, for any operator $X\in\cL\left(\cH^\rA\right)$,
\begin{equation}\label{trkt}
    \tr\left(X T_\cE^\alpha\right)=\tr\left[\alpha\cE\left(X\right)\right].
\end{equation}
In particular, by choosing $X$ to be a density operator, we see that $T_\cE^\alpha$ is an effect\footnote{A positive-semidefinite Hermitian operator that induces probabilities on density operators is called a \emph{positive operator\n valued measure (POVM) effect}, or simply an effect.} whose expectation value under any given state quantifies the probability with which $\cE$ maps the state to the target\m hence its name. As an effect, it satisfies $0\le T_\cE^\alpha\le\eins^\rA$.
\end{definition}
\begin{remm}
    Hereafter, we will consider only \emph{pure} target states, which is the relevant case for distillation.
\end{remm}
\begin{obs}\label{TgivesF}
    Given a channel $\cE^{\rA\to\rM}$, a decomposition $\cE(\cdot)=\sum_cK_c(\cdot)K_c^\dagger$ with Kraus operators $K_c=\sum_m\ket m\bra{w_{c,m}}$, and a pure target $\ket\alpha=\sum_m\alpha_m\ket m$, define
    \begin{equation}
        \ket{w_c^\alpha}:=\sum_m\alpha_m\ket{w_{c,m}}
    \end{equation}
    for each $c$. Then, the associated target effect satisfies
    \begin{equation}
        T_\cE^\alpha=\sum_c\proj{w_c^\alpha},
    \end{equation}
    \emph{irrespective} of the Kraus operator decomposition chosen.
\end{obs}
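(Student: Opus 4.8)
The plan is a direct computation from the definition of the Hilbert--Schmidt adjoint, followed by a one-line observation for the decomposition-independence claim. First I would recall that if $\cE(\cdot)=\sum_cK_c(\cdot)K_c^\dagger$, then the Hilbert--Schmidt adjoint acts as $\cE^\dagger(\cdot)=\sum_cK_c^\dagger(\cdot)K_c$; this is immediate from $\tr\left[Y^\dagger\cE(X)\right]=\sum_c\tr\left[Y^\dagger K_cXK_c^\dagger\right]=\sum_c\tr\left[\left(K_c^\dagger YK_c\right)^\dagger X\right]$ together with the defining property \eqref{trkt}. Hence $T_\cE^\alpha=\cE^\dagger\left(\proj\alpha\right)=\sum_cK_c^\dagger\proj\alpha K_c$.

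Next I would transpose the row decomposition $K_c=\sum_m\ket m^\rM\bra{w_{c,m}}^\rA$ of Remark \ref{iodil} to get $K_c^\dagger=\sum_m\ket{w_{c,m}}^\rA\bra m^\rM$, and apply it to the target: since $\bket{m}{m'}=\delta_{mm'}$ for the incoherent basis, $K_c^\dagger\ket\alpha=\sum_m\ket{w_{c,m}}\bket m\alpha=\sum_m\alpha_m\ket{w_{c,m}}=\ket{w_c^\alpha}$ by the definition of $\ket{w_c^\alpha}$. Therefore $K_c^\dagger\proj\alpha K_c=\proj{w_c^\alpha}$, and summing over $c$ yields $T_\cE^\alpha=\sum_c\proj{w_c^\alpha}$, which is the claimed identity.

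For the decomposition-independence, I would simply note that the left-hand side $T_\cE^\alpha=\cE^\dagger(\alpha)$ is defined purely in terms of the channel $\cE$ and the target $\alpha$, with no reference to any Kraus representation; since the derivation above shows that $\sum_c\proj{w_c^\alpha}$ equals this same operator whichever Kraus decomposition $\{K_c\}$ (and hence which family $\{\ket{w_{c,m}}\}$) one starts from, the value $\sum_c\proj{w_c^\alpha}$ must coincide across all decompositions. (If a more self-contained argument were wanted, one could invoke the unitary-freedom theorem relating any two Kraus families $K'_d=\sum_cu_{dc}K_c$ and check directly that $\sum_d\proj{w_d'^\alpha}=\sum_d\sum_{c,c'}u_{dc}\overline{u_{dc'}}\kbra{w_c^\alpha}{w_{c'}^\alpha}=\sum_c\proj{w_c^\alpha}$ using $\sum_d\overline{u_{dc'}}u_{dc}=\delta_{cc'}$, but this is not needed.)

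There is no real obstacle here; the only point requiring a moment's care is getting the adjoint bookkeeping right, i.e.\ that $\cE^\dagger$ has Kraus operators $K_c^\dagger$ and that transposing $K_c=\sum_m\ket m\bra{w_{c,m}}$ gives $K_c^\dagger=\sum_m\ket{w_{c,m}}\bra m$, so that the action on $\ket\alpha$ picks out precisely the linear combination defining $\ket{w_c^\alpha}$.
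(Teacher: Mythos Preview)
Your proof is correct and follows essentially the same direct computation as the paper: both start from $T_\cE^\alpha=\sum_cK_c^\dagger\proj\alpha K_c$ and expand using $K_c^\dagger=\sum_m\ket{w_{c,m}}\bra m$, with the paper writing out the double sum over $m,m'$ explicitly while you equivalently compute $K_c^\dagger\ket\alpha=\ket{w_c^\alpha}$ first. Your added remark on decomposition-independence (that the left-hand side is defined channel-intrinsically) makes explicit what the paper leaves implicit.
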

\begin{proof}
    Using the definition of $K_c$ in terms of the $\ket{w_{c,m}}$,
    \begin{align}
        &T_\cE^\alpha=\cE^\dagger\left(\alpha\right)=\sum_cK_c^\dagger\proj\alpha K_c\nonumber\\
        &=\sum_{c,m,m'}\kbra{w_{c,m}}{m}\alpha_{m}\kbra{m}{m'}\alpha_{m'}^*\kbra{m'}{w_{c,m'}}\nonumber\\
        &=\sum_{c,m,m'}\alpha_{m}\kbra{w_{c,m}}{w_{c,m'}}\alpha_{m'}^*=\sum_c\proj{w_c^\alpha}.
    \end{align}
        
\end{proof}
In the IO case, as noted in Remark \ref{iodil}, the various $\ket{w_{c,m}}$ for a given $c$ involve disjoint incoherent subbases. As such, their coherence is fully reflected in that of the $\ket{w_c^\alpha}$, enabling us to use the latter to put lower bounds on various measures of the $\ket{w_{c,m}}$'s coherence. While this is no longer true for more general channels, lower bounds on coherence measures of the $\ket{w_c^\alpha}$ place \emph{some} constraints on the collective structure of the the $\ket{w_{c,m}}$. In particular, the \emph{coherence rank} of $\ket{w_{c}^\alpha}$ cannot be larger than the sum of those of the $\ket{w_{c,m}}$'s; this will allow us to put a lower bound on the average row coherence rank even under the more general MIO class of operations.

The $\ket{w_{c}^\alpha}$, in turn, constitute a convex decomposition of $T_\cE^\alpha$, whereby coherence quantifiers based on minimizing over all convex decompositions (e.g.,\ $C_f$) applied on the latter yield lower bounds on the values that corresponding pure-state coherence quantifiers (e.g.,\ $C_r$) take on the $\ket{w_c^\alpha}$.
\begin{obs}\label{tbound}
    For an MIO $\cE^{\rA\to\rM}$ and target $\ket\alpha$, the associated target effect $T_\cE^\alpha$ satisfies
        \begin{align}
            \bra aT_\cE^\alpha\ket a&=\tr\left[\cE\left(\proj a\right)\;\Delta\left(\alpha\right)\right]\nonumber\\
            &=\bra a\cE^\dagger\circ\Delta\left(\alpha\right)\ket a
        \end{align}
        $\forall a\in\cA$.
\end{obs}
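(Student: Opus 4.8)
The plan is to chain the defining property \eqref{trkt} of the target effect with the coherence\n non\n creating property of $\cE$ and the self\n adjointness of the dephasing map. First I would set $X=\proj a$ in \eqref{trkt} to get $\bra aT_\cE^\alpha\ket a=\tr\left(\proj a\,T_\cE^\alpha\right)=\tr\left[\alpha\,\cE\left(\proj a\right)\right]$. Since $\cE$ is MIO, the output $\cE\left(\proj a\right)$ is incoherent, so $\cE\left(\proj a\right)=\Delta^\rM\left[\cE\left(\proj a\right)\right]$, and substituting this gives $\tr\left\{\alpha\,\Delta\left[\cE\left(\proj a\right)\right]\right\}$.

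The one step that deserves a moment's care is moving $\Delta$ off $\cE\left(\proj a\right)$ and onto $\alpha$. This uses that $\Delta$, being the orthogonal projection onto the diagonal subalgebra in the Hilbert--Schmidt inner product, is its own Hilbert--Schmidt adjoint: $\tr\left[\Delta(X)\,Y\right]=\tr\left[X\,\Delta(Y)\right]$ for all $X,Y$, which one verifies immediately by expanding in the incoherent basis (where $\Delta$ just annihilates off\n diagonal entries). Applying this with $X=\cE\left(\proj a\right)$ and $Y=\alpha$ yields $\tr\left[\cE\left(\proj a\right)\,\Delta(\alpha)\right]$, the first claimed equality.

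For the second equality I would read \eqref{trkt} in reverse, now pairing $\proj a$ against the operator $\Delta(\alpha)$ in place of $\alpha$: $\tr\left[\cE\left(\proj a\right)\,\Delta(\alpha)\right]=\tr\left[\proj a\,\cE^\dagger\!\left(\Delta[\alpha]\right)\right]=\bra a\,\cE^\dagger\circ\Delta(\alpha)\,\ket a$. There is no real obstacle here; the observation is a bookkeeping identity that repackages ``$\cE$ creates no coherence'' as a constraint on the diagonal of $T_\cE^\alpha$, and it presumably serves as the entry point for the later structural lemmas that pin down $T_\cE^\alpha$ for distillation\n capable channels.
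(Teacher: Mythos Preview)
Your proof is correct and follows essentially the same chain of equalities as the paper's own proof: apply \eqref{trkt} with $X=\proj a$, insert the MIO property $\cE(\proj a)=\Delta\circ\cE(\proj a)$, shift $\Delta$ onto $\alpha$ by self-adjointness, and then undo the adjoint of $\cE$. Your explicit remark that $\Delta$ is its own Hilbert--Schmidt adjoint is the only step the paper leaves implicit.
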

\begin{proof}
    Since $\cE$ is an MIO, $\cE\left(\proj a\right)=\Delta\circ\cE\left(\proj a\right)$ for any incoherent basis element $\ket a^\rA$. Therefore,
    \begin{align}
        \bra aT_\cE^\alpha\ket a&=\tr\left[\proj a\;\cE^\dagger\left(\alpha\right)\right]\nonumber\\
        &=\tr\left[\cE\left(\proj a\right)\;\alpha\right]\nonumber\\
        &=\tr\left[\Delta\circ\cE\left(\proj a\right)\;\alpha\right]\nonumber\\
        &=\tr\left[\cE\left(\proj a\right)\;\Delta\left(\alpha\right)\right]\nonumber\\
        &=\tr\left[\proj a\;\cE^\dagger\circ\Delta\left(\alpha\right)\right].
    \end{align}
\end{proof}
\begin{remm}\label{remtrace}
    By summing over $a\in\cA$, we get $\tr\,T_\cE^\alpha=\tr\,\cE^\dagger\left[\Delta\left(\alpha\right)\right]$. But note that this trace condition holds for any channel $\cE$ that maps the identity to an incoherent output: $\cE\left(\eins\right)=\Delta\circ\cE\left(\eins\right)$. Our forthcoming results can all be adapted to apply to all channels with this property (e.g.,\ all unital channels) by loosening every application of Observation \ref{tbound} to one of this trace condition.
\end{remm}
\begin{remm}
    In all our subsequent use of this construction, the target $\ket\alpha$ will be clear from the context, and therefore we will use the simplified notation $\ket{w_c}$ and $T_\cE$.
\end{remm}

\section{Single-shot distillation}
We are now ready to apply the tool of target effects to the study of distillation. We first consider the task of distilling in the single-shot regime, i.e.\ from finite-sized inputs away from the asymptotic limit. Let us start with exact distillation: an MIO $\cE$, acting on an input $\rho^\rA$, is required to map it exactly to a desired target $\alpha^\rM=\proj\alpha$. Since the desired output is pure and the transformation is required to be exact, $\cE$ must in fact uniformly map the entire space $\cL(\cV)$, with $\cV\equiv\mathrm{supp}\rho\subseteq\cH^\rA$, to $\alpha$ (up to a scalar factor)\footnote{We will hereafter abbreviate this condition, abusively, as $\cE$ ``mapping $\cV$ to $\alpha$''.}; as such, the only relevant property of $\rho$ is the structure of its support $\cV$.
\begin{alem}\label{TWisV}
    If a channel $\cE^{\rA\to\rM}$ maps a subspace $\cV\subset\cH^\rA$ exactly to a target $\alpha$, then the target effect $T_\cE$ has the structure
    \begin{equation}\label{tblock}
        T_\cE=\eins_\cV+T_{\perp},
    \end{equation}
    where $T_{\perp}$ is supported entirely on $\cV_\perp\subset\cH^\rA$, the subspace complementary to $\cV$.
\end{alem}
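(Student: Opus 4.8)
The plan is to combine two ingredients: the defining trace identity \eqref{trkt} of the target effect, which will pin the compression of $T_\cE$ to $\cV$ down to exactly $\eins_\cV$; and the fact (Definition \ref{defTE}) that $T_\cE$ is an effect, $0\le T_\cE\le\eins^\rA$, which will then force the blocks of $T_\cE$ linking $\cV$ to $\cV_\perp$ to vanish.

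First I would make precise what ``$\cE$ maps $\cV$ exactly to $\alpha$'' means operator-theoretically. For every density operator $\rho$ supported on $\cV$ one has $\cE(\rho)=\alpha$; since such states span $\cL(\cV)$ (first over $\bbR$, then over $\bbC$) and $\cE$ is linear and trace-preserving, this promotes to $\cE(X)=\tr(X)\,\alpha$ for all $X\in\cL(\cV)$. Feeding this into \eqref{trkt} gives, for every $X\in\cL(\cV)$,
\begin{equation}
    \tr\!\left(X\,T_\cE\right)=\tr\!\left[\alpha\,\cE(X)\right]=\tr(X)\,\tr\!\left(\alpha^2\right)=\tr(X),
\end{equation}
where the last equality uses that $\alpha=\proj\alpha$ is a rank-one projector. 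Taking $X=\eins_\cV Y\eins_\cV$ for an arbitrary $Y\in\cL(\cH^\rA)$ turns this into $\tr\!\left(Y\,\eins_\cV T_\cE\eins_\cV\right)=\tr\!\left(Y\,\eins_\cV\right)$ for all $Y$, hence $\eins_\cV T_\cE\eins_\cV=\eins_\cV$.

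The remaining and substantive step is to show that $T_\cE$ is block-diagonal with respect to $\cH^\rA=\cV\oplus\cV_\perp$, i.e.\ that $\eins_{\cV_\perp}T_\cE\,\eins_\cV=0$. For this I would invoke positivity: $\eins^\rA-T_\cE\ge0$, and by the previous paragraph its compression to $\cV$ vanishes, $\bra v\!\left(\eins^\rA-T_\cE\right)\!\ket v=0$ for every $\ket v\in\cV$. For a positive semidefinite operator, a vector on which the quadratic form vanishes lies in the kernel; hence $\left(\eins^\rA-T_\cE\right)\ket v=0$ for all $\ket v\in\cV$, which is exactly the statement that the off-diagonal block $\eins_{\cV_\perp}T_\cE\,\eins_\cV$ is zero. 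Consequently $T_\cE=\eins_\cV+T_\perp$ with $T_\perp:=\eins_{\cV_\perp}T_\cE\,\eins_{\cV_\perp}$ supported on $\cV_\perp$ (and in fact $0\le T_\perp\le\eins_{\cV_\perp}$, since $T_\cE$ is an effect). The only point needing a line of justification is the ``vanishing diagonal block forces vanishing off-diagonal block'' property of positive semidefinite operators; the rest is bookkeeping, and I expect this elementary positivity observation, rather than any computation, to be the crux.
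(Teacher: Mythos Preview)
Your proof is correct and follows essentially the same approach as the paper's: first establish $\eins_\cV T_\cE\eins_\cV=\eins_\cV$ from the defining trace identity, then use $0\le T_\cE\le\eins^\rA$ to conclude that the off-diagonal block $\eins_{\cV_\perp}T_\cE\eins_\cV$ vanishes. Your version spells out more explicitly the linear-algebra facts the paper leaves implicit (the extension of $\cE(\rho)=\alpha$ to all of $\cL(\cV)$, and the ``vanishing diagonal block forces vanishing off-diagonal block'' property of positive semidefinite operators), but the argument is the same.
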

\begin{proof}
    Since $\cE$ is required to map all of $\cV$ to the pure state $\alpha$, Definition \ref{defTE} entails $\bra vT_\cE\ket v=\bkt v$ for all $\ket v\in\cV$, whereby $\eins_\cV T_\cE\eins_\cV=\eins_\cV$. Furthermore, $0\le T_\cE\le\eins^\rA$ (also noted in Definition \ref{defTE}) then implies $\eins_\cV T_\cE\eins_{\cV_\perp}=0$.
\end{proof}
All our main results will follow by applying the above simple result (or its variants) to progressively more complex cases of distillation.

\subsection{Maximal distillation}\label{secmax}
Let us start with the ``idealized maximal distillation'' case motivated in Section \ref{secclue}.
\begin{alem}\label{lwcu}
    If an MIO channel $\cE^{\rA\to\rM}$ maps an $S$-dimensional subspace $\cV\subset\cH^\rA$ exactly to the $M$-fold standard coherent state $\Psi_M$ with $S=A/M$ (the divisibility being assumed), then
    \begin{equation}\label{iwcwcu}
        T_\cE=\eins_\cV.
    \end{equation}
\end{alem}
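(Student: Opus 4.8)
The plan is to build directly on Lemma~\ref{TWisV}: since $\cE$ maps the $S$-dimensional subspace $\cV$ exactly to the pure state $\Psi_M$, that lemma already gives $T_\cE = \eins_\cV + T_\perp$ with $T_\perp \ge 0$ supported on $\cV_\perp$. So the whole content of the present lemma collapses to showing $T_\perp = 0$, and because $T_\perp$ is positive semidefinite it is enough to show $\tr T_\perp = 0$, i.e.\ $\tr T_\cE = S$.

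The only computation needed is a trace count. By Remark~\ref{remtrace}, $\tr T_\cE = \tr\cE^\dagger[\Delta(\Psi_M)]$. Since $\Psi_M$ is the uniform superposition over $M$ incoherent basis vectors of $\rM$, its diagonal part is the maximally mixed state $\eins^\rM/M$, so $\tr T_\cE = \frac1M\tr\cE^\dagger(\eins^\rM)$. The adjoint of a trace-preserving map is unital, $\cE^\dagger(\eins^\rM) = \eins^\rA$, hence $\tr T_\cE = A/M$. Equivalently, one can run Observation~\ref{tbound} entrywise: for every $a\in\cA$, $\bra aT_\cE\ket a = \bra a\cE^\dagger(\eins^\rM/M)\ket a = \frac1M\tr\cE(\proj a) = \frac1M$. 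Now invoke the hypothesis $S = A/M$: this yields $\tr T_\cE = A/M = S = \tr\eins_\cV$, so $\tr T_\perp = \tr T_\cE - \tr\eins_\cV = 0$, and positivity forces $T_\perp = 0$, i.e.\ $T_\cE = \eins_\cV$.

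There is essentially no obstacle — the argument is a two-line trace balance, and it is precisely the ``perfectly matched'' condition $S = A/M$ that makes the inequality from Lemma~\ref{TWisV} saturate. The one point worth a remark is the size of the output register: if $\cH^\rM$ is taken strictly larger than $M$ (so $\Delta(\Psi_M) = \eins_{[M]}/M$ rather than $\eins^\rM/M$), then $\bra aT_\cE\ket a = \frac1M\tr[\cE(\proj a)\eins_{[M]}] \le \frac1M$ because $\tr\cE(\proj a) = 1$, giving $\tr T_\cE \le A/M = S$; combined with $\tr T_\cE \ge \tr\eins_\cV = S$ from Lemma~\ref{TWisV}, the conclusion is unchanged. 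This lemma then serves as the template for the maximal, exact case: noting that $\eins_\cV = S\,\tau_\rho$ with $\tau_\rho = \eins_\rho/S$, Theorem~\ref{itfinmax} will follow by converting the identity $T_\cE \propto \tau_\rho$, together with Observations~\ref{TgivesF} and~\ref{tbound}, into coherence-rank and coherence-of-formation lower bounds on the rows $\ket{w_c}$.
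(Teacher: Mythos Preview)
Your proof is correct and follows essentially the same route as the paper: reduce via Lemma~\ref{TWisV} to showing $\tr T_\cE=S$, then use Observation~\ref{tbound}/Remark~\ref{remtrace} together with unitality of $\cE^\dagger$ to compute $\tr T_\cE=A/M=S$. Your sandwich treatment of the case $\dim\cH^\rM>M$ (bounding $\tr T_\cE\le A/M$ from above and $\ge S$ from Lemma~\ref{TWisV}) is a slight variation on the paper's argument that the image of $\cE$ is supported on the $M$-dimensional span of $\ket{\Psi_M}$'s incoherent components, but the content is the same.
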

\begin{proof}
    By virtue of Lemma \ref{TWisV}, it remains only to show that $T_\perp=0$. Applying Observation \ref{tbound} to $\ket\alpha\equiv\ket{\Psi_M}$,
    \begin{equation}
        \tr\,T_\cE=\tr\,\cE^\dagger\left[\Delta\left(\Psi_M\right)\right]=\tr\,\cE^\dagger\left(\frac{\eins^\rM}M\right).
    \end{equation}
    Since $\rA$ contains no $\cV$-extraneous incoherent labels (see Remark \ref{iodil}), and since all of $\cV$ must be mapped to $\Psi_M$, the image of $\cE$ is supported on the $M$-dimensional space spanned by the incoherent components of $\ket{\Psi_M}$. The TP property of $\cE$ implies that $\cE^\dagger$ is unital on this image: $\cE^\dagger\left(\eins^\rM\right)=\eins^\rA$. Therefore,
    \begin{equation}
        \tr\,T_\cE=\tr\,\frac{\eins^\rA}M=\frac AM=S.
    \end{equation}
    Thus, $\tr\,T_\perp=0$. Since $T_\cE\ge0$, this implies $T_\perp=0$.
\end{proof}
By applying coherence measures on Lemma \ref{lwcu}, we have the following precursor to our first main result.
\begin{prop}\label{pfinu}
    Let $\cE^{\rA\to\rM}$, with Kraus operators $K_c:=\sum_{m\in\cM}\ket m^\rM\bra{w_{c,m}}^\rA$, be an MIO channel that distills $\Psi_M$ from a subspace $\cV\subseteq\cH^\rA$ of dimensionality $S=A/M$. Denoting $t_{c,m}:=\bkt{w_{c,m}}$ and $t_c:=M^{-1}\sum_mt_{c,m}$, define the normalized states $\ket{\phi_{c,m}}:=t_{c,m}^{-1/2}\ket{w_{c,m}}$, $\ket{\phi_{c}}:=\left(Mt_c\right)^{-1/2}\sum_m\ket{w_{c,m}}=\sum_m\sqrt{\frac{t_{c,m}}{Mt_c}}\ket{\phi_{c,m}}$, and $\tau_\cV:=\eins_\cV/S$; note that $\left(t_{c,m}/\left[Mt_c\right]\right)_{m}$ and $\left(t_c/S\right)_c$ are normalized distributions. Then, the following must hold:
    \begin{enumerate}
        \item\label{pfp1} $\sum\limits_{c}\frac{t_c}SC_r\left(\phi_{c}\right)\ge C_f\left(\tau_\cV\right)$;
        \item\label{pfp2} $\max\limits_{c}r_C\left(\phi_{c}\right)\ge r_C\left(\tau_\cV\right)$;
        \item\label{pfp3} $\sum\limits_mr_C\left(\phi_{c,m}\right)\ge r_C\left(\phi_{c}\right)$ for all $c$;
        \item\label{pfp4} $\log_2M\le C_r\left(\tau_\cV\right)$.
    \end{enumerate}
    Consequently,
    \begin{widetext}
        \begin{align}\label{cftauu}
            \max\limits_{c,m}r_C\left(\phi_{c,m}\right)\ge\frac{\max\limits_c\sum\limits_mr_C\left(\phi_{c,m}\right)}M&\ge\frac{\max\limits_cr_C\left(\phi_{c}\right)}M\nonumber\\
            &\ge\left\{\begin{array}{c}
            \frac{r_C\left(\tau_\cV\right)}M\\
            \left\{\begin{array}{c}
            \frac{2^{\sum\limits_c\frac{t_c}S\log_2r_C\left(\phi_{c}\right)}}M\\
            \frac{2^{\max\limits_cC_r\left(\phi_{c}\right)}}M
            \end{array}\right\}\ge\frac{2^{\sum\limits_c\frac{t_c}SC_r\left(\phi_{c}\right)}}M
            \end{array}\right\}\ge\frac{2^{C_f\left(\tau_\cV\right)}}M\ge2^{\ell\left(\tau_\cV\right)}.
        \end{align}
    \end{widetext}
    Furthermore, if some $K_c$ is also IO, then
    \begin{equation}\label{crphi}
        \log_2M+\sum_{m\in\cM}\frac{t_{c,m}}{Mt_c}C_r\left(\phi_{c,m}\right)\ge C_r\left(\phi_{c}\right),
    \end{equation}
    which adds more detail to the inequality chains in \eqref{cftauu}.
\end{prop}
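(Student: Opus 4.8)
The plan is to read essentially everything off Lemma~\ref{lwcu}. By that lemma the hypotheses force $T_\cE=\eins_\cV$, and by Observation~\ref{TgivesF} applied to $\ket\alpha=\ket{\Psi_M}$ (so that $\ket{w_c^\alpha}=M^{-1/2}\sum_m\ket{w_{c,m}}$) the same operator decomposes as $T_\cE=\sum_c\proj{w_c^\alpha}$. From the definitions of $t_c$ and $\ket{\phi_c}$ one has the bare algebraic identity $\proj{w_c^\alpha}=t_c\proj{\phi_c}$, while summing $t_c=M^{-1}\sum_m t_{c,m}$ over $c$, using $\sum_{c,m}\proj{w_{c,m}}=\eins^\rA$ (trace preservation) and $A=MS$, gives $\sum_c t_c=S$. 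Hence
\begin{equation}
    \tau_\cV=\frac{\eins_\cV}{S}=\frac1S\sum_c\proj{w_c^\alpha}=\sum_c\frac{t_c}{S}\,\proj{\phi_c},
\end{equation}
so $\{(t_c/S,\phi_c)\}_c$ is a convex decomposition of $\tau_\cV$ into (rays of) pure states. This is the single structural fact driving items~\ref{pfp1}--\ref{pfp3}: item~\ref{pfp1} is the variational definition~\eqref{cof2} of $C_f$ evaluated on this decomposition; item~\ref{pfp2} is the variational definition~\eqref{crank} of the mixed-state coherence rank (for \emph{any} decomposition, $\max_c r_C(\phi_c)\ge r_C(\tau_\cV)$); and item~\ref{pfp3} is subadditivity of the coherence rank under superposition, since $\ket{\phi_c}\propto\sum_m\sqrt{t_{c,m}}\,\ket{\phi_{c,m}}$ is supported on a subset of the union of the incoherent supports of the $\ket{\phi_{c,m}}$. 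Item~\ref{pfp4} is separate and cheap: monotonicity of the relative entropy of coherence under coherence\n non-creating channels applied to $\cE(\tau_\cV)=\Psi_M$ gives $C_r(\tau_\cV)\ge C_r(\Psi_M)=\log_2M$.

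Given items~\ref{pfp1}--\ref{pfp4}, the displayed chain~\eqref{cftauu} is pure bookkeeping. The first two inequalities use only ``a maximum over $M$ terms is at least their average'' and item~\ref{pfp3}; the branch to $r_C(\tau_\cV)/M$ is item~\ref{pfp2}; the branch through $2^{\sum_c(t_c/S)\log_2 r_C(\phi_c)}/M$ and $2^{\max_c C_r(\phi_c)}/M$ uses ``a maximum dominates a weighted average'' applied after taking $\log_2$, together with the preliminary inequality $\log_2 r_C(\psi)\ge C_r(\psi)$ for pure states; those two in turn dominate $2^{\sum_c(t_c/S)C_r(\phi_c)}/M$, which by item~\ref{pfp1} dominates $2^{C_f(\tau_\cV)}/M$; and finally $2^{C_f(\tau_\cV)}/M\ge 2^{C_f(\tau_\cV)-C_r(\tau_\cV)}=2^{\ell(\tau_\cV)}$ is exactly item~\ref{pfp4}. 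For the last clause, if a given $K_c$ is IO then Remark~\ref{iodil} makes the $\ket{w_{c,m}}$, hence the $\ket{\phi_{c,m}}$, for that $c$ live on mutually disjoint incoherent subbases, so $\Delta(\proj{\phi_c})=\sum_m\frac{t_{c,m}}{Mt_c}\Delta(\proj{\phi_{c,m}})$ is a direct sum of diagonal blocks; the grouping rule for Shannon entropy gives $C_r(\phi_c)=H\!\big((t_{c,m}/Mt_c)_m\big)+\sum_m\frac{t_{c,m}}{Mt_c}C_r(\phi_{c,m})$, and $H(\cdot)\le\log_2M$ yields~\eqref{crphi}.

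The one point needing genuine care is normalization. For a general MIO the rows $\ket{w_{c,m}}$ of a fixed $K_c$ need not be mutually orthogonal (they can carry nonzero, non-cancelling components in $\cV_\perp$), so $\ket{\phi_c}$ as defined need not be a unit vector. I would handle this by working throughout with the sub-normalized operators $\proj{\phi_c}$ and their honest traces: the coherence rank is insensitive to rescaling, so items~\ref{pfp2}--\ref{pfp3} and the rank branches of~\eqref{cftauu} are untouched; and for item~\ref{pfp1} one re-expresses the above convex decomposition in terms of unit vectors (rescaling the weights by $\bkt{\phi_c}$, which is consistent since $\sum_c(t_c/S)\bkt{\phi_c}=\tr T_\cE/S=1$) and reads $C_r(\phi_c)$ as the relative entropy of coherence of the normalized component scaled by $\bkt{\phi_c}$, which is exactly what makes $\sum_c(t_c/S)C_r(\phi_c)$ equal $\sum_c(\text{weight})\,C_r(\text{unit component})\ge C_f(\tau_\cV)$. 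For the IO sub-statement this subtlety evaporates, since disjoint incoherent supports force orthonormality of the relevant rows. Apart from this bookkeeping, the proposition is a direct corollary of Lemma~\ref{lwcu}; the substantive content is already contained in that lemma and in Observations~\ref{TgivesF} and~\ref{tbound}.
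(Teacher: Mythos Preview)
Your proof is correct and follows essentially the same route as the paper's: derive the convex decomposition $\tau_\cV=\sum_c(t_c/S)\phi_c$ from Lemma~\ref{lwcu} via Observation~\ref{TgivesF}, read off items~\ref{pfp1}--\ref{pfp2} from the variational definitions of $C_f$ and $r_C$, item~\ref{pfp3} from rank subadditivity, item~\ref{pfp4} from MIO monotonicity of $C_r$, and the IO clause from the block-disjointness of Remark~\ref{iodil}. Your explicit treatment of the normalization of $\ket{\phi_c}$ in the general MIO case is more careful than the paper's terse proof, which tacitly treats $\phi_c$ as a unit state.
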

\begin{proof}
    Applying the appropriate normalization factors to \eqref{iwcwcu}, $\tau_\cV=\sum_c\left(t_c/S\right)\phi_c$, whence points \ref{pfp1} and \ref{pfp2} follow. Point \ref{pfp3} follows from the sub-additivity of the coherence rank under vector addition. For point \ref{pfp4}, since $\cE\left(\tau_\rho\right)=\Psi_M$ and the relative entropy of coherence is a monotone under MIO, $C_r\left(\tau_\rho\right)\ge C_r\left(\Psi_M\right)=\log_2M$.
    
    Now suppose the $K_c$ are IO Kraus operators, so that for any given $c$, the distinct $\ket{\phi_{c,m}}$ don't overlap in their classical symbols. Then, $\Delta\left(\phi_c\right)\cong\bigoplus_m\frac{t_{c,m}}{Mt_c}\Delta\left(\phi_{c,m}\right)$ and therefore
    \begin{align}\label{crpcu}
        C_r\left(\phi_{c}\right)&=S\left[\Delta\left(\phi_c\right)\right]\nonumber\\
        &=H\left[\left(\frac{t_{c,m}}{Mt_c}\right)_m\right]+\sum_{m\in\cM}\frac{t_{c,m}}{Mt_c}S\left[\Delta\left(\phi_{c,m}\right)\right]\nonumber\\
        &\le\log_2M+\sum_{m\in\cM}\frac{t_{c,m}}{Mt_c}C_r\left(\phi_{c,m}\right).
    \end{align}
\end{proof}
\begin{remm}\label{tcmtc}
    In fact, in the IO case, $t_{c,m}=t_c$ for all $(c,m)$ and, furthermore,
    \begin{equation}\label{crpctu}
        C_r\left(\phi_{c}\right)=\log_2M+\sum_{m\in\cM}M^{-1}C_r\left(\phi_{c,m}\right).
    \end{equation}
    To see this, note that since $T_\cE-\proj{w_c}\ge0$, $\ket{w_c}\in\cV$ $\forall c$. Exploiting the action of the $K_c$'s on $\cV$,
    \begin{align}
        K_c\ket{w_c}&\propto\ket{\Psi_M}\nonumber\\
        \Rightarrow\left(\sum_{m_1}\kbra{m_1}{w_{c,m_1}}\right)\left(\sum_{m_2}\ket{w_{c,m_2}}\right)&\propto\sum_m\ket m\nonumber\\
        \Rightarrow\sum_m\ket m\bkt{w_{c,m}}&\propto\sum_m\ket m,
    \end{align}
    the latter owing again to $\bket{w_{c,m_1}}{w_{c,m_2}}\propto\delta_{m_1m_2}$. Consequently, $\bkt{w_{c,m}}$ must be independent of $m$, which by normalization yields $t_{c,m}=t_c$ for all $(c,m)$. Hence, we can refine the observation preceding \eqref{crpcu} to $\Delta\left(\phi_c\right)\cong\bigoplus_mM^{-1}\Delta\left(\phi_{c,m}\right)$, in turn refining \eqref{crpcu} to
    \begin{align}
        C_r\left(\phi_{c}\right)&=H\left[\left(M^{-1}\right)_m\right]+\sum_{m\in\cM}\frac{S\left[\Delta\left(\phi_{c,m}\right)\right]}M,
    \end{align}
    thence obtaining \eqref{crpctu}. Nevertheless, the essence of Proposition \ref{pfinu} is in bounding various measures of the collective coherence of the $\phi_{c,m}$ in any Kraus operator representation of the channel in question. We will later prove a version of this result for more general cases, where the output is not a maximal $\Psi_M$. In that context, we will see that a variant of the average coherence inequality \eqref{cftauu} still holds, even though the properties discussed in this remark do not.
\end{remm}

Recalling Definition \ref{rcmc} for the requisite coherent measurement rank, Proposition \ref{pfinu} immediately yields our first main result as a corollary, which we state without proof.
\begin{theorem}\label{tfinmaxu}
    Any coherence\n non-creating channel that deterministically maps a rank-$S$ state $\rho^\rA$ to a standard coherent resource $\Psi_M$ with $M=A/S$ must involve coherent measurement over at least
    \begin{equation}
        M^{-1}r_C\left(\tau_\rho\right)\ge M^{-1}\exp_2C_f\left(\tau_\rho\right)\ge\exp_2\ell\left(\tau_\rho\right)
    \end{equation}
    elements of $\rA$'s incoherent basis, where $\tau_\rho:=\eins_\rho/S$.
\end{theorem}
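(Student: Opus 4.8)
The statement is presented as an immediate corollary of Proposition \ref{pfinu}, so the plan is to unpack the reduction and verify that the hypotheses line up — no new ingredients are needed.

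First I would reduce the deterministic exact-distillation requirement to the subspace statement on which Proposition \ref{pfinu} (and, beneath it, Lemma \ref{lwcu}) rests. If a coherence\n non-creating channel $\cE^{\rA\to\rM}$ maps the rank-$S$ state $\rho^\rA$ deterministically onto the pure state $\Psi_M$, then by linearity of $\cE$ and purity of the output it must send \emph{every} state supported on $\cV:=\mathrm{supp}\,\rho$ to $\Psi_M$ — i.e.\ $\cE$ ``maps $\cV$ to $\Psi_M$'' in the abbreviated sense used before Lemma \ref{TWisV}. Here $\dim\cV=S$, and under the standing convention that $\rA$ carries no incoherent labels extraneous to the input, $A$ is exactly the number of incoherent basis elements supporting $\rho$; hence the hypothesis $M=A/S$ is precisely the divisibility relation $S=A/M$ demanded by Lemma \ref{lwcu} and Proposition \ref{pfinu}.

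Next I would fix an \emph{arbitrary} Kraus decomposition $\cE(\cdot)=\sum_c K_c(\cdot)K_c^\dagger$ and write each $K_c$ in the standardized row form $K_c=\sum_{m\in\cM}\ket m^\rM\bra{w_{c,m}}^\rA$ of Remark \ref{iodil}. By Definition \ref{rcmc}, the coherent measurement rank of this implementation is precisely $\max_{c,m}r_C(\phi_{c,m})$, with $\phi_{c,m}$ the normalized rows as in Proposition \ref{pfinu}. Since the subspace hypothesis above is exactly what Proposition \ref{pfinu} assumes, its conclusion applies verbatim: combining the inequality chain \eqref{cftauu} with point \ref{pfp4} (which supplies $\log_2 M\le C_r(\tau_\cV)$) and with the general bound $\log_2 r_C\ge C_f$ noted after \eqref{crank} gives
\begin{equation}
  \max_{c,m}r_C(\phi_{c,m})\;\ge\;M^{-1}r_C(\tau_\cV)\;\ge\;M^{-1}\exp_2 C_f(\tau_\cV)\;\ge\;\exp_2\ell(\tau_\cV).
\end{equation}
Since $\tau_\cV=\eins_\cV/S=\eins_\rho/S=\tau_\rho$, this is exactly the asserted bound; and because $\cE$ and its Kraus decomposition were both arbitrary, the bound holds for every implementation of every such channel, which is the content of the theorem.

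There is no genuine obstacle here: all the substantive work lives in Proposition \ref{pfinu} (and, beneath it, in Lemma \ref{lwcu}'s identification $T_\cE=\eins_\cV$). The points that need care rather than ingenuity are (i) confirming that the ``just-big-enough $\rA$'' convention makes $S=A/M$ the correct normalization with no loss of generality; (ii) remembering that for a general MIO channel the rows $\{\ket{w_{c,m}}\}_m$ at fixed $c$ need not lie on disjoint incoherent subbases, so one must route through the subadditivity step $\max_c\sum_m r_C(\phi_{c,m})\ge\max_c r_C(\phi_c)$ of point \ref{pfp3} rather than reasoning about the $\ket{w_{c,m}}$ directly; and (iii) the elementary combinatorial inequality $\max_{c,m}r_C(\phi_{c,m})\ge M^{-1}\max_c\sum_m r_C(\phi_{c,m})$ that opens the chain \eqref{cftauu}, valid because $\cM$ has $M$ elements.
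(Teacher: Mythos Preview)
Your proposal is correct and matches the paper's own treatment: the paper explicitly states the theorem ``without proof'' as an immediate corollary of Proposition~\ref{pfinu} via Definition~\ref{rcmc}, and your unpacking of that reduction (including the care points about the MIO subadditivity step and the pigeonhole inequality opening \eqref{cftauu}) is exactly the intended route.
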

\begin{remm}
    Proposition \ref{pfinu} is stronger than Theorem \ref{tfinmaxu}: it puts bounds not only on the maximal coherent measurement rank, but also on various collective or average coherence properties of the $\ket{\phi_{c,m}}$. While the distribution governing this averaging is operationally appropriate only for an input satisfying $\tau_\rho=\rho$ (i.e.,\ one with a flat spectrum) and not otherwise, it is expected to approach the actual distribution of coherent measurement resources in the asymptotic limit (Section \ref{secas}). As such, our bounds affect not only some outlying measurements occurring with small probabilities, but even the collective statistics of all involved measurements, especially in the asymptotic limit.
    
    It is also notable that we can get variants of Proposition \ref{pfinu} by using different coherence quantifiers, although generic quantifiers may not admit a clean splitting of the coherence of each $\phi_c$ into separate terms for $\Psi_M$ and the $\phi_{c,m}$. The essential structure is captured by Lemma \ref{lwcu}, whereof each coherence quantifier illuminates certain specific facets.
    
    Nevertheless, we have explicated and highlighted the weaker form in Theorem \ref{tfinmaxu} for two reasons. Firstly, applying Proposition \ref{pfinu} to an instance requires a detailed specification of all the Kraus operators, which may be arbitrarily numerous. On the other hand, Theorem \ref{tfinmaxu} only requires specifying the largest coherent measurement rank used in implementing a channel. Secondly, we will see in Section \ref{secas} that in the asymptotic limit, the above-noted operational interpretation of the average coherence is expected to be not only necessary but also sufficient for maximal distillation.
\end{remm}
We shall now extend the above results to non-maximal distillation tasks.

\subsection{Non-maximal distillation}\label{secnonm}
Consider the distillation of a generic $\ket\alpha=\sum_m\alpha_m\ket m$ (without loss of generality, we can assume $\alpha_m\in\bbR$ and $\alpha_m>0$). First, we have a counterpart to Lemma \ref{lwcu}.
\begin{alem}\label{lwc}
    If $\cE^{\rA\to\rM}$ is an MIO that distills $\alpha^\rM$ from a subspace $\cV\subseteq\cH^\rA$, then
    \begin{equation}
        \alpha_{\min}^2\eins^\rA\le\Delta\left(T_\cE\right)\le\alpha_{\max}^2\eins^\rA,
    \end{equation}
    where $\alpha_{\min/\max}:=\mathop{\min/\max}\limits_m\alpha_m$.
\end{alem}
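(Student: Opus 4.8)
The plan is to combine the structural result of Lemma~\ref{TWisV} with the defining property \eqref{trkt} of the target effect, now applied to an \emph{arbitrary} pure target $\ket\alpha=\sum_m\alpha_m\ket m$ rather than to $\Psi_M$. Recall from Observation~\ref{tbound} that for any MIO $\cE$ and any $a\in\cA$ one has $\bra a T_\cE\ket a=\bra a\,\cE^\dagger\circ\Delta(\alpha)\,\ket a$. So the first step is to compute $\Delta(\alpha)$: since $\ket\alpha=\sum_m\alpha_m\ket m$, we get $\Delta(\proj\alpha)=\sum_m\alpha_m^2\proj m$, which satisfies $\alpha_{\min}^2\eins^\rM\le\Delta(\alpha)\le\alpha_{\max}^2\eins^\rM$ on the span of the $\ket m$'s appearing in $\alpha$ (and we may assume $\rM$ is exactly that span, by the same minimality convention used for $\rA$). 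The operator inequality then reads $\alpha_{\min}^2\eins^\rM\le\Delta(\alpha)\le\alpha_{\max}^2\eins^\rM$.

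Next I would push this inequality through $\cE^\dagger$. Because $\cE$ is trace-preserving, $\cE^\dagger$ is unital, i.e.\ $\cE^\dagger(\eins^\rM)=\eins^\rA$; and $\cE^\dagger$, being the adjoint of a completely positive map, is itself positive, hence order-preserving on Hermitian operators. Applying $\cE^\dagger$ to the sandwiched inequality for $\Delta(\alpha)$ therefore yields
\begin{equation*}
    \alpha_{\min}^2\,\eins^\rA\le\cE^\dagger\!\left[\Delta(\alpha)\right]\le\alpha_{\max}^2\,\eins^\rA.
\end{equation*}
Finally, by Observation~\ref{tbound} the diagonal entries of $T_\cE$ and of $\cE^\dagger[\Delta(\alpha)]$ coincide in the incoherent basis, i.e.\ $\Delta(T_\cE)=\Delta\!\left(\cE^\dagger[\Delta(\alpha)]\right)$. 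Since the dephasing channel $\Delta$ is positive and unital, dephasing both sides of the displayed operator inequality preserves it, and one lands on $\alpha_{\min}^2\eins^\rA\le\Delta(T_\cE)\le\alpha_{\max}^2\eins^\rA$, which is the claim. (The hypothesis that $\cE$ distills $\alpha$ from $\cV$ — rather than just being an arbitrary MIO — is not strictly needed for the stated bound, but it guarantees the relevant target is genuinely $\alpha$ and fixes the setup; one could also invoke Lemma~\ref{TWisV} to record that $T_\cE$ has the block form $\eins_\cV+T_\perp$, which will matter for the subsequent non-maximal analogue of Lemma~\ref{lwcu}.)

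I do not anticipate a serious obstacle here — this lemma is essentially the ``non-uniform'' bookkeeping version of the trace computation in the proof of Lemma~\ref{lwcu}. The only points requiring a little care are: (i) making sure the ambient output space $\rM$ is taken to be exactly $\mathrm{span}\{\ket m:\alpha_m\neq0\}$ so that the upper bound $\Delta(\alpha)\le\alpha_{\max}^2\eins^\rM$ is tight and $\cE^\dagger$ is genuinely unital on it (the convention at the end of Section~\ref{seccmc} handles the input side; the analogous remark is needed on the output side); and (ii) noting that we really want the \emph{diagonal} of $T_\cE$ controlled, not $T_\cE$ itself, which is why the final dephasing step is included — $T_\cE$ itself can have off-diagonal structure (indeed $T_\cE=\eins_\cV+T_\perp$ from Lemma~\ref{TWisV} is generically non-diagonal). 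Everything else is a one-line application of positivity and unitality of $\cE^\dagger$ and $\Delta$.
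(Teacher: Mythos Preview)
Your proposal is correct and follows essentially the same route as the paper's proof: compute $\Delta(\alpha)=\sum_m\alpha_m^2\proj m$, sandwich it as $\alpha_{\min}^2\eins^\rM\le\Delta(\alpha)\le\alpha_{\max}^2\eins^\rM$, and then invoke Observation~\ref{tbound} together with the TP property (unitality of $\cE^\dagger$) to transfer the bounds to $\Delta(T_\cE)$. Your version is simply more explicit about the positivity of $\cE^\dagger$ and the final dephasing step, and your observation that the distillation hypothesis is not strictly needed for the stated inequality is correct.
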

\begin{proof}
    First, note that $\Delta\left(\alpha\right)=\sum_m\alpha_m^2\proj m$. Thus, $\alpha_{\min}^2\eins^\rM\le\Delta\left(\proj\alpha\right)\le\alpha_{\min}^2\eins^\rM$
    , whence the result follows from Observation \ref{tbound} and the TP property of $\cE$.
\end{proof}
This allows us to derive a non-maximal variant of Proposition \ref{pfinu}.
\begin{prop}\label{pfin}
    Suppose an MIO $\cE^{\rA\to\rM}$, with Kraus operators $K_c:=\sum\limits_{m\in\cM}\ket m^\rM\bra{w_{c,m}}^\rA$, distills the state $\alpha$ from a subspace $\cV\subseteq\cH^\rA$. Define $S:=\tr\,T_\cE$ for the associated target effect, and $\tau_\cV:=\eins_\cV/\dim\cV$. Denoting $t_{c,m}:=\bkt{w_{c,m}}$ and $t_c:=\sum\limits_m\alpha_m^2t_{c,m}$, define the normalized states $\ket{\phi_{c,m}}:=t_{c,m}^{-1/2}\ket{w_{c,m}}$ and $\ket{\phi_{c}}:=t_c^{-1/2}\sum_m\alpha_m\ket{w_{c,m}}=\sum_m\alpha_m\sqrt{\frac{t_{c,m}}{t_c}}\ket{\phi_{c,m}}$; note that $\left(\alpha_m^2t_{c,m}/t_c\right)_{m}$ and $\left(t_c/S\right)_c$ are now normalized distributions by Lemma \ref{lwc}. Define the set $\sT:=\left\{T\in\cL\left(\cH^\rA\right):\;\mathrm{Conditions~\eqref{propst}}\right\}$, wherein the conditions are as follows:
    \begin{align}\label{propst}
        T^\dagger&=T;\nonumber\\
        \eins_\cV\le T&\le\eins^\rA;\nonumber\\
        \left(T-\eins_\cV\right)\eins_\cV&=0;\nonumber\\
        \alpha_{\min}^2\eins^\rA\le\Delta(T)&\le\alpha_{\max}^2\eins^\rA.
    \end{align}
    Through this set, define
    \begin{align}
        r_0&:=\min_{T\in\sT}r_C\left(T/\tr T\right);\\
        C_0&:=\min_{T\in\sT}C_f\left(T/\tr T\right).
    \end{align}
    Then,
    \begin{widetext}
        \begin{equation}\label{cftau}
            \max\limits_{c,m}r_C\left(\phi_{c,m}\right)\ge\frac{\max\limits_c\sum\limits_mr_C\left(\phi_{c,m}\right)}M\ge\frac{\max\limits_cr_C\left(\phi_{c}\right)}M\ge\left\{\begin{array}{c}
            r_0/M\\
            \left\{\begin{array}{c}
            \frac{2^{\sum\limits_c\frac{t_c}{S}\log_2r_C\left(\phi_{c}\right)}}M\\
            \frac{2^{\max\limits_cC_r\left(\phi_{c}\right)}}M
            \end{array}\right\}\ge\frac{2^{\sum\limits_c\frac{t_c}{S}C_r\left(\phi_{c}\right)}}M
            \end{array}\right\}\ge\frac{2^{C_0}}M.
        \end{equation}
    \end{widetext}
    Furthermore, if some $K_c$ is also IO, then
    \begin{equation}\label{crphii}
        \log_2M+\sum_{m\in\cM}\frac{\alpha_m^2t_{c,m}}{t_c}C_r\left(\phi_{c,m}\right)\ge C_r\left(\phi_{c}\right).
    \end{equation}
\end{prop}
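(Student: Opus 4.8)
The plan is to replay the proof of Proposition \ref{pfinu} with essentially one structural change: the exact identity $T_\cE=\eins_\cV$, which held there because the target was maximal, gives way to the weaker statement that the target effect lies in the constraint set $\sT$ of \eqref{propst}. So the first thing I would do is verify $T_\cE\in\sT$. Hermiticity and $0\le T_\cE\le\eins^\rA$ are immediate from Definition \ref{defTE}. Since $\cE$ maps $\cV$ exactly to $\alpha$, Lemma \ref{TWisV} gives $T_\cE=\eins_\cV+T_\perp$ with $T_\perp\ge0$ supported on $\cV_\perp$, which supplies both $\eins_\cV\le T_\cE$ and $(T_\cE-\eins_\cV)\eins_\cV=0$; and the remaining sandwich $\alpha_{\min}^2\eins^\rA\le\Delta(T_\cE)\le\alpha_{\max}^2\eins^\rA$ is exactly Lemma \ref{lwc}. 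Hence $T_\cE/\tr T_\cE=T_\cE/S$ is among the normalized operators over which the minimizations defining $r_0$ and $C_0$ range, so $r_C(T_\cE/S)\ge r_0$ and $C_f(T_\cE/S)\ge C_0$.

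Next I would produce the convex decomposition that drives the rest. By Observation \ref{TgivesF}, $T_\cE=\sum_c\proj{w_c^\alpha}$ with $\ket{w_c^\alpha}=\sum_m\alpha_m\ket{w_{c,m}}=\sqrt{t_c}\,\ket{\phi_c}$, independently of the Kraus decomposition chosen; dividing by $S=\tr T_\cE$ (the identity $\sum_c t_c=S$ being the content of Remark \ref{remtrace}, so that $(t_c/S)_c$ is a probability distribution) exhibits $T_\cE/S=\sum_c(t_c/S)\,\phi_c$ as a convex mixture of pure states. The convex-roof definition of $C_f$ and the decomposition-minimized definition \eqref{crank} of the mixed-state $r_C$ then give $\sum_c(t_c/S)C_r(\phi_c)\ge C_f(T_\cE/S)\ge C_0$ and $\max_c r_C(\phi_c)\ge r_C(T_\cE/S)\ge r_0$.

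The remaining chain \eqref{cftau} is assembled just as in Proposition \ref{pfinu}, now with non-uniform weights. Subadditivity of the coherence rank under vector addition, together with the fact that $\ket{\phi_c}$ combines at most $M$ of the $\ket{\phi_{c,m}}$, gives $\sum_m r_C(\phi_{c,m})\ge r_C(\phi_c)$ for every $c$ and hence $\max_{c,m}r_C(\phi_{c,m})\ge M^{-1}\max_c\sum_m r_C(\phi_{c,m})\ge M^{-1}\max_c r_C(\phi_c)$; ``$\max\ge$ weighted geometric mean'' gives $\max_c r_C(\phi_c)\ge 2^{\sum_c(t_c/S)\log_2 r_C(\phi_c)}$; the pure-state bound $\log_2 r_C(\psi)\ge C_r(\psi)$ gives $\max_c r_C(\phi_c)\ge 2^{\max_c C_r(\phi_c)}$; both of these drop to $2^{\sum_c(t_c/S)C_r(\phi_c)}$ (the first termwise via $\log_2 r_C\ge C_r$, the second via ``$\max\ge$ average''), which is $\ge 2^{C_0}$ by the previous paragraph; and $\max_c r_C(\phi_c)\ge r_0$ from there too. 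Dividing by $M$ gives \eqref{cftau}. For the IO addendum, fix a $c$ whose $K_c$ is IO: by Remark \ref{iodil} its rows occupy mutually disjoint incoherent subbases, so $\Delta(\phi_c)$ is block-diagonal with $m$-th block $\tfrac{\alpha_m^2 t_{c,m}}{t_c}\Delta(\phi_{c,m})$, whence $C_r(\phi_c)=S[\Delta(\phi_c)]=H\!\left[\left(\tfrac{\alpha_m^2 t_{c,m}}{t_c}\right)_m\right]+\sum_m\tfrac{\alpha_m^2 t_{c,m}}{t_c}S[\Delta(\phi_{c,m})]\le\log_2 M+\sum_m\tfrac{\alpha_m^2 t_{c,m}}{t_c}C_r(\phi_{c,m})$, which is \eqref{crphii}.

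The closest thing to an obstacle is that there is no hard step at all: the only new content is checking the four defining conditions of $\sT$ for $T_\cE$ (all supplied by the earlier lemmas) and propagating the weights $\alpha_m^2 t_{c,m}/t_c$ and $t_c/S$ consistently through the inequalities. The one place that genuinely demands care is precisely where this differs from the maximal case of Remark \ref{tcmtc}: since $T_\cE\neq\eins_\cV$ in general, one can no longer deduce $\ket{w_c^\alpha}\in\cV$ or the uniformity $t_{c,m}=t_c$, which is why \eqref{crphii} must stay an inequality and why the bounds are phrased through the $\sT$-optimized quantities $r_0,C_0$ rather than through a single target effect.
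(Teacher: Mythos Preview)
Your proposal is correct and matches the paper's own treatment: the paper explicitly omits the proof, saying only that it ``follows exactly like that of Proposition~\ref{pfinu}'', and you have faithfully carried out that replay with the one structural change (membership $T_\cE\in\sT$ via Lemmas~\ref{TWisV} and~\ref{lwc} in place of the exact identity $T_\cE=\eins_\cV$). Your closing remark about why \eqref{crphii} remains an inequality and why the bounds are stated via $r_0,C_0$ is also the right diagnosis of where non-maximality bites.
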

We omit the proof, since it follows exactly like that of Proposition \ref{pfinu}. We leave further investigations concerning general $\ket\alpha$'s for future work. In the remainder of this paper, we will restrict ourselves to standard outputs $\ket\alpha=\ket{\Psi_M}$, for which $\alpha_{\min}=\alpha_{\max}=M^{-1/2}$. The following definitions will be useful for these cases.
\begin{definition}\label{defcp}
    Given a density operator $\tau^\rA$ and $p\in[0,1]$, let
    \begin{equation}
        \sT^\rA_p(\tau):=\left\{p\tau+(1-p)\sigma:\tau\perp\sigma,\Delta(\gamma)=\frac{\eins^\rA}A\right\},
    \end{equation}
    where $\sigma$ takes values as density operators on $\rA$ acting on the subspace complementary to $\tau$'s support. Note that, unlike the $\sT$ of Proposition \ref{pfin}, this set contains only normalized density operators. Using it, define
    \begin{align}
        r_{C;p}(\tau)&:=\min_{\gamma\in\sT^\rA_p(\tau)}r_C(\gamma);\\
        C_{f;p}(\tau)&:=\min_{\gamma\in\sT^\rA_p(\tau)}C_f(\gamma);\\
        \ell_{;p}(\tau)&:=C_{f;p}(\tau)-C_r(\tau).
    \end{align}
\end{definition}
\begin{coro}
    Suppose an MIO $\cE^{\rA\to\rM}$, with Kraus operators $K_c:=\sum\limits_{m\in\cM}\ket m^\rM\bra{w_{c,m}}^\rA$, distills $\Psi_M$ from a subspace $\cV\subseteq\cH^\rA$. For the associated target effect $T_\cE$, let $S:=\tr T_\cE$ and define $\tau_\cV:=\eins_\cV/\dim\cV$; note that Lemma \ref{lwc} implies $S=A/M$, while Lemma \ref{TWisV} entails $S\ge\dim\cV$. Denoting $t_{c,m}:=\bkt{w_{c,m}}$ and $t_c:=M^{-1}\sum_mt_{c,m}$, define the normalized states $\ket{\phi_{c,m}}:=t_{c,m}^{-1/2}\ket{w_{c,m}}$ and $\ket{\phi_{c}}:=\left(Mt_c\right)^{-1/2}\sum_m\ket{w_{c,m}}=\sum_m\sqrt{\frac{t_{c,m}}{Mt_c}}\ket{\phi_{c,m}}$. Then, the inequality chains in \eqref{cftau} hold with $r_0=r_{C;p}\left(\tau_\cV\right)$ and $C_0=C_{f;p}\left(\tau_\cV\right)$, where $p:=\frac{\dim\cV}S$. Furthermore, if some $K_c$ is also IO, then \eqref{crphi} holds for it.
\end{coro}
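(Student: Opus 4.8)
The plan is to show that this corollary follows by specializing Proposition \ref{pfin} to the standard target $\ket\alpha = \ket{\Psi_M}$ and identifying the abstract set $\sT$ appearing there with the concrete set $\sT^\rA_p(\tau_\cV)$ from Definition \ref{defcp}. First I would note that for $\ket\alpha = \ket{\Psi_M}$ we have $\alpha_{\min} = \alpha_{\max} = M^{-1/2}$, so the last pair of inequalities in \eqref{propst} collapses to the single equality $\Delta(T) = M^{-1}\eins^\rA$. Applying Lemma \ref{lwc} to this target then forces $\tr T_\cE = \tr\Delta(T_\cE) = A/M$, which justifies setting $S := \tr T_\cE = A/M$; and Lemma \ref{TWisV} gives the block structure $T_\cE = \eins_\cV + T_\perp$ with $T_\perp$ supported on $\cV_\perp$, whence $S \ge \dim\cV$ and $p := (\dim\cV)/S \in (0,1]$.

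Next I would translate the membership conditions. An arbitrary $T \in \sT$ has, by \eqref{propst}, the form $T = \eins_\cV + T_\perp$ with $0 \le T_\perp \le \eins_{\cV_\perp}$ (from $\eins_\cV \le T \le \eins^\rA$ together with the orthogonality condition $(T-\eins_\cV)\eins_\cV = 0$), and its diagonal satisfies $\Delta(T) = M^{-1}\eins^\rA$, i.e.\ $\Delta(T) = (\dim\cV / S)\,\eins^\rA / A \cdot (S/\dim\cV)$ — more carefully, $\tr T = \tr\Delta(T) = A/M = S$, so the normalized operator $T/\tr T = T/S$ is a density operator. Writing $T/S = (\dim\cV/S)\,\tau_\cV + (1 - \dim\cV/S)\,\sigma$ where $\sigma := (S - \dim\cV)^{-1} T_\perp$ is a density operator on $\cV_\perp$ (orthogonal to $\tau_\cV$'s support), and checking that $\Delta(T/S) = M^{-1}S^{-1}\eins^\rA = \eins^\rA/A$, we see precisely that $T/S \in \sT^\rA_p(\tau_\cV)$ with $p = \dim\cV/S$. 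Conversely, any $\gamma \in \sT^\rA_p(\tau_\cV)$ scaled by $S$ lies in $\sT$. Hence the minimizations defining $r_0, C_0$ in Proposition \ref{pfin} run over exactly the same set of normalized states as those defining $r_{C;p}(\tau_\cV)$, $C_{f;p}(\tau_\cV)$ in Definition \ref{defcp}, giving $r_0 = r_{C;p}(\tau_\cV)$ and $C_0 = C_{f;p}(\tau_\cV)$. Substituting into \eqref{cftau} yields the claimed inequality chain, and the IO refinement \eqref{crphii} becomes \eqref{crphi} since $\alpha_m^2 = M^{-1}$ for all $m$.

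The main obstacle — really the only non-bookkeeping point — is verifying the bijection between $\sT$ (unnormalized, defined by $\Delta(T) = M^{-1}\eins^\rA$) and the scaled copy $S \cdot \sT^\rA_p(\tau_\cV)$, in particular checking that the diagonal constraint matches up correctly after normalization and that the complementary part $\sigma$ lands in the right set of density operators on $\cV_\perp$; one must be careful that $\cV_\perp$ here means the orthocomplement of $\mathrm{supp}\,\tau_\cV = \cV$ inside $\cH^\rA$, consistent with the ``$A$'' in the divisibility $S = A/M$. Everything else is a direct citation of Proposition \ref{pfin}, Lemma \ref{lwc}, and Lemma \ref{TWisV}, so no separate argument is needed — which is why the statement is phrased as a corollary.
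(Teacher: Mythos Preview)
Your approach is correct and matches the paper's implicit derivation: the corollary is indeed just Proposition~\ref{pfin} specialized to $\ket\alpha=\ket{\Psi_M}$, with the abstract $\sT$ related to $\sT^\rA_p(\tau_\cV)$ via normalization by $S=A/M$.

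One small overclaim: the converse direction of your bijection is not quite right. For arbitrary $\gamma=p\tau_\cV+(1-p)\sigma\in\sT^\rA_p(\tau_\cV)$, the rescaled operator $S\gamma=\eins_\cV+(S-\dim\cV)\sigma$ need \emph{not} satisfy $S\gamma\le\eins^\rA$, since Definition~\ref{defcp} places no upper bound on $\nrm{\sigma}_\infty$ beyond $\sigma$ being a density operator on $\cV_\perp$. So in general $\sT/S\subseteq\sT^\rA_p(\tau_\cV)$ but not conversely. Fortunately only this inclusion is needed: it gives $r_0\ge r_{C;p}(\tau_\cV)$ and $C_0\ge C_{f;p}(\tau_\cV)$, so substituting the (possibly smaller) quantities $r_{C;p}(\tau_\cV)$, $C_{f;p}(\tau_\cV)$ into the lower-bound positions of \eqref{cftau} preserves all the inequalities. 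Your forward direction and the rest of the bookkeeping are fine.
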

Our next main result follows from this in much the same elementary way as Theorem \ref{tfinmaxu} from Proposition \ref{pfinu}.
\begin{theorem}\label{tfinu}
    Any coherence\n non-creating channel that deterministically maps a rank-$S$ state $\rho^\rA$ to a standard coherent resource $\Psi_M$ with $M=pA/S$ must involve coherent measurement over at least $M^{-1}r_{C;p}\left(\tau_\rho\right)\ge M^{-1}\exp_2C_{f;p}\left(\tau_\rho\right)\ge\exp_2\ell_{;p}\left(\tau_\rho\right)$ elements of $\rA$'s incoherent basis, where $\tau_\rho:=\eins_\rho/S$.
\end{theorem}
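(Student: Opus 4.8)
The plan is to obtain this theorem as an immediate corollary of the Corollary stated just above it, in exactly the way Theorem \ref{tfinmaxu} follows from Proposition \ref{pfinu}; the only real content is notational translation together with one monotonicity observation. First I would record that exact, deterministic distillation of the pure target $\Psi_M$ from $\rho^\rA$ forces the channel $\cE$ to map the entire support $\cV:=\mathrm{supp}\,\rho$ (whose dimension is the rank $S$ of $\rho$) onto $\Psi_M$, so Lemmas \ref{TWisV} and \ref{lwc} and the preceding Corollary all apply with this $\cV$, and $\tau_\cV=\eins_\cV/\dim\cV=\eins_\rho/S=\tau_\rho$. Specializing Lemma \ref{lwc} to $\alpha=\Psi_M$ (for which $\alpha_{\min}=\alpha_{\max}=M^{-1/2}$) gives $\Delta(T_\cE)=\eins^\rA/M$, hence $\tr T_\cE=A/M$; consequently the parameter denoted $p$ in the Corollary equals $\dim\cV/\tr T_\cE=SM/A$, i.e.\ $M=pA/S$, precisely the hypothesis of the theorem. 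I would flag here the one genuine bookkeeping pitfall: the symbol $S$ means the rank of $\rho$ in the theorem but $\tr T_\cE$ in the Corollary, and the two uses must be kept apart when aligning the statements.

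Next I would simply read off the conclusion from the inequality chain \eqref{cftau} furnished by the Corollary. By Definition \ref{rcmc}, its leftmost quantity $\max_{c,m}r_C(\phi_{c,m})$ is exactly the number of incoherent-basis elements over which a given Kraus implementation of $\cE$ must measure coherently, while the Corollary identifies $r_0=r_{C;p}(\tau_\rho)$ and $C_0=C_{f;p}(\tau_\rho)$. Hence \eqref{cftau} already delivers $\max_{c,m}r_C(\phi_{c,m})\ge M^{-1}r_{C;p}(\tau_\rho)\ge M^{-1}\exp_2 C_{f;p}(\tau_\rho)$, the second inequality being the general bound $\log_2 r_C(\gamma)\ge C_f(\gamma)$ (noted after \eqref{crank}) applied at a minimizer of $r_C$ over $\sT^\rA_p(\tau_\rho)$. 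Since this holds for every coherence\n non-creating channel effecting the stated transformation and every Kraus decomposition of it, the first two claimed bounds follow.

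It then remains to reach the third bound, $M^{-1}\exp_2 C_{f;p}(\tau_\rho)\ge\exp_2\ell_{;p}(\tau_\rho)$. Unfolding $\ell_{;p}(\tau_\rho)=C_{f;p}(\tau_\rho)-C_r(\tau_\rho)$, this is equivalent to $C_r(\tau_\rho)\ge\log_2 M$, which is the non-maximal analogue of point \ref{pfp4} of Proposition \ref{pfinu} and is proved the same way: $\tau_\rho$ is a bona fide normalized state supported on $\cV$, so $\cE(\tau_\rho)=\Psi_M$, and since $C_r$ is a monotone under coherence\n non-creating channels, $C_r(\tau_\rho)\ge C_r(\Psi_M)=\log_2 M$. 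Concatenating this with the previous step completes the proof.

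I do not anticipate a substantive obstacle: the theorem is an elementary consequence of the target-effect machinery already in place, with Proposition \ref{pfin} and its specializing Corollary doing all the heavy lifting. The only points deserving care are the double use of $S$ (handled above) and the well-definedness of $r_{C;p}(\tau_\rho)$ and $C_{f;p}(\tau_\rho)$, i.e.\ nonemptiness of $\sT^\rA_p(\tau_\rho)$; but this is automatic, since Lemmas \ref{TWisV} and \ref{lwc} show $T_\cE/\tr T_\cE=p\,\tau_\rho+(1-p)\,\sigma$ with $\sigma$ a density operator supported on $\cV_\perp$ and $\Delta(T_\cE/\tr T_\cE)=\eins^\rA/A$, so $T_\cE/\tr T_\cE$ itself already lies in that set.
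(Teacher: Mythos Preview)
Your proposal is correct and follows precisely the route the paper indicates: the paper states that Theorem~\ref{tfinu} ``follows from this [Corollary] in much the same elementary way as Theorem~\ref{tfinmaxu} from Proposition~\ref{pfinu}'' and gives no further proof, so your careful unpacking\m including the notational reconciliation of the two uses of $S$, the derivation of $p=SM/A$, and the monotonicity argument $C_r(\tau_\rho)\ge\log_2M$ for the final inequality\m is exactly what the paper leaves implicit.
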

\begin{remm}
    Our results on maximal distillation are in fact corollaries of the non-maximal versions. But we deemed the former to be of sufficient importance, and derivable through sufficiently simpler means, to warrant the order of presentation that we have chosen.
\end{remm}

\subsection{Approximate distillation}\label{secappr}
Let us now consider the task of approximate distillation, where we only require an output that is close enough to a standard resource. Formally, given an input $\rho^\rA$ and an error tolerance $\epsilon\in[0,1]$, we shall require that the action of a channel $\cE$ satisfy
\begin{equation}\label{cappr}
    F\left[\cE(\rho),\Psi_M\right]\ge1-\epsilon,
\end{equation}
where
\begin{equation}
    F(\sigma,\tau):=\left(\tr\sqrt{\sqrt{\sigma}\tau\sqrt{\sigma}}\right)^2
\end{equation}
is the \emph{Uhlmann\n Jozsa fidelity}. Since $\Psi_M$ is pure, the condition \eqref{cappr} simplifies as $F\left[\cE(\rho),\Psi_M\right]=\bra{\Psi_M}\cE(\rho)\ket{\Psi_M}=\tr\left(\rho T_\cE\right)$. Thus,
\begin{equation}\label{ferp}
    \tr\left(\rho T_\cE\right)\ge1-\epsilon.
\end{equation}
Notably, unlike in exact distillation where only the space $\mathrm{supp}\rho$ mattered, here the detailed structure of $\rho$ must be taken into account. Say $\rho=\sum_sr_s\psi_s$ is an eigendecomposition. Denoting $r_{\max/\min}:=\mathop{\max/\min}_sr_s$, we have $\eins_\rho=\sum_s\psi_s\ge\sum_s\left(r_s/r_{\max}\right)\psi_s=\rho/r_{\max}$. Therefore, \eqref{ferp} implies
\begin{equation}\label{pe11}
    \tr\left(\eins_\rho T_\cE\right)\ge\frac{1-\epsilon}{r_{\max}}.
\end{equation}
Meanwhile, another consequence of \eqref{ferp} is as follows: since each $s$ term is weighted by an $r_s$ factor, $\bra{\psi_s}T_\cE\ket{\psi_s}$ can afford to be further away from the ideal value 1 for those $s$ whose $r_s$ are smaller. A lower bound on the smallest possible single $\bra{\psi_s}T_\cE\ket{\psi_s}$ is $F_{\min}$, defined through $r_{\min}F_{\min}+\left(1-r_{\min}\right)\cdot1=1-\epsilon$. This is solved by $F_{\min}=1-\epsilon/r_{\min}$, and so
\begin{equation}\label{pe12}
    \tr\left(\eins_\rho T_\cE\right)\ge S\cdot\left(1-\frac{\epsilon}{r_{\min}}\right),
\end{equation}
where $S:=\mathrm{rank}\rho$ (note the departure from the notation of Section \ref{secnonm}). Combining \eqref{pe11} and \eqref{pe12}, we have
\begin{equation}\label{pe1}
    \tr\left(\eins_\rho T_\cE\right)\ge S_{\rho,\epsilon},
\end{equation}
where
\begin{equation}
    S_{\rho,\epsilon}:=\max\left\{\frac{1-\epsilon}{r_{\max}},\,S\left(1-\frac{\epsilon}{r_{\min}}\right)\right\}.
\end{equation}
When $\rho$ is nearly maximally-mixed on its support, i.e.\ $r_{\max}\approx r_{\min}\approx1/S$ (as in the asymptotic case that we will soon take up), the first bound is tighter: $\frac{1-\epsilon}{r_{\max}}\approx S\left(1-\epsilon\right)$, whereas $S\left(1-\frac{\epsilon}{r_{\min}}\right)\approx S\left(1-S\epsilon\right)$. The second bound is more useful when $\rho$ is far from maximally-mixed (i.e.,\ $r_{\max}\gg1/S$) and $\epsilon\ll r_{\min}$: then, $\frac{1-\epsilon}{r_{\max}}\ll S$, while $S\left(1-\frac{\epsilon}{r_{\min}}\right)\approx S$.

These constraints, together with those of Observation \ref{tbound}, can be used to find lower bounds on the coherent measurement cost, as we did in the exact case. There we had $T_\cE^\rho\equiv\eins_\rho T_\cE\eins_\rho=\eins_\rho$, leading to the exact block structure \eqref{tblock}. In the approximate case, for small enough $\epsilon$ we should be able to bound the amplitude of the cross-block parts. We leave this line of inquiry and the pursuit of ``good'' bounds for future work, here contenting ourselves with a crude bound that suffices for our analysis of maximal asymptotic distillation (Section \ref{secas}). For this bound, we will show that the normalized density operator $\tau_\cE:=T_\cE/\tilde S$ (where $\tilde S:=\tr\,T_\cE=A/M$) is ``not too different from'' $\tau_\rho:=\eins_\rho/S$.

Let $\tau_\cE^\rho:=\eins_\rho\tau_\cE\eins_\rho$, and define its normalized version $\tau_\cE^{|\rho}:=\tau_\cE^\rho/\tr\,\tau_\cE^\rho$. Note that $\tr\,\tau_\cE^\rho\ge S_{\rho,\epsilon}/\tilde S$. Then,
\begin{align}\label{fer1}
    F\left(\tau_\cE,\tau_\rho\right)&=\left(\tr\sqrt{\sqrt{\tau_\rho}\tau_\cE\sqrt{\tau_\rho}}\right)^2\nonumber\\
    &=\frac1S\left(\tr\sqrt{\eins_\rho\tau_\cE\eins_\rho}\right)^2\nonumber\\
    &=\frac1S\left(\tr\sqrt{\tau_\cE^\rho}\right)^2\ge\frac{S_{\rho,\epsilon}}{S\tilde S}\nrm{\sqrt{\tau_\cE^{|\rho}}}_1^2.
\end{align}
Since $T_\cE\le\eins$, also $T_\cE^\rho\le\eins$. Thus, $\tau_\cE^{|\rho}=T_\cE^\rho/\tr\,T_\cE^\rho\le\eins/S_{\rho,\epsilon}$. Meanwhile, $\nrm{\sqrt{\tau_\cE^{|\rho}}}_2=\sqrt{\tr\,\tau_\cE^{|\rho}}=1$. Therefore,
\begin{align}
    \nrm{\sqrt{\tau_\cE^{|\rho}}}_1&\nonumber\\
    \ge&\min_{\mathbf x\in\bbR^S}\left\{\nrm{\mathbf x}_1:\nrm{\mathbf x}_2=1,\nrm{\mathbf x}_\infty\le\frac1{\sqrt{S_{\rho,\epsilon}}}\right\}\nonumber\\
    \ge&\sqrt{S_{\rho,\epsilon}}.
\end{align}
Combining this with the bound in \eqref{fer1} and expressing the result in terms of the Bures distance $B(\sigma,\tau):=\sqrt{2\left[1-\sqrt{F(\sigma,\tau)}\right]}$,
\begin{equation}\label{fer}
    B\left(\tau_\cE,\tau_\rho\right)\le\sqrt{2\left(1-\frac{S_{\rho,\epsilon}}{\sqrt{S\tilde S}}\right)}=:\delta.
\end{equation}
This $\delta$ depends on $\epsilon$, $A$, $M$, and $\rho$; we suppress its dependencies to avoid clutter. Notice that $\delta$ approaches $0$ when $SM\approx A$ and $\epsilon\ll1/S$, which is the case we will encounter in maximal asymptotic distillation. We will skip an approximate analog to Proposition \ref{pfinu} and proceed directly to an analog to Theorem \ref{tfinmaxu}, which follows by applying the asymptotic continuity of the coherence of formation \cite{WY16} (Lemma \ref{cfcon} in Appendix \ref{appcont}) on \eqref{fer} and repeating the arguments of Theorem \ref{tfinmaxu}.
\begin{theorem}\label{tfina}
    Any coherence\n non-creating channel that deterministically maps a rank-$S$ state $\rho^\rA$, satisfying $r_{\min}\eins^\rA\le\rho^\rA\le r_{\max}\eins^\rA$, to an output $\sigma^\rM$ such that $F\left(\sigma,\Psi_M\right)\ge1-\epsilon$ for $M=A/\tilde S$ must involve coherent measurement over at least
    \begin{equation}
        \frac{\exp_2\left[C_f\left(\tau_\rho\right)-\delta\log_2A-(1+\delta)h\left(\frac\delta{1+\delta}\right)\right]}M
    \end{equation}
    elements of $\rA$'s incoherent basis, where $\tau_\rho:=\eins_\rho/S$ and $\delta:=\sqrt{2\left(1-\frac{S_{\rho,\epsilon}}{\sqrt{S\tilde S}}\right)}$
    with
    \begin{equation}
        S_{\rho,\epsilon}:=\max\left\{\frac{1-\epsilon}{r_{\max}},\,S\left(1-\frac{\epsilon}{r_{\min}}\right)\right\}.
    \end{equation}
\end{theorem}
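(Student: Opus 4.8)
The plan is to treat Theorem~\ref{tfina} as the approximate counterpart of Theorem~\ref{tfinmaxu}: essentially all the distillation-specific content is already packaged in the Bures bound \eqref{fer}, so what remains is to push the near-identity $B(\tau_\cE,\tau_\rho)\le\delta$ through the continuity of the coherence of formation and then re-run, essentially verbatim, the coherence-rank chain of Proposition~\ref{pfinu}. First I would restate \eqref{fer} as a trace-norm bound. Here $\tau_\cE:=T_\cE/\tilde S$ with $\tilde S:=\tr\,T_\cE$; as noted in the discussion preceding \eqref{fer}, $\tilde S=A/M$, so $\tau_\cE$ is a genuine density operator on $\cH^\rA$, the same space carrying $\tau_\rho:=\eins_\rho/S$. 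Writing $\delta=B(\tau_\cE,\tau_\rho)$, one has $\sqrt{F(\tau_\cE,\tau_\rho)}=1-\delta^2/2$, hence $1-F=(1-\sqrt F)(1+\sqrt F)\le2(1-\sqrt F)=\delta^2$, and the Fuchs--van de Graaf inequality gives $\tfrac12\nrm{\tau_\cE-\tau_\rho}_1\le\sqrt{1-F}\le\delta$.

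Next, since $\tau_\cE$ and $\tau_\rho$ both live on the $A$-dimensional space $\cH^\rA$ and differ in trace distance by at most $\delta$, the asymptotic continuity of $C_f$ (Lemma~\ref{cfcon}, due to \cite{WY16}) yields
\begin{equation}
    C_f(\tau_\cE)\ge C_f(\tau_\rho)-\delta\log_2 A-(1+\delta)\,h\!\left(\frac{\delta}{1+\delta}\right).
\end{equation}
(If $\delta>1$ this is vacuous and the statement is to be read with $\delta$ replaced by $\min\{\delta,1\}$; in the regime of interest $\delta<1$.)

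It then remains to repeat the chain of Proposition~\ref{pfinu} / Theorem~\ref{tfinmaxu} with $\tau_\cE$ playing the role that $\eins_\cV/S$ played there. Fixing a Kraus decomposition $\cE(\cdot)=\sum_c K_c(\cdot)K_c^\dagger$ with $K_c=\sum_m\ket m\bra{w_{c,m}}$ and writing $\ket{w_c}:=M^{-1/2}\sum_{m\in[M]}\ket{w_{c,m}}$, Observation~\ref{TgivesF} gives $T_\cE=\sum_c\proj{w_c}$; with $t_c:=\bkt{w_c}$ and $\phi_c:=\proj{w_c}/t_c$ this is a convex decomposition $\tau_\cE=\sum_c(t_c/\tilde S)\phi_c$ into normalized pure states, and $\sum_c t_c=\tr\,T_\cE=\tilde S$, so $\sum_c(t_c/\tilde S)C_r(\phi_c)\ge C_f(\tau_\cE)$. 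Subadditivity of the coherence rank under vector sums gives $r_C(\phi_c)\le\sum_{m\in[M]}r_C(\phi_{c,m})$ with $\phi_{c,m}:=\proj{w_{c,m}}/\bkt{w_{c,m}}$, so $\max_m r_C(\phi_{c,m})\ge r_C(\phi_c)/M$; combining with $\log_2 r_C(\phi_c)\ge C_r(\phi_c)$ and ``maximum dominates weighted average'',
\begin{align}
    \max_{c,m}r_C(\phi_{c,m})&\ge\frac1M\max_c r_C(\phi_c)\ge\frac1M\exp_2\!\left(\max_c C_r(\phi_c)\right)\nonumber\\
    &\ge\frac1M\exp_2\!\left(\sum_c\tfrac{t_c}{\tilde S}C_r(\phi_c)\right)\ge\frac{\exp_2 C_f(\tau_\cE)}{M}.
\end{align}
Substituting the continuity bound, and recalling that $\max_{c,m}r_C(\phi_{c,m})$ is by Definition~\ref{rcmc} exactly the requisite coherent-measurement rank, yields the asserted bound.

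I do not expect any conceptual obstruction here — the argument is a line-by-line transcription of the exact case with $\tau_\cE$ replacing $\eins_\cV/S$ — but three bookkeeping points need care: (i) fixing the constant in the Bures-to-trace-distance step so that it is $\delta$, and not $2\delta$ or $\sqrt2\,\delta$, that feeds Lemma~\ref{cfcon}; (ii) confirming that the dimension in the continuity bound is $A=\dim\cH^\rA$ (both states are supported in $\cH^\rA$), rather than $S$ or $\tilde S$; and (iii) justifying $\tilde S=\tr\,T_\cE=A/M$, i.e.\ that under the hypothesis tying $M$ to the channel the operator $\cE(\eins^\rA)$ is supported on the span of the target's incoherent components — the approximate analogue of the unitality step in the proof of Lemma~\ref{lwcu}, needed for $\tau_\cE$ to be a normalized state.
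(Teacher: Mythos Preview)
Your approach is essentially the paper's own: apply the asymptotic continuity of $C_f$ (Lemma~\ref{cfcon}) to the Bures bound \eqref{fer}, then rerun the coherence-rank chain of Proposition~\ref{pfinu}/Theorem~\ref{tfinmaxu} with $\tau_\cE$ in place of $\tau_\rho$. One simplification: Lemma~\ref{cfcon} as stated in the paper is already formulated in terms of the \emph{Bures} distance, so your Fuchs--van~de~Graaf detour through trace distance (and bookkeeping point (i)) is unnecessary\m \eqref{fer} plugs in directly; also, your point (iii) is handled straight from Observation~\ref{tbound}/Remark~\ref{remtrace} via the TP property $\cE^\dagger(\eins^\rM)=\eins^\rA$, without needing any support assumption on $\cE(\eins^\rA)$.
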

This bound is good for the case where $M$ is near-maximal, i.e.\ $\tilde S\approx S$. We leave a more careful analysis of the non-maximal approximate case for future work. Moving on, we shall derive a slight variant that performs well for nearly\n maximally-mixed $\rho$; it will simplify our task in the asymptotic case.

Our approach above was to show that $\tau_\cE$ is close to $\tau_\rho$. When $\rho$ is close to maximally-mixed on its support, we can use $\rho$ itself instead of $\tau_\rho$. Noting that $\rho\ge r_{\min}\eins_\rho$, we can modify \eqref{fer1} to
\begin{align}\label{fer2}
    F\left(\tau_\cE,\rho\right)&\ge r_{\min}\left(\tr\sqrt{\eins_\rho\tau_\cE\eins_\rho}\right)^2\nonumber\\
    &\ge\frac{r_{\min}S_{\rho,\epsilon}}{\tilde S}\nrm{\sqrt{\tau_\cE^{|\rho}}}_1^2\ge\frac{r_{\min}S_{\rho,\epsilon}^2}{\tilde S}.
\end{align}
Repeating the rest of the steps as above, we have:
\begin{alem}\label{lmm}
    An MIO channel mapping a state $\rho^\rA$, satisfying $r_{\min}\eins^\rA\le\rho^\rA\le r_{\max}\eins^\rA$, to an output $\sigma^\rM$ such that $F\left(\sigma,\Psi_M\right)\ge1-\epsilon$ for $M=A/\tilde S$ must involve coherent measurement over at least
    \begin{equation}
        \frac{\exp_2\left[C_f\left(\rho\right)-\delta\log_2A-(1+\delta)h\left(\frac\delta{1+\delta}\right)\right]}M
    \end{equation}
    elements of $\rA$'s incoherent basis, where
    \begin{equation}
        \delta:=\sqrt{2\left(1-S_{\rho,\epsilon}\sqrt{\frac{r_{\min}}{\tilde S}}\right)}
    \end{equation}
    with $S_{\rho,\epsilon}:=\left(1-\epsilon\right)/r_{\max}$.
\end{alem}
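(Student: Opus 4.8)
The plan is to replay the derivation of Theorem~\ref{tfina} essentially verbatim, the only substitution being that the normalized target-effect operator $\tau_\cE:=T_\cE/\tilde S$ is compared directly with $\rho$ instead of with $\tau_\rho$; the bound \eqref{fer2}, already established above, is the analogue of \eqref{fer1} obtained this way, and it is the tighter of the two precisely when $\rho$ is (near-)maximally mixed on its support, since we then skip the coarse-graining $\rho\mapsto\tau_\rho$. So, as the text announces, it remains only to ``repeat the rest of the steps'' past \eqref{fer2}.

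Concretely I would proceed in three steps. \emph{(i) Distance bound.} From \eqref{fer2} we have $\sqrt{F(\tau_\cE,\rho)}\ge S_{\rho,\epsilon}\sqrt{r_{\min}/\tilde S}$ with $S_{\rho,\epsilon}=(1-\epsilon)/r_{\max}$ (this being the branch of \eqref{pe1}, namely \eqref{pe11}, that was used to reach \eqref{fer2}); passing to the Bures distance then gives $B(\tau_\cE,\rho)\le\sqrt{2(1-S_{\rho,\epsilon}\sqrt{r_{\min}/\tilde S})}=\delta$. \emph{(ii) Continuity.} Feed this into the asymptotic continuity of the coherence of formation (Lemma~\ref{cfcon}, from \cite{WY16}) applied to the pair $\tau_\cE,\rho$ to obtain $C_f(\tau_\cE)\ge C_f(\rho)-\delta\log_2A-(1+\delta)h\!\left(\tfrac{\delta}{1+\delta}\right)$. \emph{(iii) Rank accounting.} For any Kraus decomposition $K_c=\sum_m\ket m\bra{w_{c,m}}$ of $\cE$ with target $\Psi_M$, Observation~\ref{TgivesF} gives $T_\cE=\sum_c\proj{w_c}$, so $\tau_\cE=\sum_cp_c\phi_c$ is a valid pure-state ensemble (with $\phi_c$ the normalization of $\proj{w_c}$ and $p_c=\bkt{w_c}/\tilde S$; here $\tilde S:=\tr T_\cE=A/M$ by trace preservation, as in Lemma~\ref{lwcu}, taking $\rM$ to be $M$-dimensional). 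Hence $C_f(\tau_\cE)\le\sum_cp_cC_r(\phi_c)\le\max_cC_r(\phi_c)\le\max_c\log_2r_C(\phi_c)$, using $\log_2r_C\ge C_r$ on pure states; and, since $\ket{w_c}=M^{-1/2}\sum_{m\in[M]}\ket{w_{c,m}}$ is a sum of at most $M$ vectors, subadditivity of the coherence rank under vector addition (point~\ref{pfp3} of Proposition~\ref{pfinu}) gives $r_C(\phi_c)\le\sum_mr_C(\phi_{c,m})\le M\max_{c,m}r_C(\phi_{c,m})$. Chaining (ii) and (iii), $M\max_{c,m}r_C(\phi_{c,m})\ge\exp_2C_f(\tau_\cE)\ge\exp_2\!\left[C_f(\rho)-\delta\log_2A-(1+\delta)h(\tfrac{\delta}{1+\delta})\right]$; recalling Definition~\ref{rcmc}, the left side is $M$ times the coherent measurement rank, so dividing by $M$ yields the claim.

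This is all routine bookkeeping rather than a genuine obstacle; the points most deserving of care are (a) the fidelity manipulation leading to \eqref{fer2} --- getting the operator monotonicity and degree-one homogeneity of $\tau\mapsto\left(\tr\sqrt{\sqrt{\tau_\cE}\,\tau\,\sqrt{\tau_\cE}}\right)^2$ right, so that $F(\tau_\cE,\rho)$ is correctly lower-bounded once $\rho$ is replaced by $r_{\min}\eins_\rho$ --- and (b) keeping track of which branch of \eqref{pe1} is in force (only \eqref{pe11} here), so that $S_{\rho,\epsilon}$ and hence $\delta$ match the statement. One should also check that Lemma~\ref{cfcon} may legitimately be invoked with a Bures-distance bound rather than a trace-distance bound; this is fine and is done in exactly the same way in the proof of Theorem~\ref{tfina}. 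In sum, Lemma~\ref{lmm} is the Theorem~\ref{tfina} argument re-run with $\rho$ in place of $\tau_\rho$, which is the natural comparison state when the input has a (near-)flat spectrum.
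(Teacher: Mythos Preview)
Your proposal is correct and follows essentially the same route as the paper: the paper derives Lemma~\ref{lmm} by taking the fidelity bound \eqref{fer2} (with $\rho$ in place of $\tau_\rho$), converting to a Bures-distance bound, applying the asymptotic continuity Lemma~\ref{cfcon}, and then invoking the rank-accounting chain from Proposition~\ref{pfinu}/Theorem~\ref{tfinmaxu}\m exactly your steps (i)--(iii). The only cosmetic discrepancy is your citation of Lemma~\ref{lwcu} for $\tr T_\cE=A/M$; this trace identity in the approximate setting is actually derived from Observation~\ref{tbound} (cf.\ Remark~\ref{remtrace}), though the content is the same.
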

\begin{remm}\label{remtrace2}
    Our analysis of approximate distillation incorporated only the overall trace condition on $T_\cE$; we have not found a way to work its parent constraint $\Delta\left(T_\cE\right)=\eins^\rA/M$ (resulting from Observation \ref{tbound}) into the reckoning. Considering the level of detail this would add, we expect its inclusion to significantly improve the relevant results (including those pertaining to asymptotic distillation, studied in the next section). On the other hand, our reliance solely on the trace condition expands the scope of these results' applicability beyond MIO (see Remark \ref{remtrace}).
\end{remm}

\section{Asymptotic distillation}\label{secas}
The asymptotic limit (see Section \ref{secclue} for a brief background) is an important window into a resource theory. The behaviour of resource-theoretic quantities in this limit is aptly compared with the classic laws of thermodynamics: asymptotic equipartition leads to certain near-universal features across diverse resource theories, such as extensivity of resource distillation yields and formation costs. We will now present some evidence suggesting the extensivity of the coherent measurement cost (quantified as the requisite number of elementary coherent gates such as the qubit Hadamard gate) of asymptotically maximal coherence distillation.

Chitambar \cite{chitambar2018dephasing} showed that the resource theory of coherence is asymptotically reversible under the class of free operations called dephasing-covariant operations (DIO), with the asymptotic rate of interconversion given by the relative entropy of coherence $C_r$. A consequence of this fact is that the rate of distillation of copies of $\Psi_2$ from those of a state $\rho$ under even the largest class of free operations\m the coherence\n non-creating operations MIO\m is bounded above by $C_r(\rho)$. Both DIO and IO, although strict subclasses of MIO, achieve this distillation rate \cite{WY16}.

For MIO achieving this maximal rate, we have the following conjectures.
\begin{conjm}\label{pan}
    Suppose a sequence $\cE_n$ of MIO channels is maximally-distilling on copies of an input $\rho$. That is, $\cE_n$ acting on $\varrho_n\equiv\rho^{\otimes n}$ achieves $F\left[\cE_n\left(\varrho_n\right),\Psi_{M_n}\right]\ge1-o(1)$ for $\log_2M_n=n\left[C_r(\rho)-o(1)\right]$. Then any Kraus operator decomposition of $\cE_n$ involves coherent measurement over at least $L_n$ elements of the input's incoherent basis, where
    \begin{equation}
        \log_2L_n\ge n\left[\ell(\rho)-o(1)\right].
    \end{equation}
    This rank bound applies both to the single most coherent measurement element and to the \emph{average} (of the rank's logarithm) over all involved measurements under the distribution induced by the input.
\end{conjm}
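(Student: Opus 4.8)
\textbf{Main line.} The plan is to realize asymptotically maximal distillation as a limit of near-maximal \emph{approximate} single-shot instances and apply Theorem \ref{tfina} (equivalently Lemma \ref{lmm}) together with the asymptotic continuity of $C_f$ (Lemma \ref{cfcon}). Given a maximally-distilling sequence $\cE_n$, first pass to a joint typical subspace of $\varrho_n\equiv\rho^{\otimes n}$: combining the spectral and letter-typical projectors (the latter relative to $\Delta(\rho)^{\otimes n}$; see Appendix \ref{apptyp}), for any $\delta'_n\to0$ with $n(\delta'_n)^2\to\infty$ one obtains a conditioned state $\widehat\varrho_n$ retaining weight $1-o(1)$ of $\varrho_n$, with $\exp_2[-n(S(\rho)+\delta'_n)]\,\eins\le\widehat\varrho_n\le\exp_2[-n(S(\rho)-\delta'_n)]\,\eins$, rank $S_n=\exp_2[n(S(\rho)\pm\delta'_n)]$, and support within $A_n=\exp_2[n(S(\Delta\rho)\pm\delta'_n)]$ incoherent labels. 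Since $F[\cE_n(\cdot),\Psi_{M_n}]=\tr(\cdot\,T_{\cE_n})$ and $\widehat\varrho_n$ carries all but an $o(1)$ fraction of $\varrho_n$, the distillation condition transfers: $F[\cE_n(\widehat\varrho_n),\Psi_{M_n}]\ge1-o(1)$. Moreover $\log_2M_n=n[C_r(\rho)-o(1)]$ forces $\widetilde S_n:=A_n/M_n=\exp_2[n(S(\rho)\pm\delta'_n\pm o(1))]\approx S_n$, so we are in the near-maximal regime. Theorem \ref{tfina} applied to $\widehat\varrho_n$ then bounds the coherent measurement rank below by $\exp_2[C_f(\tau_{\widehat\varrho_n})-\delta_n\log_2A_n-o(n)]/M_n$, with $\delta_n$ the Bures parameter of \eqref{fer}; if additionally $C_f(\tau_{\widehat\varrho_n})\ge nC_f(\rho)-o(n)$ and $\delta_n=o(1)$, then, using $nC_f(\rho)-\log_2M_n=n\ell(\rho)+o(n)$, this reads $\exp_2[n(\ell(\rho)-o(1))]$, as claimed.

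\textbf{Average version.} For the input-induced-average part of the statement, one replaces Theorem \ref{tfina} by the approximate analog of Proposition \ref{pfinu} that the excerpt leaves to future work: the same Bures bound on $\tau_{\cE_n}$ versus $\tau_{\widehat\varrho_n}$, inserted into asymptotic continuity applied to the convex-roof definition \eqref{cof2}, should yield $\sum_c(t_c/S_n)\,C_r(\phi_c)\ge nC_f(\rho)-o(n)$, whence points \ref{pfp3}--\ref{pfp4} of Proposition \ref{pfinu} give $\sum_c(t_c/S_n)\max_m\log_2r_C(\phi_{c,m})\ge n[\ell(\rho)-o(1)]$; since $\widehat\varrho_n$ is nearly maximally mixed (to the precision governing $\delta_n$), the weights $(t_c/S_n)_c$ coincide asymptotically with the input-induced distribution, as anticipated in the remark following Theorem \ref{tfinmaxu}. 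The packaging mirrors the passage from Theorem \ref{tfinmaxu} to Theorem \ref{tfina} and is routine once the single-shot inequalities are available.

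\textbf{Main obstacle.} The two provisos above are precisely where the argument stalls, and both originate in the crudeness of the approximate analysis of Section \ref{secappr}, which uses only $\tr\,T_{\cE_n}$ rather than the sharper constraint $\Delta(T_{\cE_n})=\eins^\rA/M_n$ of Observation \ref{tbound} (cf.\ Remark \ref{remtrace2}). For the requirement $\delta_n=o(1)$: from \eqref{fer}, $\delta_n^2=2\bigl(1-S_{\widehat\varrho_n,\epsilon_n}/\sqrt{S_n\widetilde S_n}\bigr)$, with $S_{\widehat\varrho_n,\epsilon_n}$ dominated by $r_{\max}^{-1}$; but concentration on the typical subspace forces $\log_2(r_{\max}/r_{\min})$ to grow faster than $\sqrt n$, so $r_{\max}^{-1}\approx\exp_2[n(S(\rho)-\delta'_n)]$ sits an exponentially large factor below $\sqrt{S_n\widetilde S_n}\approx\exp_2[nS(\rho)]$, whence $\delta_n\to\sqrt2$ and $\delta_n\log_2A_n=\Theta(n)$ swamps the bound. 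The first proviso fails for the same reason---$\varrho_n$ is ``flat in the exponent'' but not trace-close to any genuinely flat rank-$\exp_2[nS(\rho)]$ state, so $C_f(\tau_{\widehat\varrho_n})$ cannot be pinned to $nC_f(\rho)$ by asymptotic continuity, and the crude lower bound $C_f(\tau_{\widehat\varrho_n})\ge C_r(\tau_{\widehat\varrho_n})\approx nC_r(\rho)$ leaves only the trivial conclusion $\log_2L_n\ge-o(n)$. Resolving this seems to need one of: (a) a sharpened approximate form of Lemmas \ref{TWisV} and \ref{lwcu} that exploits $\Delta(T_{\cE_n})=\eins/M_n$ to force $T_{\cE_n}\approx\eins_{\cV_n}+(\text{small})$ without assuming $\widehat\varrho_n$ flat; (b) a multi-scale or smoothing argument avoiding a single typicality scale; or (c) the generalized quantum Stein's lemma \cite{hayashi2024generalized,lami2024solution} as a reorganizing principle---by Observation \ref{tbound}, $\tr(\sigma\,T_{\cE_n})\le1/M_n=\exp_2[-n(C_r(\rho)-o(1))]$ for \emph{every} state $\sigma$ while $\tr(\varrho_n\,T_{\cE_n})\ge1-o(1)$, so $T_{\cE_n}$ is a Stein-optimal test of $\varrho_n$ against the incoherent states, and the coherent measurement cost is bounded below by the least coherence of such optimal tests; this is a clean reformulation, but lower-bounding that coherence meets the same flatness-versus-typicality tension, which is why the statement remains a conjecture.
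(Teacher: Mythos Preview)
Your proposal matches the paper's own proof sketch closely: both restrict to joint (spectral and letter) typical projectors, invoke the approximate single-shot bound together with the asymptotic continuity of $C_f$, and then identify as the fatal obstruction that the typicality parameters $\delta$ cannot decay faster than $n^{-1/2}$, so the $n\delta$ exponents in the fidelity estimate blow up and the Bures parameter fails to vanish. You also flag the same prospective remedies\m exploiting the full constraint $\Delta(T_{\cE_n})=\eins/M_n$ rather than just its trace, and the generalized Stein's lemma reformulation\m that the paper discusses around Observation \ref{obstein} and Remark \ref{remtrace2}.

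One small technical difference: you route through Theorem \ref{tfina}, which bounds in terms of $C_f(\tau_{\widehat\varrho_n})$ and therefore incurs the extra burden of pinning $C_f(\tau_{\widehat\varrho_n})$ to $nC_f(\rho)$ (your ``first proviso''). The paper instead uses Lemma \ref{lmm}, whose bound is in terms of $C_f(\widehat\varrho_n)$ itself; since $C_f$ is additive under tensor products, $C_f(\varrho_n)=nC_f(\rho)$ exactly, so this particular obstacle is sidestepped. That said, the paper also notes (in the post-sketch discussion) that even had the proof gone through with $\tau_n\propto\eins_{\varrho_n^{\delta_\rS}}$, one would still face the same flatness-versus-typicality tension you describe\m so your identification of the core difficulty is on target, and the residual difference between the two routes is cosmetic relative to the main obstruction.
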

\begin{conjm}\label{pas}
    For any $\rho$, there exists a sequence $\cE_n$ of maximally-distilling \emph{IO} channels with all but an asymptotically-vanishing fraction of measurements \emph{individually} attaining the bound of Conjecture \ref{pan}, as well as attaining it on average.
\end{conjm}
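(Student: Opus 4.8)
The plan is to construct the sequence $\cE_n$ explicitly, scaffolded on the IO distillation protocol of Winter and Yang \cite{WY16} and refined so that the row vectors of its Kraus operators carry coherence rank matching the claimed scaling. The organizing principle is the target-effect correspondence of Observation \ref{TgivesF}: building an IO subchannel that distills $\Psi_{M_n}$ from $\varrho_n$ is equivalent to exhibiting a convex decomposition of its target effect into rank-one terms $\proj{w_c}$, where (by the approximate analysis culminating in \eqref{fer}, and by Lemma \ref{lwcu} in the exact idealization) that effect must lie close to $\tau_{\varrho_n}:=\eins_{\varrho_n}/S_n$; moreover, by the disjoint-support structure \eqref{genio} of IO Kraus operators and the general recipe for maximally-distilling IO channels (Observation \ref{consmax}), each $\ket{w_c}\in\mathrm{supp}\,\varrho_n$ can be ``unpooled'' into rows $\ket{w_{c,m}}$ of a valid IO Kraus operator $K_c=\sum_m\ket m\bra{w_{c,m}}$ whose coherent measurement rank is governed by the coherence rank of $\ket{w_c}$. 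So the whole problem reduces to constructing a convex decomposition of $\varrho_n$ whose components have coherence rank matching $\exp_2[n(\ell(\rho)+o(1))]$ --- not merely worst-case, but for all but an $o(1)$-fraction of components under the input-induced distribution, and on average.

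First I would build a \emph{good symmetric decomposition} $\varrho_n=\sum_c q_c\psi_c$ with average coherence $\sum_c q_c C_r(\psi_c)$ approaching $C_f(\varrho_n)=nC_f(\rho)+o(n)$ (achievability of the coherence-of-formation bound \eqref{cof2}), and additionally \emph{concentrated}: $C_r(\psi_c)=nC_f(\rho)+o(n)$ for a $1-o(1)$ fraction of components. Concentration should follow from the permutation symmetry of $\varrho_n$ together with asymptotic typicality, by symmetrizing an essentially single-copy-style optimal decomposition over the symmetric group (or over type classes) so that per-component deviations of the coherence from its mean are exponentially suppressed. Next, I would turn this into a candidate IO subchannel $\cF_n$ by (i) projecting each $\psi_c$ onto its coherence-typical subspace, further flattening its incoherent-basis amplitude profile; (ii) truncating --- discarding both whole components and within-component weight --- above a coherence-rank threshold of order $M_n\exp_2[n(\ell(\rho)+o(1))]$ for the $\ket{w_c}$'s, which by the concentration of step (i) removes only an $o(1)$ fraction of total weight; and (iii) ``pooling'' the classical labels in the manner of Winter and Yang so that the surviving, truncated components furnish the rows $\ket{w_{c,m}}$ of IO Kraus operators for a subchannel $\cF_n$ whose target effect reproduces the decomposition and whose maximal distillate is $\Psi_{M_n}$ with $\log_2 M_n=nC_r(\rho)-o(n)$. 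By construction the coherent measurement rank of $\cF_n$ then obeys the bound of Conjecture \ref{pan}, both worst-case and on average.

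It would then remain to establish two finishing properties. \emph{Asymptotic trace preservation}: the filtering and truncation delete only a vanishing fraction of the trace, so $\tr\,\cF_n(\varrho_n)=1-o(1)$; this rests again on the concentration estimate plus standard AEP bounds on the typical-subspace projections. And the \emph{IO completion}: one completes $\cF_n$ to a genuine channel $\cE_n=\cF_n+\cG_n$ with a complementary subchannel $\cG_n$ that is \emph{itself} IO, absorbing the residual $o(1)$ trace deficit --- for instance by routing the discarded weight through incoherent ``flags'' into a fixed incoherent output, which affects neither the distillate to leading order nor the coherent measurement cost bound.

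The hard part will be making the steps mutually compatible. The symmetrization that flattens the per-component coherence tends to spread weight across many classical labels, which is in tension with the rigid disjoint-support condition \eqref{genio} underlying the IO ``pooling,'' and with the truncation; at the same time, certifying that the distillate is genuinely \emph{maximal} ($\log_2 M_n=nC_r(\rho)-o(n)$) forces the target effect's trace to track $A_n/M_n$ very tightly, which is sensitive to the second-order fluctuations of asymptotic typicality that the crude estimate \eqref{fer} does not control (cf.\ Remark \ref{remtrace2}). In short, the typicality subtleties already obstructing Conjecture \ref{pan} reappear, now compounded by the combinatorial delicacy of arranging the filtered and truncated components into a legitimate IO Kraus decomposition and by the separate task of exhibiting an \emph{explicitly IO} completion $\cG_n$ --- which is why we place less confidence in this conjecture than in Conjecture \ref{pan}.
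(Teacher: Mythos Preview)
Your proposal takes essentially the same approach as the paper's own sketch (Appendix \ref{psc2}): both scaffold on Winter--Yang via Observation \ref{consmax}, build the decomposition from typical-sequence tensor products of a single-copy $C_f$-optimal decomposition of $\rho$, then filter, typicalize, and truncate to concentrate the per-component coherence rank near $n\ell(\rho)$, and finally complete the resulting IO subchannel to a channel. You also correctly identify the obstacles the paper itself runs into\m the typicality fluctuations and the difficulty of maintaining the block-structure constraint $M_n\eins_{\cI_m}T_n\eins_{\cI_m}\le\eins_{\cI_m}$ of Observation \ref{consmax} after truncation\m which is precisely why the statement remains a conjecture rather than a theorem.
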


\subsection{Towards bounding the asymptotic cost}\label{secgenn}
We will now attempt to prove Conjecture \ref{pan}, essentially through a formalization of our crude typicality-based statements in Section \ref{secclue}. The crudeness, when examined deeper, turns out unfortunately to conceal some finer features of asymptotic typicality that thwart our efforts at completing the proof. Nevertheless, we hope that the following account of our proof attempt elicits a more complete treatment from the community, thereby either proving or refuting our conjecture.
\begin{remm}\label{boxplus}
    In the following, we will make use of the triangle inequality property of the \emph{Fubini\n Study metric} (also called the \emph{fidelity angle}) $\theta(\rho,\sigma):=\arccos\sqrt{F(\rho,\sigma)}$; namely,
    \begin{equation}
        \theta(\rho,\sigma)+\theta(\sigma,\tau)\ge\theta(\tau,\rho).
    \end{equation}
    In our calculations, the approximation parameters will be associated with $1-F(\cdot,\cdot)$, whereby they are the \emph{squared sines} of the associated fidelity angles. For convenience in applying the angle triangle inequality in their terms, we will use the shorthand
    \begin{align}
        \epsilon\boxplus\delta:=&\sin^2\left(\arcsin\sqrt\epsilon+\arcsin\sqrt\delta\right)\nonumber\\
        =&\left(\sqrt{\epsilon(1-\delta)}+\sqrt{\delta(1-\epsilon)}\right)^2.
    \end{align}
    Note that $\epsilon\boxplus\delta\le\left(\sqrt{\epsilon}+\sqrt{\delta}\right)^2\le4\max\{\epsilon,\delta\}$ in general; but if $\delta\to0$ while $\epsilon$ is held fixed, $\epsilon\boxplus\delta\to\epsilon$.
\end{remm}
\begin{proof}[Proof sketch for Conjecture \ref{pan}]\renewcommand{\qedsymbol}{\framebox{\tiny?}}
    We shall build on Lemma \ref{lmm}, with\footnote{We use quotes to indicate a variable from Lemma \ref{lmm} whose analog here has a different notation, and whose Lemma \ref{lmm} notation may here mean something different.}${}^,$\footnote{Any $\epsilon_n$'s we introduce shall be understood to be asymptotically-vanishing as $n\to\infty$ and the relevant AEP-related $\delta$'s $\to0$.}
    \begin{equation}
        \text{``}M\text{''}\equiv M_n=\exp_2\left(n\left[C_r(\rho)-\epsilon_n^{(0)}\right]\right).
    \end{equation}
    To estimate the other relevant parameters, we will use the quantum asymptotic equipartition property (AEP) reviewed in Appendix \ref{apptyp}.

    For some $\delta_\rS>0$, let $\varrho_n^{\delta_\rS}$ denote the unnormalized projection of $\varrho_n$ onto its $\delta_\rS$\n weakly-typical subspace, and $\varrho_n^{|\delta_\rS}$ the normalized version thereof; the analog to ``$\rho$'' will be $\varrho_n^{|\delta_\rS}$. Applying Lemma \ref{lqaep}, $\tr\varrho_n^{\delta_\rS}\ge1-\delta_\rS$ and $2^{-n\left[S(\rho)+\delta_\rS\right]}\eins\le\varrho_n^{\delta_\rS}\le2^{-n\left[S(\rho)-\delta_\rS\right]}\eins$, so that
    \begin{equation}\label{rhosbounds}
        2^{-n\left[S(\rho)+\delta_\rS\right]}\eins\le\varrho_n^{|\delta_\rS}\le\frac{2^{-n\left[S(\rho)-\delta_\rS\right]}\eins}{1-\delta_\rS}.
    \end{equation}
    The lower bound above functions as ``$r_{\min}$''; we will presently define ``$r_{\max}$'', slightly differently from how we did in Lemma \ref{lmm}. Meanwhile, let us apply typicality on the classical sequences $\vect a\equiv\left(a_1\dots a_n\right)$ formed by the incoherent basis labels occurring in $\varrho_n$, which are distributed according to $\Delta\left(\varrho_n\right)\equiv\left[\Delta(\rho)\right]^{\otimes n}$. For any $\delta_\rA>0$ we can identify the $\delta_\rA$\n weakly-typical subalphabet $\cA_n^{\delta_\rA}$. The corresponding system $\rA_n^{\delta_\rA}$ will be the analog to Lemma \ref{lmm}'s ``$\rA$''; by the AEP, its dimensionality
    \begin{equation}\label{andelta}
        A_n^{\delta_\rA}\le\exp_2\left[n\left(S\left[\Delta(\rho)\right]+\delta_\rA\right)\right],
    \end{equation}
    yielding the bound
    \begin{equation}
        \text{``}\tilde S\text{''}=A_n^{\delta_\rA}/M_n\le\exp_2\left(n\left[S(\rho)+\epsilon_n^{(0)}+\delta_\rA\right]\right).
    \end{equation}
    We will now work towards bounding ``$S_{\rho,\epsilon}$''.
    
    Let $\varrho_n^{\delta_\rA}$ denote the unnormalized projection of $\varrho_n$ on this subalphabet, and $\varrho_n^{\setminus\delta_\rA}$ that on the complement thereof, so that $\tr\left(\varrho_n^{\delta_\rA}+\varrho_n^{\setminus\delta_\rA}\right)=1$. We shall extend this superscript notation to projections of any operator. Furthermore, we will use superscripts combining ``$\delta_\rS$'', ``$\delta_\rA$'', and ``$\setminus$'' to denote the results of successive projections from the inside out. For example, ``$\setminus\delta_\rS,\delta_\rA$'' will denote a projection on the complement of the $\delta_\rS$-typical subspace followed by one on the $\delta_\rA$-typical subalphabet. As before, we will denote the corresponding normalized density operators by preceding the superscripts with ``$|$''. With this notational arrangement, first note that $\varrho_n^{\delta_\rS}+\varrho_n^{\setminus\delta_\rS}=\varrho_n$, since these projections are defined via $\varrho_n$'s eigenspaces. Note also that the $\delta_\rA$-projections commute with the dephasing channel $\Delta$, so that $\tr\varrho_n^{\setminus\delta_\rA}=\tr\Delta\left(\varrho_n^{\setminus\delta_\rA}\right)=\tr\left[\Delta\left(\varrho_n\right)\right]^{\setminus\delta_\rA}\le\delta_\rA$, the last inequality following from the properties of the $\delta_\rA$-typical subalphabet. Using these facts,
    \begin{align}\label{tdsda}
        \tr\left(\varrho_n^{\delta_\rS,\setminus\delta_\rA}+\varrho_n^{\setminus\delta_\rS,\setminus\delta_\rA}\right)=\tr\varrho_n^{\setminus\delta_\rA}&\le\delta_\rA\nonumber\\
        \Rightarrow\tr\varrho_n^{\delta_\rS,\setminus\delta_\rA}&\le\delta_\rA\nonumber\\
        \Rightarrow\tr\varrho_n^{\delta_\rS,\delta_\rA}=\tr\left(\varrho_n^{\delta_\rS}-\varrho_n^{\delta_\rS,\setminus\delta_\rA}\right)&\ge1-\delta_\rS-\delta_\rA.
    \end{align}
    As a step towards determining ``$S_{\rho,\epsilon}$'', we shall now show that $\varrho_n$ is close to the normalized version $\varrho_n^{|\delta_\rS,\delta_\rA}$ of the operator in the last line above.
    \begin{align}\label{rnrndsa}
        F\left(\varrho_n,\varrho_n^{|\delta_\rS,\delta_\rA}\right)&=\left(\tr\varrho_n^{\delta_\rS,\delta_\rA}\right)^{-1}F\left(\varrho_n,\varrho_n^{\delta_\rS,\delta_\rA}\right)\nonumber\\
        &=\left(\tr\varrho_n^{\delta_\rS,\delta_\rA}\right)^{-1}F\left(\varrho_n^{\delta_\rA},\varrho_n^{\delta_\rS,\delta_\rA}\right)\nonumber\\
        &\ge\left(\tr\varrho_n^{\delta_\rS,\delta_\rA}\right)^{-1}\left(\frac{\tr\varrho_n^{\delta_\rS,\delta_\rA}}{\tr\varrho_n^{\delta_\rA}}\right)^2\nonumber\\
        &\ge1-\delta_\rS-\delta_\rA,
    \end{align}
    where we again used $\varrho_n=\varrho_n^{\delta_\rS}+\varrho_n^{\setminus\delta_\rS}$, this time to apply point \ref{fidcon} of Appendix \ref{appfid} on $\varrho_n^{\delta_\rA}=\varrho_n^{\delta_\rS,\delta_\rA}+\varrho_n^{\setminus\delta_\rS,\delta_\rA}$. In a minor variation of the method of Lemma \ref{lmm}, we will make $\nrm{\varrho_n^{|\delta_\rS,\delta_\rA}}_\infty$ the analog of ``$r_{\max}$''\m which is possible since we identify Lemma \ref{lmm}'s ``$\rA$'' with $\rA_n^{\delta_\rA}$. To bound this quantity, we first note that
    \begin{equation}
        \varrho_n^{\delta_\rS,\delta_\rA}\le\varrho_n^{\delta_\rS}\le\exp_2\left(-n\left[S(\rho)-\delta_\rS\right]\right)
    \end{equation}
    by the quantum AEP. Since $\tr\varrho_n^{\delta_\rS,\delta_\rA}\ge1-\delta_\rS-\delta_\rA$ as noted in \eqref{tdsda}, this implies
    \begin{equation}\label{rmax}
        \varrho_n^{|\delta_\rS,\delta_\rA}\le\frac{\exp_2\left(-n\left[S(\rho)-\delta_\rS\right]\right)}{1-\delta_\rS-\delta_\rA}.
    \end{equation}
    By assumption, the sequence $\cE_n$ of channels achieves $F\left[\cE_n\left(\varrho_n\right),\Psi_{M_n}\right]\ge1-\epsilon_n$. By \eqref{rnrndsa}, $F\left(\varrho_n,\varrho_n^{|\delta_\rS,\delta_\rA}\right)\ge1-\delta_\rS-\delta_\rA$, so that (by contractivity and the fidelity-angle triangle inequality) $F\left[\cE_n\left(\varrho_n^{|\delta_\rS,\delta_\rA}\right),\Psi_{M_n}\right]\ge1-\epsilon_n^{(1)}$ with $\epsilon_n^{(1)}:=\epsilon_n\boxplus\left(\delta_\rS+\delta_\rA\right)$. The associated target effects $T_n\equiv T_{\cE_n}$ must therefore satisfy $\tr\left(T_n\varrho_n^{|\delta_\rS,\delta_\rA}\right)\ge1-\epsilon_n^{(1)}$. Using an argument similar to Lemma \ref{lmm}'s based on \eqref{rmax},
    \begin{align}\label{srhoe}
        \tr\;T_n^{\delta_\rA,\delta_\rS}&\ge\frac{1-\epsilon_n^{(1)}}{\nrm{\varrho_n^{|\delta_\rS,\delta_\rA}}_\infty}\nonumber\\
        &\ge\left[1-\epsilon_n^{(1)}\right]\left(1-\delta_\rS-\delta_\rA\right)2^{n\left[S(\rho)-\delta_\rS\right]}.
    \end{align}
    This bound is analogous to ``$S_{\rho,\epsilon}$''. Note the order of projections in $T_n^{\delta_\rA,\delta_\rS}$: we are essentially using $T_n^{\delta_\rA}$ as ``$T_\cE$'' and $\varrho_n^{\delta_\rS}$ as ``$\rho$'', whereby $T_n^{\delta_\rA,\delta_\rS}$ plays the role of ``$T_\cE^\rho$''. Putting the pieces together as in \eqref{fer2},
    \begin{widetext}
    \begin{align}\label{frrv}
        F\left(\tau_n^{|\delta_\rA},\varrho_n^{|\delta_\rS}\right)\ge\left[1-\epsilon_n^{(1)}\right]^2\left(1-\delta_\rS-\delta_\rA\right)^2\exp_2\left(-n\left[\epsilon_n^{(0)}+3\delta_\rS+\delta_\rA\right]\right)=:1-\epsilon_n^{(2)}.
    \end{align}
    \end{widetext}
    We have now arrived at our primary obstacle: for this $\epsilon_n^{(2)}$ to be a vanishing sequence, we would need the exponents to all vanish. The $n\epsilon_n^{(0)}$ is obviously menacing, as the definition of maximal distillation makes no stipulation whatsoever on how fast the $\epsilon_n^{(0)}$ must decay. As for the $n\delta$'s: naively, we might expect that eventually taking the limits $\delta_\rS,\delta_\rA\to0$ would at least get rid of these exponents. However, these parameters cannot be taken to zero \emph{independently of $n$}: for any given $\delta_\rS$, the bounds in \eqref{rhosbounds} are guaranteed only ``for large enough $n$''; likewise for $\delta_\rA$ and \eqref{andelta}. Indeed, the requisite $n$ to validate these bounds scales as $\delta^{-2}$, whereby $\exp_2\left(-n\delta\right)\sim\exp_2\left(-\sqrt n\right)$. This puts a definitive end to any prospects of a bound like \eqref{frrv} succeeding.
    
    Let us now pretend we did not encounter the above problem, and continue our proof sketch as if $\epsilon_n^{(2)}\in o(1)$. Using a projection-related property of the fidelity (point \ref{fidproj} of Appendix \ref{appfid}), $F\left(\varrho_n,\varrho_n^{|\delta_\rS}\right)=\tr\varrho_n^{\delta_\rS}\ge1-\delta_\rS$. Combining this with \eqref{frrv} through the angle triangle inequality,
    \begin{equation}\label{tndrn}
        F\left(\tau_n^{|\delta_\rA},\varrho_n\right)\ge1-\epsilon_n^{(3)},
    \end{equation}
    where $\epsilon_n^{(3)}:=\epsilon_n^{(2)}\boxplus\delta_\rS$. Finally, ``$\delta$'', for which we can conveniently use the same symbol, is given by
    \begin{equation}
        \delta:=\sqrt{2\left(1-\sqrt{1-\epsilon_n^{(3)}}\right)}.
    \end{equation}
    Thus, the coherent measurement rank for the action of $\cE_n$ on $\rA_n^{\delta_\rA}$ is no less than
    \begin{align}
        L_n&:=M_n^{-1}\exp_2\left[C_f\left(\tau_n^{|\delta_\rA}\right)\right]\nonumber\\
        &\ge\frac{2^{C_f\left(\varrho_n\right)-\delta\log_2A_n-(1+\delta)h\left(\frac\delta{1+\delta}\right)}}{M_n},
    \end{align}
    where $A_n\equiv A^n$, with $A:=\mathrm{rank}\Delta(\rho)$, is the dimensionality of the entire Hilbert space where $\varrho_n$ acts. By the additivity of the coherence of formation under tensor products, $C_f\left(\varrho_n\right)=nC_f\left(\rho\right)$, while $\log_2M_n=n\left[C_r\left(\rho\right)-\epsilon_n^{(0)}\right]=n\left[C_r\left(\rho\right)-o(1)\right]$. We now take the limit as both $\delta_\rS$ and $\delta_\rA$ approach zero, whereupon $\epsilon_n^{(1)}\to\epsilon_n$, $\epsilon_n^{(2)}\le2\epsilon_n+n\epsilon_n^{(0)}$, and $\epsilon_n^{(3)}\le2\epsilon_n+n\epsilon_n^{(0)}$ as well. Thus, if we resolve to ignore the $n\epsilon_n^{(0)}$, we get $\delta\le\sqrt{2\left(1-\sqrt{1-2\epsilon_n}\right)}\le\sqrt{2\epsilon_n}\in o(1)$, and so
    \begin{align}
        L_n&\ge\exp_2\left(n\left[C_f\left(\rho\right)-C_r(\rho)-o(1)\right]-o[1]\right)\nonumber\\
        &\ge\exp_2\left(n\left[\ell\left(\rho\right)-o(1)\right]\right).
    \end{align}
    Finally, in light of \eqref{tndrn}, we conclude that the bound applies asymptotically also to the \emph{average} logarithmic measurement coherence rank under the distribution induced by the input $\varrho_n$.
\end{proof}
We attempted to prove the conjecture via the asymptotic continuity of the coherence of formation (Lemma \ref{cfcon} in Appendix \ref{appcont}). As such, we set ourselves the tall order of showing that the normalized target effect $\tau_n$ is close in fidelity to $\varrho_n$. In retrospect, the expectation that factors as large as $2^{nS(\rho)}$ mutually cancel to leave something close to 1 was naively optimistic: the normalization involves exponential factors with the AEP-related $\delta$'s, not to mention the $\epsilon_n^{(0)}$ from $M_n$ that we already had to contend with. Indeed, even if $\tau_n$ were exactly proportional to $\eins_{\varrho_n^{\delta_\rS}}$\m which is the most complete characterization we got in the non-asymptotic case\m we would then be required to show that the latter is close to $\varrho_n^{\delta_\rS}$. While AEP does guarantee that $\varrho_n^{\delta_\rS}$ gets rather flat in its spectrum, it is not nearly flat enough to have high fidelity with the maximally-mixed state on its support.

A more direct approach might try to avoid having to precisely trade exponentially large factors. For example, there might be a modified ``asymptotic continuity'' property that holds for density operators that are not necessarily close in fidelity. Of course there can be no such property for fully general pairs of operators $\sigma^{\rA_n}$ and $\tau^{\rA_n}$. But in our context, the operators come from sequences of $T_n\le\eins$ and $\varrho_n=\rho^{\otimes n}$ that satisfy $\tr\left(T_n\varrho_n\right)\to1$; the $T_n$ also satisfy $\tr T_n^{\delta_\rA}\approx A_n^{\delta_\rA}/M_n$ (which we used) and the stronger condition (following from Observation \ref{tbound}) $\bra{\vect a}T_n\ket{\vect a}=M_n^{-1}$ for all $\vect a\in\cA_n$ (which could yet be incorporated, though we could not find a way to).

This latter property reveals an interesting connection between the target effect and the generalized quantum Stein's lemma, an important result in quantum statistics and the general overarching theory of resource theories. The lemma was initially believed to have been proved by Brand\~ao and Plenio \cite{brandao2010generalization}, but a gap in their proof was discovered by Berta \textit{et al.}\ \cite{berta2023gap}, who were able to prove the lemma by alternate means for a subclass of resource theories, including those of coherence. Remarkably, Hayashi and Yamasaki \cite{hayashi2024generalized} and Lami \cite{lami2024solution} independently proved the lemma recently in full generality. The connection to our problem, evident from the previous paragraph's discussion, is as follows:
\begin{obs}\label{obstein}
    Given an MIO sequence $\cE_n$ that asymptotically distills maximally from copies of an input $\rho$, the sequence of associated target effects $T_n$ saturates the type-II error exponent scaling bound set by the generalized quantum Stein's lemma. In fact, $\tr\left(\sigma T_n\right)=M_n^{-1}$ identically for all free states $\sigma$, with $M_n$ saturating the scaling bound. Consequently, the coherent measurement cost (quantified by any measure) of maximal asymptotic distillation is bounded below by the infimum of the values attained by the measure over the set of hypothesis-testing effect sequences saturating the lemma's bound.
\end{obs}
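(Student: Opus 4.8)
\emph{Proof proposal.} The plan is to observe that the target-effect sequence $T_n\equiv T_{\cE_n}^{\Psi_{M_n}}$ is itself a sequence of hypothesis-testing effects for discriminating the i.i.d.\ null hypothesis $\varrho_n=\rho^{\otimes n}$ from the alternative family $\cF_n$ of incoherent (diagonal) states on $\rA^{\otimes n}$, to show that this sequence is \emph{optimal} in the sense of the generalized quantum Stein's lemma, and then to read off the stated cost bound from the target-effect lower bounds already established in Sections \ref{seccmte}--\ref{secas}. Essentially all the analytic content is outsourced to the cited lemma; the rest is bookkeeping.

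First I would check the two errors of the test $\{T_n,\eins-T_n\}$ (valid since $0\le T_n\le\eins$ by Definition \ref{defTE}). For the type-I error, Eq.\ \eqref{trkt} and the purity of $\Psi_{M_n}$ give $\tr\left(T_n\varrho_n\right)=\bra{\Psi_{M_n}}\cE_n\left(\varrho_n\right)\ket{\Psi_{M_n}}=F\left[\cE_n\left(\varrho_n\right),\Psi_{M_n}\right]\ge1-o(1)$, so $\tr\left[\left(\eins-T_n\right)\varrho_n\right]\to0$. For the type-II error, since the output register of a channel distilling $\Psi_{M_n}$ may be taken to carry exactly $M_n$ dimensions (as in Lemma \ref{lwcu}), Observation \ref{tbound} gives, for every incoherent $\sigma=\sum_ap_a\proj a\in\cF_n$, the identity $\bra aT_n\ket a=\frac1{M_n}\sum_m\bra m\cE_n\left(\proj a\right)\ket m=M_n^{-1}$ and hence $\tr\left(\sigma T_n\right)=M_n^{-1}$ regardless of $\sigma$. (If one insists on a larger output register, Observation \ref{tbound} still yields $\tr\left(\sigma T_n\right)\le M_n^{-1}$, which together with the strong-converse part of the lemma suffices; alternatively, post-composing an IO ``truncation'' that collapses everything outside $\mathrm{supp}\,\Psi_{M_n}$ onto a single basis vector restores the identity while neither raising the coherent measurement cost nor lowering the fidelity.) Therefore $\max_{\sigma\in\cF_n}\tr\left(\sigma T_n\right)=M_n^{-1}=\exp_2\left(-n\left[C_r(\rho)-o(1)\right]\right)$, so the type-II error exponent of $T_n$ equals $C_r(\rho)-o(1)$.

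Next I would identify $C_r(\rho)$ as the optimal exponent. Because $\Delta$ is multiplicative over tensor products and the Umegaki relative entropy is additive, $\min_{\sigma_n\in\cF_n}S\left(\varrho_n\|\sigma_n\right)=S\left(\varrho_n\|\Delta\left(\varrho_n\right)\right)=nC_r(\rho)$, so the regularized relative entropy of coherence is $C_r(\rho)$; and the incoherent-state sets $\cF_n$ obey the structural hypotheses of the generalized quantum Stein's lemma (convexity, closedness, stability under tensor products and permutations, containing a full-rank state). The lemma \cite{hayashi2024generalized,lami2024solution} then says that no test sequence with vanishing type-I error achieves a type-II exponent exceeding $C_r(\rho)$, while some does; combined with the previous paragraph, $T_n$ is such an optimal sequence --- it saturates the scaling bound, with $M_n$ itself saturating it. Finally, every lower bound on the coherent measurement cost obtained in this paper --- in terms of the coherence rank, of $M_n^{-1}\exp_2 C_f$ applied to the normalized target effect, or of the derived irretrievable-coherence quantity (Propositions \ref{pfinu} and \ref{pfin}, Theorem \ref{tfina}, Lemma \ref{lmm}) --- has the form ``any implementation of $\cE_n$ involves coherent measurement over at least $Q(T_n)$ elements of the incoherent basis'' for a fixed coherence quantifier $Q$ of the associated target effect. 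Since $T_n$ is thereby forced into the class $\cS$ of Stein-saturating hypothesis-testing effect sequences, $Q(T_n)\ge\inf_{\{Q_n\}\in\cS}Q\left(\{Q_n\}\right)$; taking the infimum over all maximally-distilling MIO sequences gives the claimed bound.

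I expect there to be no deep obstacle --- the result is rightly labelled an ``observation'' --- but two points need care. The first is the exact identity $\tr\left(\sigma T_n\right)=M_n^{-1}$: it relies on the output register carrying no dimensions beyond $\mathrm{supp}\,\Psi_{M_n}$, so one must either adopt that convention at the outset or fall back on the inequality $\tr\left(\sigma T_n\right)\le M_n^{-1}$ together with the strong converse. The second is the scope of ``quantified by any measure'': the final reduction is literally valid only for those coherence quantifiers for which a ``$\mathrm{cost}\ge Q(T_n)$''-type bound has actually been proven here, so a fully general statement would first require establishing the analogous target-effect bound for the measure in question.
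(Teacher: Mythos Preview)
Your proposal is correct and follows essentially the same route as the paper: the paper presents this observation as a direct consequence of the preceding discussion, noting that $\tr\left(T_n\varrho_n\right)\to1$ and that Observation \ref{tbound} forces $\bra{\vect a}T_n\ket{\vect a}=M_n^{-1}$ for all incoherent basis labels, whence $\tr\left(\sigma T_n\right)=M_n^{-1}$ for every free $\sigma$ and the Stein connection follows. Your write-up is simply a more explicit unpacking of this, and your two caveats (on the output-dimension convention behind the exact identity, and on the scope of ``any measure'') are well-taken refinements that the paper leaves implicit.
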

All this additional structure could admit a modified notion of asymptotic continuity. As a signature of the special structure, we can observe from our proof sketch that $F\left(\tau_n,\varrho_n\right)$ can only decay sub-exponentially, which is not a generic property among pairs of state sequences. Unfortunately, this property alone is not enough to ensure that $C_f\left(\tau_n\right)$ differs at most sub-extensively from $C_f\left(\tau_n\right)$: a simple counterexample is $\tau_n=\epsilon_n\varrho_n+\left(1-\epsilon_n\right)\sigma_n$ for any coherent $\rho$, incoherent $\sigma_n$, and sub-exponentially\n decaying $\epsilon_n$. All the same, considered together with the other properties discussed above, it lends credence to our conjecture.

Our problem might also benefit from the tools and methods of the smooth entropy calculus \cite{tomamichel2015quantum}. These tools are often used to derive AEPs for entropy-like quantities (see, e.g.,\ \cite{tomamichel2009fully}). An example relevant to our problem is the AEP
\begin{equation}\label{zhaoaep}
    \lim_{\epsilon\to0^+}\lim_{n\to\infty}\frac{\log_2r_C^\epsilon\left(\rho^{\otimes n}\right)}n=C_f(\rho),
\end{equation}
due to Zhao \textit{et al.} \cite{zhao2018one}; here, $r_C^\epsilon$ is a so-called \emph{smoothed} variant of $r_C$, with $\epsilon$ a real smoothing parameter. This property is reflected in one of the ``flattening'' steps we are able to accomplish (point \ref{flc}) in our construction in Section \ref{howto}. Indeed, $C_f(\rho)$ is a good candidate for a universal AEP limit for all reasonable coherence measures based on optimizations over convex decompositions. The bounds encountered in our problem are all measures of this kind, though evaluated not on the input state $\varrho_n\equiv\rho^{\otimes n}$ itself, but instead on the maximally mixed state $\tau_{\varrho_n}$ on its support (or variations thereof). Yet, considering that $\varrho_n$ flattens out spectrally as $n$ grows, it is not unreasonable to expect $C_f(\rho)$ to emerge as the limiting rate even for measures evaluated on $\tau_{\varrho_n}$.

A possible way to put such cases on an equal footing with those like \eqref{zhaoaep} is to use a normalized version of ${\varrho_n}^\alpha$, for $\alpha\in\bbR$, as the argument; in this landscape, $\tau_{\varrho_n}\propto{\varrho_n}^0$. We could suitably define smoothed counterparts of the measures; Definition \ref{defcp} is an example of a possible way to effect the smoothing. It is plausible then that there is an AEP of the form
\begin{equation}
    \lim_{\epsilon\to0^+}\lim_{n\to\infty}\frac1nC^\epsilon\left(\frac{{\varrho_n}^\alpha}{\tr{\varrho_n}^\alpha}\right)=C_f(\rho)
\end{equation}
for a whole class of measures $C$. If true, this could give us a way of relating our $r_C\left(\tau_n\right)$ with $C_f\left(\rho\right)$, thereby potentially proving Conjecture \ref{pan}.

Of course, it is a different matter that all such quantities\m thanks to their involving optimizations over all convex decompositions\m would likely be hard to compute. Nevertheless, there might still be a way to \emph{formally} establish an AEP and relate all of these (computationally intractable) measures evaluated on $\varrho_n$ to the (also computationally intractable, but better-understood) $C_f(\rho)$.

Finally, a caveat is also in order concerning the claim ``[t]his rank bound applies both to the single most coherent measurement element and to the \emph{average} (of the rank's logarithm) over all involved measurements under the distribution induced by the input'' in Conjecture \ref{pan}: the failure of \eqref{tndrn} deals a blow to the average clause, even were the rest of the conjecture valid. Nevertheless, by virtue of the logarithms, we remain hopeful that any future technique capable of proving the latter would also be up to the more exacting task of proving the whole conjecture.

\subsection{How to attain the bounds}\label{howto}
From putting a lower bound on the asymptotic coherent measurement cost, we now turn to attaining the bound (Conjecture \ref{pas}). One way of proving Conjecture \ref{pas} would be to explicitly construct bound-attaining distillation channels. We now formulate some general guiding principles towards such constructions. Let us take (for example) the result of Proposition \ref{pfin} and inspect the chain \eqref{cftau} of inequalities therein:
\begin{align}\label{cftau1}
    \log_2M&+\sum_{c,m}\frac{\alpha_m^2t_{c,m}}{\tilde S}C_r\left(\phi_{c,m}\right)\nonumber\\
    \ge\sum_{c}&\frac{t_c\left(H\left[\left(\frac{\alpha_m^2t_{c,m}}{t_c}\right)_m\right]+\sum\limits_{m}\frac{\alpha_m^2t_{c,m}}{t_c}C_r\left[\phi_{c,m}\right]\right)}S\nonumber\\
    =\sum_{c}&\frac{t_c}{S}C_r\left(\phi_{c}\right)\ge C_f\left(\tau_\cE\right)\ge C_0.
\end{align}
For the last inequality to be saturated, we need $\dim\cV=S$\m i.e.,\ our ``maximal distillation'' condition $M\dim\cV=A$. For the one before, the decomposition $\sum_{c}\left(t_c/\tilde S\right)\phi_{c}$ of $\tau_\cE$ must be one that attains the bound in the definition of $C_f\left(\tau_\cE\right)$\m a so-called optimal decomposition. Finally, the inequality in the middle line is saturated when $\left(\alpha_m^2t_{c,m}/t_c\right)_m$ is uniform for each $c$. This is the case for a maximal uniform distillate, i.e.\ $\ket\alpha=\ket{\Psi_M}$, as mentioned in Remark \ref{tcmtc}; we are not aware of weaker conditions where it still holds.

These conditions would suffice, in principle, for the loosest bound of Proposition \ref{pfin} (effectively Proposition \ref{pfinu}, considering the above) to be attained without necessarily saturating the inequality in Theorem \ref{tfinmaxu}. The latter would further require
\begin{enumerate}
    \item\label{how1} each of the $C_r\left(\phi_{c}\right)$ to be equal to $C_f\left(\tau_\rho\right)$;
    \item\label{how2} each of the $C_r\left(\phi_{c,m}\right)$ to be equal, in turn, to $C_f\left(\tau_\rho\right)-\log_2M$; and moreover,
    \item\label{how3} each $\phi_{c,m}$ to be a uniform superposition.
\end{enumerate}
Condition \ref{how1} occurs when $\tau_\rho$ is a so-called \emph{flat-roof point} for the convex-roof function in question \cite{uhlmann2010roofs}. There does not seem to be any work in the literature on flat-roof points for the function $C_f$. As for the further conditions \ref{how2} and \ref{how3}, suppose
\begin{itemize}
    \item \ref{how1} holds;
    \item the Kraus operators are IO;
    \item each $\left(\alpha_m^2t_{c,m}/t_c\right)_m$ is uniform; and furthermore,
    \item each $\phi_c$ is a uniform superposition.
\end{itemize}
This would automatically ensure both \ref{how2} and \ref{how3}; once again, we are not aware of situations where these conditions could be met without those in the previous sentence. Finally, attaining the rank bound $\ell\left(\tau_\rho\right)$ would necessitate $C_r\left(\tau_\rho\right)=\log_2M$\m i.e.,\ the distillate is also maximal in terms of $C_r$.

For each inequality in these chains, the condition for its saturation is essentially some sort of ``flattening'' of features. We are not aware of any noteworthy situations in which some of the bounds are tight while others are not; in particular, whether the ``average coherence'' bound of Proposition \ref{pfinu} can be attained in cases where the maximal rank bound of Theorem \ref{tfinmaxu} cannot is an intriguing open question. There is one situation, though, where all manner of flattening tendencies collude: maximal asymptotic distillation. We use the above observations to attempt a systematic construction (summarized towards the end of Section \ref{secsum}) of a distillation protocol that asymptotically attains the $\ell$-based rank bound of Conjecture \ref{pan}. The construction is summarized as follows:
\begin{enumerate}
    \item\label{stepWY} In our proof sketch for Conjecture \ref{pan}, we attempted to show that the target effect\n based density operators $\tau_n\equiv\tau_{\cE_n}$ associated with any sequence of asymptotically maximally-distilling MIO channels $\cE_n$ must approach $\varrho_n$ [see \eqref{tndrn}]. Any given set of Kraus operators of $\cE_n$ correspond to the pure states in a certain convex decomposition of $\tau_n$ (Observation \ref{TgivesF}), which would then be an approximate convex decomposition of $\varrho_n$. For the converse, we adapt Winter and Yang's maximal IO distillation protocol \cite{WY16} to derive conditions under which a given approximation sequence $\tau_n$ corresponds to some maximally-distilling channel sequence $\cE_n$; we also show that \emph{any} convex decomposition of such a $\tau_n$ yields \emph{IO} Kraus operators for $\cE_n$. We start our construction with $\varrho_n$ itself, progressively working towards a viable $\tau_n$.
    \item\label{stepcf} If we choose an optimal decomposition attaining $C_f\left(\varrho_n\right)$ from the convex roof of $\varrho_n$, we already obtain Kraus operators that attain the measurement coherence bound in the \emph{average} sense of Proposition \ref{pfinu}. It remains to flatten out further to attain the bound on a per-$\ket{\phi_{c,m}}$ basis.
    \item\label{flc} First, to flatten relative to $c$, we construct a decomposition wherein all but an asymptotically-vanishing weight is carried by pure states whose individual $C_r$ values are close to $C_f\left(\varrho_n\right)$: simply take a decomposition $\rho=\sum_jq_j\phi_j$ that is optimal for $\rho$, and decompose (most of) $\varrho_n$ into pure states of the form $\ket{\Phi_{c\equiv\vect j}}\equiv\bigotimes_{k=1}^n\ket{\phi_{j_k}}$ where $\vect j\equiv\left(j_k\right)_k$ is a strongly-typical sequence under $\vect q^{\otimes n}$ (see Appendix \ref{apptyp} for background on asymptotic typicality)\footnote{This works except when $\vect q$ is uniform. For this case, we can use this construction for arbitrarily small non-uniform perturbations of $\vect q$, resulting in corresponding perturbations of $\rho$. The properties of convex roofs ensure that the perturbed $\vect q$ decomposition is optimal for the perturbed $\rho$ \cite{regula2017convex}.}. These pure components then have $C_r\left(\Phi_{\vect j}\right)\approx C_f\left(\varrho_n\right)=nC_f(\rho)$.
    \item\label{typs} To ensure that the final target effect is bounded as $T_n\le\eins$ (as required for the associated map to be a valid subchannel), we project all remaining $\ket{\Phi_{\vect j}}$ onto a strongly-typical subspace of $\varrho_n$, resulting in $\ket{\Phi_{\vect j}^{\delta_\rS}}$.
    \item\label{flm1} To flatten relative to $m$, we first show that each $\ket{\Phi_{\vect j}^{\delta_\rS}}$ can be made arbitrarily close to a near-uniform superposition $\ket{\Phi_{\vect j}^{\delta_\rS,\delta_\rA}}$.
    \item\label{flm2} We then draw again on Winter\n Yang's construction to show that we can discard an asymptotically-vanishing fraction of the remaining vectors to leave only ones almost entirely contained (with an asymptotically-vanishing error) in a subspace $\cV$ with the following property: any $\ket w\in\cV$ can be decomposed as $\ket w=M_n^{-1/2}\sum_m\ket{w_m}$, with $\bkt{w_m}=\bkt w$ $\forall m$, such that $\ket{w_m}\in\mathrm{span}\left\{\ket{\vect a}:\;\vect a\in \cI_m\right\}$, where $\left\{\cI_m\right\}_m$ is a \emph{fixed} (i.e.,\ $\ket w$-independent) disjoint partitioning of $\cA_n\equiv \cA^n$. The latter property enables the vectors to be used in constructing IO Kraus operators, while the former ensures that they are weighted equally over $m$. Thus, we now have $\ket{\Phi_{\vect j}^{\delta_\rS,\delta_\rA}}\approx M_n^{-1/2}\sum_m\ket{\Phi_{\vect j,m}^{\delta_\rS,\delta_\rA}}$ with $\bkt{\Phi_{\vect j,m}^{\delta_\rS,\delta_\rA}}$ close to a uniform distribution over $m$.
    \item\label{flm3} We then show that we can delete a vanishing fraction of $\ket{\Phi_{\vect j,m}^{\delta_\rS,\delta_\rA}}$ from each remaining $\ket{\Phi_{\vect j}^{\delta_\rS,\delta_\rA}}$ such that the remaining have logarithmic coherence rank tightly concentrated around $C_r\left(\Phi_{\vect j}^{\delta_\rS,\delta_\rA}\right)-\log_2M_n\approx n[C_f(\rho)-C_r(\rho)]=n\ell(\rho)$. We thus obtain IO Kraus operator fragments $\ket{w_{c,m}}\propto\ket{\Phi_{\vect j,m}^{\delta_\rS,\delta_\rA}}$ that individually approach the measurement coherence rank bound of Proposition \ref{pan}.
    \item\label{atp} Finally, we show that the ``IO Kraus operators'' (in quotes because the resulting map may fail to be trace\n non-increasing) constructed using these fragments asymptotically output an approximation to $\Psi_{M_n}$ near-deterministically. In the case that the map is a sub-channel, we show that it can be completed to a channel with an IO residual subchannel incurring a measurement coherence overhead asymptotically vanishing in relation to the rest.
\end{enumerate}
Appendix \ref{psc2} provides a detailed survey of the above construction and an attempt to thereby prove Conjecture \ref{pas}. The overall difficulties here are much more involved than in the case of Conjecture \ref{pan}. This suggests that Conjecture \ref{pas} may stand a bleaker chance of holding up. We now close this section and turn to the implications of our conjectures, were they to hold.

\section{Ramifications of our conjectures}\label{secram}
If true, conjectures \ref{pan} and \ref{pas} would pin down the coherent measurement cost of asymptotically maximal distillation, up to subextensive terms. We shall now explore some implications that would then follow. We will first discuss the operational implications for maximal distillation, and then sketch some possibilities for an asymptotic tradeoff between the coherent measurement budget and the distilled yield.

All statements in the remainder of this section will be premised on conjectures \ref{pan} and \ref{pas}.

\subsection{Maximal distillation may be a net loss}\label{dontdist}
Recall our observation from Section \ref{cohasp} that dynamical coherence can be used to endue both incoherent states and incoherent measurements with (respectively, static and mensural) coherence. Indeed, although the latter forms of coherence are well-defined operational primitives in the formalism, in practice we always derive them from dynamical coherence\m we never find ourselves in possession of the operational capability to prepare or measure coherently but not to implement coherent dynamics. Thus, it is justified to consider all aspects of coherence involved in a task as ultimately dynamical. In this spirit, we now weigh the coherence used in implementing coherent measurements (whose quantification has been our main preoccupation) against that eventually distilled in the form of standard resource states.

According to Conjecture \ref{pan}, the coherent measurements used in any maximal distillation protocol (on average, up to subextensive terms) would be equivalent to a number of Hadamard measurements no smaller than $n\ell(\rho)$, where $n$ is the number of copies of the input $\rho$ used. Meanwhile, the distilled yield is equivalent to $nC_r(\rho)$ cobits. Could the coherent measurement cost exceed the yield, i.e.\ $\ell(\rho)>C_r(\rho)$?

Indeed, this does sometimes happen. While no general method is known for efficiently computing the coherence of formation, it does have a closed-form expression for states of a single qubit \cite{YZCM15}:
\begin{equation}
    C_f(\rho)=h\left(\frac{1+\sqrt{1-4\abs{\bra0\rho\ket1}^2}}{2}\right).
\end{equation}
Using this formula, we computed the excess coherence cost $\ell(\rho)-C_r(\rho)$ for a representative sample of qubit states (Fig.\ \ref{fig:ellCr}). The results suggest that a nonzero measure of states incur an excess cost for maximal distillation, and moreover, that this excess cost can be an arbitrarily large multiple of the distilled yield!

Coherence distillation is sometimes an instrumental or incidental outcome in a larger task, e.g.\ entanglement distillation using incoherent local operations \cite{CH16}. But if the task is coherence distillation itself, then our results call into question its operational utility in situations where $\ell(\rho)\ge C_r(\rho)$: if we are able to implement $n\ell(\rho)$ Hadamard gates, we should rather use them to simply prepare as many fresh cobits than squander them in distilling only $nC_r(\rho)$.

\begin{figure*}
    \centering
    \includegraphics[width=.5\textwidth]{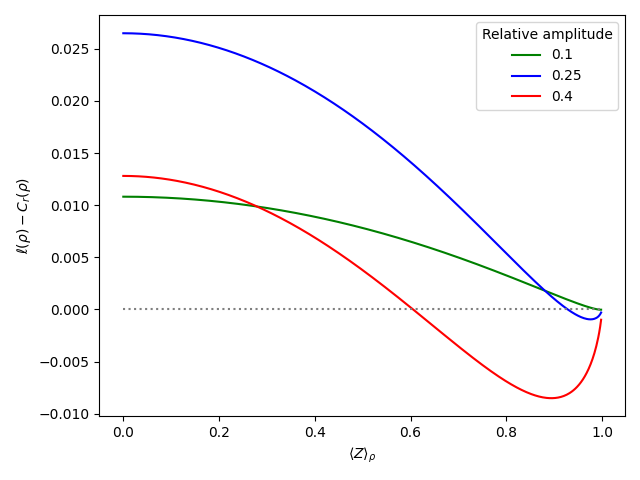}\includegraphics[width=.5\textwidth]{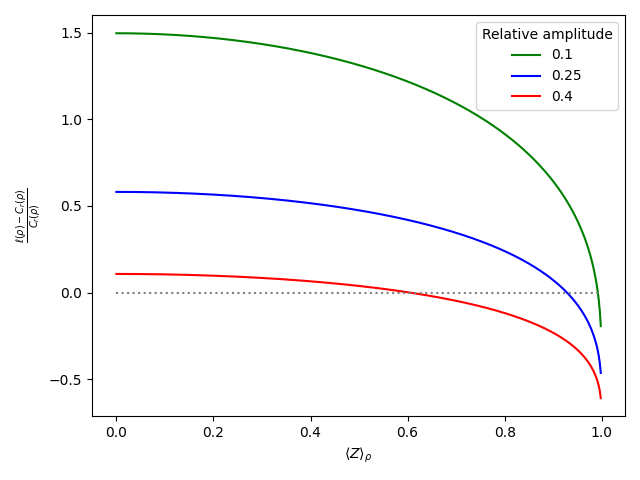}
    \caption{(Left) Excess coherent measurement cost, i.e.\ the difference between the asymptotic coherent measurement cost $\ell(\rho)$ and the distillable coherence $C_r(\rho)$, for qubit states with different values of the relative off-diagonal amplitude $\abs{\bra0\rho\ket1}/\sqrt{\bra0\rho\ket0\bra1\rho\ket1}$: the cost generically exceeds the yield for relative amplitude $\lesssim0.4$. (Right) The ratio of the excess coherent measurement cost $\ell(\rho)-C_r(\rho)$ to the distillable coherence $C_r(\rho)$: the ratio attains arbitrarily large values for low enough relative amplitude.}\label{fig:ellCr}
\end{figure*}

\subsection{Asymptotic cost\n yield tradeoff}\label{sectrade}
An obvious question that arises is what the coherent measurement cost of asymptotically\n\emph{non-}maximal distillation is. In this section we will make some conjectures on this question, restricting our consideration to IO. Since the coherence rank $M_n\approx\exp_2\left[n\,C_r(\rho)\right]$ of the maximal distillate and that of the conjectured requisite measurement, $L_n\approx\exp_2\left[n\,\ell(\rho)\right]$, both scale exponentially in the number of copies, one might naively expect that the best one can do with a measurement rank scaling $\tilde L_n\approx\left(L_n\right)^t$ with $0\le t<1$ is to distill maximally from a fraction $t$ of the input's copies, thereby achieving $\tilde M_n\approx\left(M_n\right)^t$, corresponding to the distillation rate $t\,C_r(\rho)$.

However, this would only be a loose lower bound on the achievable rate, as can already be seen by putting Conjecture \ref{pas} together with past results on SIO distillation \cite{lami_2019-1,lami_2019}. Recall that SIO are just IO that don't use coherent measurements, i.e.\ the case $t=0$. We know that the maximal SIO-distillable rate is nonzero for certain inputs whose maximal IO rate is strictly larger\m thus, in these cases, although $\log_2L_n\in\Omega(n)$ for maximal IO distillation (assuming Conjecture \ref{pan}), $\tilde L_n=1\in O\left[\left(L_n\right)^{t=0}\right]$ is nevertheless able to achieve $\Omega(n)\ni\log_2\tilde M_n\notin O\left[tn\,C_r(\rho)\right]$.

Can we go further and find the exact asymptotic tradeoff between the cost and the yield?
\begin{definition}[Feasible cost\n yield pair]
    $(l,r)\in\bbR^2$ is a \emph{feasible cost\n yield pair} for $\rho$ if there exists a sequence $\cE_n$ of IO channels that respectively distill from $\rho^{\otimes n}$ at the asymptotic rate $r$ and coherent measurement cost $l$, defined as
    \begin{equation}
        l=\lim_{n\to\infty}\frac{\log_2L_n}{n}
    \end{equation}
    with $L_n$ the cost of $\cE_n$ both on (logarithmic) average and on a per-measurement basis (for almost all measurements involved)\m the two being identical under Conjecture \ref{pas}.
\end{definition}
For example, $(l,r)=\left(\ell[\rho],C_r[\rho]\right)$ is a feasible pair for $\rho$ according to Conjecture \ref{pas}. Taking Theorem \ref{tfinu} and conjectures \ref{pan} and \ref{pas} as clues, we now make the informed guess that some quantity like the ones defined in Definition \ref{defcp} would quantify the necessary coherent measurement cost of non-maximal asymptotic distillation. If so, the freedom we have in optimizing a distillation strategy would consist of choosing an appropriate density operator $\tau_\perp$ orthogonal to $\varrho_n$, such as to minimize the mixture's $C_f$. Intuitively, for $\tau_\perp$ to help reduce the final $C_f$, its incoherent alphabet must overlap with $\varrho_n$'s as much as possible.

One possibility suggested by this intuition is to pick $\tau_\perp$ so that the mixture becomes $\sigma^{\otimes n}$ for some state $\sigma$ satisfying $\Delta(\sigma)=\Delta(\rho)$. In particular, define the set
\begin{equation}\label{drho}
    \cD(\rho):=\left\{\sigma:\;\Delta(\sigma)=\Delta(\rho),\;\rho\stackrel{\mathrm{SIO}}\longmapsto\sigma\right\}.
\end{equation}
As per Conjecture \ref{pas}, $(l,r)=\left(\ell[\sigma],C_r[\sigma]\right)$ for any $\sigma\in\cD(\rho)$ would also be a feasible pair for $\rho$, since each copy of $\rho$ could first be mapped to one of $\sigma$ without requiring coherent measurements. A special case of this is when $\sigma$ is Lami's $\bar\rho$ \cite{lami_2019-1,lami_2019}, the state consisting only of the pure diagonal blocks of $\rho$\m we then get the SIO-feasible pair $\left(\ell\left[\bar\rho\right],C_r\left[\bar\rho\right]\right)=\left(0,Q\left[\rho\right]\right)$, where $Q[\rho]$ is Lami's \emph{quintessential coherence}.

We can achieve any convex combination of a set of feasible pairs by performing the respective protocols achieving them on appropriate fractions of the total number of input copies. Finally, the feasibility of $(l,r)$ trivially implies that of any $\left(l,r_1<r\right)$. Thus, Conjecture \ref{pas} implies that any pair in the following set is feasible for $\rho$:
\begin{equation}\label{crho}
    \cC(\rho):=\mathrm{cvx}\left\{\left(\ell[\sigma],r\right):r\in\left[0,C_r(\sigma)\right],\sigma\in\cD(\rho)\right\}.
\end{equation}
The symmetry and typicality properties of the asymptotic limit suggest that there might be no nontrivial feasible pairs besides these. We discussed above why we believe it would not help to expand $\cD(\rho)$ to include states with a diagonal part different from the input's. As for the possibility of collective SIO pre-processing on many copies of $\rho$, Lami's results on SIO distillation hint against the prospect of this producing new feasible pairs. There would of course be IO strategies that deviate from maximal distillation in ways other than an SIO pre-processing. But based on our preliminary studies of distillation in the dilation picture (Appendix \ref{secdec}), we are inclined to believe that these variations do not expand the feasible set either. Hence, we have the following:
\begin{conjm}
    The $\cC(\rho)$ defined through \eqref{drho} and \eqref{crho} is the set of all feasible cost\n yield pairs for $\rho$.
\end{conjm}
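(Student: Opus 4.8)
\renewcommand{\qedsymbol}{\framebox{\tiny?}}
The inclusion of $\cC(\rho)$ in the feasible set is essentially already established, modulo Conjecture~\ref{pas}, as explained in the text: for each $\sigma\in\cD(\rho)$ one maps every copy $\rho\to\sigma$ by a single-copy SIO (which by definition involves no coherent measurement), distils $\sigma^{\otimes n}$ at rate $C_r(\sigma)$ and cost $\ell(\sigma)$ via Conjecture~\ref{pas}, lowers the rate to any $r\le C_r(\sigma)$ for free, and realises arbitrary convex combinations by running distinct protocols on disjoint blocks of the $n$ copies. Thus the whole content of the conjecture is the \emph{converse}: every feasible pair $(l,r)$ must lie in $\cC(\rho)$; equivalently, no IO distillation strategy beats the time-shared family of protocols of the form ``single-copy SIO pre-processing to some $\sigma\in\cD(\rho)$, followed by maximal distillation of $\sigma$, followed by rate degradation'' (the cost being understood as the infimum over implementations, so that wasteful padding of Kraus operators is immaterial).

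The plan for the converse is to run the target-effect analysis of Sections~\ref{secnonm}--\ref{secappr} in the asymptotic regime, paralleling the proof sketch for Conjecture~\ref{pan}. Let $\cE_n$ be an IO sequence witnessing a feasible $(l,r)$, with distillate coherence rank $M_n=\exp_2(n[r-o(1)])$ and $F[\cE_n(\varrho_n),\Psi_{M_n}]\ge1-o(1)$. The target effects $T_n$ then obey $\tr(\varrho_n T_n)\ge1-o(1)$ together with the identity $\bra{\vect a}T_n\ket{\vect a}=M_n^{-1}$ for every $\vect a\in\cA_n$ (Observation~\ref{tbound}; cf.\ also Observation~\ref{obstein}). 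Merging the approximate estimates of Lemma~\ref{lmm} with the non-maximal block structure of Lemma~\ref{lwc} and Proposition~\ref{pfin}, one expects the normalized target effect $\tau_n$, restricted to a typical subalphabet, to be asymptotically fidelity-close to a state of the $p_n$-overlap form of Definition~\ref{defcp}, namely $\gamma_n=p_n\,\tau_{\varrho_n}+(1-p_n)\tau_{\perp,n}$ with $\tau_{\perp,n}\perp\varrho_n$, $\Delta(\gamma_n)\approx\eins/A_n$, and $\log_2 p_n=n[r-C_r(\rho)]+o(n)$. Theorem~\ref{tfinu} and its approximate counterpart, the additivity $C_f(\varrho_n)=nC_f(\rho)$, and the continuity bounds already invoked for Theorem~\ref{tfina} would then give
\begin{equation}\label{cyconv}
l\;\ge\;\lim_{n\to\infty}\frac1n\,\ell_{;p_n}\big(\tau_{\varrho_n}\big)
\end{equation}
up to subextensive corrections, with $\ell_{;p_n}$ as in Definition~\ref{defcp} (so the right-hand side carries a minimization over complement states). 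It then remains to \emph{single-letterize}: one must show that the complement state optimal in $C_{f;p_n}(\tau_{\varrho_n})$ can be taken of i.i.d.\ form, making the mixture $\sigma^{\otimes n}$ for some $\sigma$ with $\Delta(\sigma)=\Delta(\rho)$ reachable from $\rho$ by a single-copy SIO, so that \eqref{cyconv} reduces to the value of the lower-boundary function of $\cC(\rho)$ at the rate $r$. Combined with the achievability direction, this identifies the feasible region with $\cC(\rho)$.

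The main obstacle is this single-letterization, which breaks into two essentially independent hard parts. First, one must rule out that \emph{correlated} complement states $\tau_{\perp,n}$\m i.e.,\ collective SIO pre-processing across many copies\m can undercut the i.i.d.\ value in \eqref{cyconv}; the permutation symmetry of $\varrho_n$ and the typical-subspace structure make a de~Finetti-type reduction plausible, and Lami's analysis of SIO distillation \cite{lami_2019-1,lami_2019} strongly suggests that collective SIO buys nothing here, but no rigorous argument is available\m the decoupling-scheme picture of Appendix~\ref{secdec} is the natural arena in which to attempt one. Second, even for i.i.d.\ complement states one must show that the regularized convex-roof quantity $\lim_n\frac1n C_{f;p_n}(\tau_{\varrho_n})$ collapses to a single-copy optimum; this is an asymptotic-equipartition statement for a smoothed coherence-of-formation-type quantity of the kind discussed around \eqref{zhaoaep}, and it inherits precisely the asymptotic-typicality subtleties that already obstruct Conjectures~\ref{pan} and~\ref{pas}. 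Finally, since \eqref{cyconv} rests on Theorem~\ref{tfinu} and on the conjectural lower bound of Conjecture~\ref{pan}, while the matching achievability rests on Conjecture~\ref{pas}, a complete proof of the present conjecture is necessarily downstream of resolving those two; a fidelity-free route, via the modified asymptotic continuity speculated on in Section~\ref{secgenn}, would be an attractive alternative.
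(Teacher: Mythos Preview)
The statement is a \emph{conjecture}, and the paper does not provide a proof or even a proof sketch for it. The paper's treatment consists solely of the motivating discussion preceding the conjecture: it argues that the pairs in $\cC(\rho)$ are achievable (conditional on Conjecture~\ref{pas}), and then offers informal reasons\m based on the intuition that $\tau_\perp$ should overlap maximally with $\varrho_n$'s incoherent alphabet, on Lami's SIO results, and on preliminary dilation-picture observations (Appendix~\ref{secdec})\m for why no other pairs should be feasible. There is no attempt at a converse argument.

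Your proposal therefore goes well beyond the paper: you lay out a concrete program for the converse by extending the non-maximal target-effect analysis (Proposition~\ref{pfin}, Theorem~\ref{tfinu}, Definition~\ref{defcp}) to the asymptotic regime, and you correctly isolate the two essential gaps\m the de~Finetti-type reduction to i.i.d.\ complement states, and the single-letterization of the smoothed convex-roof quantity. This is a sensible extrapolation of the paper's methods and is consistent with the paper's own hints (e.g., the remark that collective SIO pre-processing is unlikely to help).

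One point deserves flagging as more speculative than the rest: the step where you assert that the optimal i.i.d.\ complement would yield a mixture $\sigma^{\otimes n}$ with $\sigma$ \emph{SIO-reachable from $\rho$}. The definition of $\cD(\rho)$ involves the SIO-reachability constraint, and nothing in the target-effect machinery obviously enforces that the optimizing $\sigma$ lies in $\cD(\rho)$ rather than merely satisfying $\Delta(\sigma)=\Delta(\rho)$. The paper itself only gestures at this (``we believe it would not help to expand $\cD(\rho)$\dots''), so your identification of it as an obstacle is apt, but the leap from ``i.i.d.\ complement'' to ``$\sigma\in\cD(\rho)$'' would need its own argument even after the two obstacles you name are overcome.
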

In the above line of reasoning we considered using the coherence budget $l$ only in implementing an IO acting on the given input. But allowing the coherent action to be used possibly also in preparing fresh coherent states (as discussed in Section \ref{dontdist}) is arguably more operationally meaningful. In that case, $(l,l)$ should also be considered a feasible pair for any $l\ge0$ and any $\rho$. If we wish to include these in our reckoning, we could modify the definition \eqref{crho} by replacing $C_r(\sigma)$ with $\max\left\{\ell(\sigma),C_r(\sigma)\right\}$.

In general, the problem of computing the largest achievable rate $r$ for a given coherent measurement budget $l$, or inversely, that of computing the least $l$ achieving a desired rate $r$, is likely to be very hard (even if our conjectures hold)\m after all, we have no known efficient way to compute $\ell(\rho)$ itself, saying nothing of the difficulty of finding the set $\cD(\rho)$. But we expect there to be a nontrivial cost\n yield tradeoff landscape, finding which would be both fundamentally and operationally important.

\section{Conclusion}
We laid out a framework for quantifying the cost of coherent measurements involved in distilling coherent states from arbitrary inputs using coherence\n non-creating channels (MIO). The framework uses a construction we call the \emph{target effect}\m a measurement effect that quantifies the probability with which a given channel maps an input to a desired target state. This object also doubles up as a quantifier of the requisite coherent measurement action in implementing the channel. We derived conditions on the target effect\m and thereby, lower bounds on the coherent measurement cost\m for exact (maximal and non-maximal) and approximate maximal coherence distillation from finite-sized inputs.

Based on our result on the approximate case, we conjectured a scaling law for the coherent measurement cost of asymptotic distillation at the maximal rate\m namely, that the necessary and sufficient cost (in an equivalent number of qubit Hadamard gates) is extensive in the number of input copies, with a rate given by the input's irretrievable coherence $\ell(\rho)=C_f(\rho)-C_r(\rho)$. We went through detailed and rigorous proof sketches for both the necessity and the sufficiency of this cost, and discussed the difficulties we encountered in completing the proofs. As a byproduct, we showed a connection between our problem and the generalized quantum Stein's lemma. By virtue of this connection, the coherent measurement cost of maximal asymptotic distillation can be bounded by the coherence of hypothesis-testing effects that saturate the error exponent scaling bound in the lemma\m a potential tactic for future attempts at our conjectures.

We then discussed some implications of our conjectures. First, we noted that our conjectured coherent measurement cost exceeds the very distilled yield in a nonzero measure of qubit instances of maximal distillation. Thus, maximal coherence distillation as a standalone operational task may sometimes amount to a net attrition of coherence-resource and should therefore be superseded by coherent state preparation. We then made some speculations on an asymptotic tradeoff between the coherent measurement budget and distilled yield.

A question of possible interest for future work is whether the appearance of the irretrievable coherence\m a signature of irreversibility\m is accidental, or if, rather, there is a deeper connection between resource-theoretic irreversibility and auxiliary (or otherwise hidden) costs of distillation. The fact that our results all apply to MIO, which do not exhibit irreversibility, supports the accident hypothesis, but there may yet be some subtleties that we have missed.

Our results and conjectures have foundational significance in understanding the resource theory of coherence, but also operational implications in applications that use the resource. For example, combining our results with those of Ref.\ \cite{HFW21} directly implies that our coherent measurement cost bounds apply also to the task of private incoherent randomness extraction from the same input state; our Conjecture \ref{pan}, if true, would further imply that, given an operational coherence budget, more private randomness can sometimes be extracted by ignoring the input and instead simply preparing fresh coherent states (to be incoherently measured for generating randomness). Further investigation on our conjectures may shed light on information-theoretic aspects of coherence manipulation, e.g.\ a possible asymptotic equipartition property of convex-roof extensions of entropic coherence quantifiers. Developing methods and heuristics to actually compute such quantifiers would be a challenging project in itself. Other natural avenues for inquiry would be in situations involving the interplay of coherence with other resources, e.g.\ multipartite settings with local incoherent operations \cite{CH16}.

Naively, one might expect non-SIO IOs to be implementable as SIO augmented by the consumption of coherent states (a basic treatment of whose asymptotic distilling power was done by Lami \cite{lami_2019}). However, nontrivial coherent state\n augmented SIO may not even be IO\m at any rate, the effective Kraus operators induced by an SIO Kraus operator representation of the orignal SIO channel fail to be IO. This necessitates other methods, such as ours, for studying the coherent measurement cost of tasks.

Besides the target effect\n based approach detailed in this paper, we also attempted some alternative approaches towards quantifying the coherent measurement cost of distillation. In Appendix \ref{appsdp}, we present some preliminary results towards a semidefinite programming (SDP)\n based approach. In Appendix \ref{secmuk}, we describe an approach wherein we studied the behaviour under constrained IO of certain SIO monotones constructed by Lami \cite{lami_2019}, in the process defining a generalization of the measures. A better understanding of how Lami's monotones, and generalizations thereof, behave under tensor products (especially in the asymptotic limit) seems essential for studying the coherent measurement cost either directly through constrained IO (as we do in Appendix \ref{secmuk}) or by relating it with the dynamical coherence \cite{saxena2020dynamical,takahashi2022creating} of unitary dilations.

Lami relates these coherence measures to the information-theoretic properties of a classical variable labelling the pure diagonal blocks of the input state\m i.e.,\ classical information encoded in a certain way in coherent quantum states. Generalizations such as ours may be operationally related to corresponding generalized encoding tasks. This suggests as-yet unexplored connections between Shannon theory and the resource theory of coherence.

In Appendix \ref{secdec}, we present our preliminary investigations on what we call decoupling schemes: certain linear-algebraic structures that emerge when distillation is framed in the dilation picture. These suggest yet other fruitful lines of inquiry, including possible connections with established notions of decoupling \cite{Buscemi09,DBWR14,FR15,AJ19}.

\section*{Acknowledgements}
We thank Anurag Anshu, Kishor Bharti, Dagmar Bru\ss, Ian George, Ludovico Lami, Alessandro Luongo, Iman Marvian, Ryuji Takagi, Thomas Theurer and Benjamin Yadin for helpful discussions. Special thanks are due to Eric Chitambar, Mile Gu and Yunlong Xiao for particularly detailed and engaging discussions. This project was originally conceived during a collaboration with Francesco Buscemi and Bartosz Regula. We express our sincerest gratitude for their help through intense discussions during the early phase and occasional consultations thereafter.
\begin{widetext}

\end{widetext}

\appendix

\section{Useful properties of the fidelity}\label{appfid}
In this paper we adhere to the \emph{Uhlmann\n Jozsa} (i.e.,\ square root\n free) definition of the fidelity, $F(\sigma,\tau):=\left(\tr\sqrt{\sqrt{\sigma}\tau\sqrt{\sigma}}\right)^2$, which we will apply on arbitrary positive-semidefinite arguments (not only density operators). We will make use of the following properties, referring to this section when we do so:
\begin{enumerate}
    \item One or both arguments pure: $F(\psi,\tau)=\bra\psi\tau\ket\psi$.
    \item Multiplication by nonnegative scalar factors: $F(t\sigma,\tau)=tF(\sigma,\tau)$ for $t\ge0$.
    \item Joint concavity of the square-root fidelity: for $0\le p\le1$,
    \begin{align}
        &\sqrt{F\left[p\sigma_1+(1-p)\sigma_2,p\tau_1+(1-p)\tau_2\right]}\nonumber\\
        \ge&p\sqrt{F\left(\sigma_1,\tau_1\right)}+(1-p)\sqrt{F\left(\sigma_2,\tau_2\right)}.
    \end{align}
    \item\label{fidproj} Fidelity of an operator with a normalized projection thereof: for some space $\cV$, if $\sigma^\cV:=\eins_\cV\sigma\eins_\cV$ and $\sigma^{|\cV}:=\sigma^\cV/\tr\sigma^\cV$, then
    \begin{align}
        F\left(\sigma,\sigma^{|\cV}\right)&=F\left(\sigma^{\cV},\sigma^{|\cV}\right)\nonumber\\
        &=\left(\tr\sigma^\cV\right)F\left(\sigma^{|\cV},\sigma^{|\cV}\right)\nonumber\\
        &=\tr\sigma^\cV.
    \end{align}
    \item\label{fidcon} Fidelity of a density operator with a convex subcomponent thereof: if $\tau=p\tau_1+(1-p)\tau_2$ is a density operator with $0\le p\le1$ and $\tau_{1/2}$ also density operators, then by joint concavity,
    \begin{align}
        F\left(\tau,\vphantom{\tau_1}\right.&\left.\tau_1\right)\nonumber\\
        =&F\left[p\tau_1+(1-p)\tau_2,p\tau_1+(1-p)\tau_1\right]\nonumber\\
        \ge&\left[p\sqrt{F\left(\tau_1,\tau_1\right)}+(1-p)\sqrt{F\left(\tau_2,\tau_1\right)}\right]^2\nonumber\\
        \ge&\left[p\cdot1+(1-p)\cdot0\right]^2=p^2.
    \end{align}
\end{enumerate}

\section{Useful results from the literature}\label{appcont}
In our work we will make use of the following results from \cite{WY16}, namely the asymptotic continuity of the relative entropy of coherence $C_r$ and the coherence of formation $C_f$.
\begin{alem}[{\cite[Suppl.\ Material, Lemma 12]{WY16}}]\label{crcon}
    For two states $\tau_1^\rA$ and $\tau_2^\rA$ with $\nrm{\tau_1-\tau_2}_1\le\delta$,
    \begin{equation}
        \abs{C_r\left(\tau_1\right)-C_r\left(\tau_2\right)}\le\delta\log_2A+2h\left(\frac\delta2\right),
    \end{equation}
    where $h(t)=-t\log_2t-(1-t)\log_2(1-t)$ is the binary entropy function.
\end{alem}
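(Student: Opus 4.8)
The plan is to reduce the claim to the standard continuity estimate for the von Neumann entropy. First I would invoke the closed form $C_r(\tau)=S\left(\Delta[\tau]\right)-S(\tau)$ recorded in the definition of irretrievable coherence, so that the triangle inequality gives
\begin{equation}
    \abs{C_r\left(\tau_1\right)-C_r\left(\tau_2\right)}\le\abs{S\left(\Delta[\tau_1]\right)-S\left(\Delta[\tau_2]\right)}+\abs{S\left(\tau_1\right)-S\left(\tau_2\right)}.
\end{equation}
This reduces the problem to bounding two entropy differences, one for the states themselves and one for their dephased versions.

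Next I would observe that both pairs are close in trace norm. The second pair is close by hypothesis, $\nrm{\tau_1-\tau_2}_1\le\delta$; and since the dephasing channel $\Delta$ is completely positive and trace preserving, it is contractive on Hermitian operators in trace norm, whence $\nrm{\Delta[\tau_1]-\Delta[\tau_2]}_1\le\nrm{\tau_1-\tau_2}_1\le\delta$. In terms of the trace distance $T=\tfrac12\nrm{\cdot}_1$ both pairs therefore satisfy $T\le\delta/2$. I would then apply the Fannes--Audenaert inequality: for density operators $\rho,\sigma$ on an $A$-dimensional space at trace distance $T\le1-1/A$ one has $\abs{S(\rho)-S(\sigma)}\le T\log_2(A-1)+h(T)$. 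Since $T\le\delta/2\le1/2\le1-1/A$ for $A\ge2$ (the case $A=1$ being trivial) and $h$ is increasing on $[0,1/2]$, each entropy difference is at most $\tfrac\delta2\log_2 A+h\!\left(\tfrac\delta2\right)$, using also $\log_2(A-1)\le\log_2 A$. Summing the two contributions yields the asserted bound $\delta\log_2 A+2h\!\left(\tfrac\delta2\right)$.

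The argument is essentially routine once the entropy-continuity inequality is available, so there is no genuine obstacle. The only care required is bookkeeping: keeping the factor of two between the trace norm and the trace distance so that the constant comes out as $2h(\delta/2)$ rather than the weaker $2h(\delta)$, and using the contractivity of $\Delta$ so that the diagonal term is handled on exactly the same footing as the full-state term.
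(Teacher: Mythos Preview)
Your argument is correct and is the standard route: write $C_r(\tau)=S(\Delta[\tau])-S(\tau)$, split by the triangle inequality, use contractivity of $\Delta$ under the trace norm, and apply the Fannes--Audenaert bound to each of the two entropy differences. The paper does not supply its own proof of this lemma; it is quoted verbatim as a result from \cite{WY16} in Appendix~\ref{appcont}, and the argument you give is essentially the one in that reference.

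One small point of bookkeeping: your line ``$T\le\delta/2\le1/2$'' tacitly assumes $\delta\le1$, which you need in order to use monotonicity of $h$ on $[0,1/2]$ when passing from $h(T)$ to $h(\delta/2)$. This is harmless in practice (the lemma is only ever invoked for small $\delta$, and for $\delta>1$ the bound can be verified by cruder means), but it is worth stating the hypothesis explicitly.
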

\begin{alem}[{\cite[Suppl.\ Material, Lemma 15]{WY16}}]\label{cfcon}
    For two states $\tau_1^\rA$ and $\tau_2^\rA$ with $B\left(\tau_1,\tau_2\right)\le\delta$,
    \begin{equation}
        \abs{C_f\left(\tau_1\right)-C_f\left(\tau_2\right)}\le\delta\log_2A+(1+\delta)h\left(\frac\delta{1+\delta}\right).
    \end{equation}
\end{alem}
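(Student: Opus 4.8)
The plan is to establish the bound in one direction, say $C_f(\tau_2)-C_f(\tau_1)\le\delta\log_2A+(1+\delta)h\!\left(\tfrac{\delta}{1+\delta}\right)$, and then obtain the absolute value from the symmetry of the hypothesis $B(\tau_1,\tau_2)\le\delta$. The core device is \emph{decomposition transport}: take a pure-state decomposition of $\tau_1$ attaining the convex-roof minimum in \eqref{cof2}, carry it over to a (generically suboptimal) decomposition of $\tau_2$ via Uhlmann's theorem, and control the resulting change in the averaged pure-state coherence by a uniform continuity bound for the von Neumann entropy. Since for pure states $C_r(\psi)=S[\Delta(\psi)]$, the averaged pure-state coherence is exactly a classically-conditioned von Neumann entropy, which is why a Fannes-type estimate can be applied at the end.

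Concretely I would proceed as follows. Let $\tau_1=\sum_x p_x\proj{\psi_x}$ be an optimal decomposition, so $C_f(\tau_1)=\sum_x p_x S[\Delta(\proj{\psi_x})]$; existence with finitely many terms follows from Carath\'eodory and compactness, so the flag register $\rX$ below may be taken finite-dimensional. Form the purification $\ket{\Phi_1}^{\rA\rX}:=\sum_x\sqrt{p_x}\,\ket{\psi_x}^\rA\ket{x}^\rX$ of $\tau_1$, with $\{\ket x\}$ an orthonormal basis of $\rX$. By Uhlmann's theorem there is a purification $\ket{\Phi_2}^{\rA\rX}$ of $\tau_2$ with $F(\Phi_1,\Phi_2)=F(\tau_1,\tau_2)$; using the elementary inequality $1-F(\tau_1,\tau_2)\le B(\tau_1,\tau_2)^2\le\delta^2$ and the fact that for pure states the trace distance equals $\sqrt{1-F}$, we get $\tfrac12\nrm{\proj{\Phi_1}-\proj{\Phi_2}}_1\le\delta$. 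Measuring $\rX$ of $\ket{\Phi_2}$ in the $\{\ket x\}$ basis yields a pure-state decomposition $\tau_2=\sum_x q_x\proj{\chi_x}$, hence $C_f(\tau_2)\le\sum_x q_x S[\Delta(\proj{\chi_x})]$. Dephasing $\rX$ and applying $\Delta$ on $\rA$ to both $\proj{\Phi_1},\proj{\Phi_2}$ (a fixed CPTP map) produces the classical-quantum states $\Omega_1:=\sum_x p_x\proj{x}^\rX\otimes\Delta(\proj{\psi_x})^\rA$ and $\Omega_2:=\sum_x q_x\proj{x}^\rX\otimes\Delta(\proj{\chi_x})^\rA$ with $\tfrac12\nrm{\Omega_1-\Omega_2}_1\le\delta$, and for these $S(\rA|\rX)_{\Omega_1}=\sum_x p_x S[\Delta(\proj{\psi_x})]=C_f(\tau_1)$ while $S(\rA|\rX)_{\Omega_2}=\sum_x q_x S[\Delta(\proj{\chi_x})]\ge C_f(\tau_2)$. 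Thus $C_f(\tau_2)-C_f(\tau_1)\le S(\rA|\rX)_{\Omega_2}-S(\rA|\rX)_{\Omega_1}$, and since the conditioning register $\rX$ is classical, the sharpened Alicki--Fannes--Winter continuity bound for the conditional entropy gives $|S(\rA|\rX)_{\Omega_1}-S(\rA|\rX)_{\Omega_2}|\le\epsilon\log_2A+(1+\epsilon)h\!\left(\tfrac{\epsilon}{1+\epsilon}\right)$ with $\epsilon=\tfrac12\nrm{\Omega_1-\Omega_2}_1\le\delta$; monotonicity of $t\mapsto t\log_2A+(1+t)h\!\left(\tfrac{t}{1+t}\right)$ on $[0,1]$ then yields the claimed bound, and swapping $\tau_1\leftrightarrow\tau_2$ supplies the absolute value.

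The main obstacle I anticipate is securing the \emph{right constant}: the coefficient of $\log_2A$ must be $\delta$, not the $2\delta$ of the generic Alicki--Fannes--Winter bound, and this improvement holds precisely because the conditioning system $\rX$ is classical. The argument must therefore be arranged so that the two states actually compared, $\Omega_1$ and $\Omega_2$, are genuinely classical-quantum --- which is why the flag register is kept classical throughout and it is the \emph{optimal decomposition} of $\tau_1$, rather than a bare purification of $\tau_1$, that gets transported. The remaining points are routine but deserve care: attainment of the convex-roof minimum by a finite ensemble, the inequality $1-F\le B^2$, the contractivity of trace distance under the dephasing-then-$\Delta$ channel, and the monotonicity of the entropy-bound function. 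One could instead try to route the proof through the operational identity ``$C_f=$ asymptotic coherence cost under IO'', but that risks circularity, since the achievability and converse for coherence formation themselves rely on exactly this kind of continuity; the decomposition-transport argument above is self-contained given Uhlmann's theorem and the classical-side-information Alicki--Fannes--Winter bound.
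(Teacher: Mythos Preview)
The paper does not prove this lemma; it is quoted verbatim from Winter and Yang \cite{WY16} in Appendix \ref{appcont} as a result from the literature, with no proof given. Your decomposition-transport argument via Uhlmann's theorem followed by the Alicki--Fannes--Winter bound for conditional entropy with classical side information is correct and is precisely the method Winter and Yang use in their supplemental material, so there is nothing to compare against within the present paper.
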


\section{Asymptotic typicality}\label{apptyp}
Here we give a brief self-contained review of the necessary background on asymptotic typicality for i.i.d.\ classical and quantum samples. For a detailed treatment, see \cite{wilde2013quantum}.

\subsection{Classical sources and typical sets}
We will denote classical variables with uppercase letters and specific values they take with lowercase. Suppose a classical i.i.d.\ stochastic source outputs a variable $X$ taking values in a \emph{finite} alphabet $\cX$ and distributed according to $\vect p\equiv\left[p(x)\right]_{x\in\cX}$. Consider a length-$n$ sample $\vect X\equiv X_1X_2\dots X_n$ from the source; it takes values of length-$n$ sequences $\vect x\equiv x_1x_2\dots x_n\in\cX^n$. By the source's i.i.d.\ property, the sample distribution is $\vect p_n\left(\vect X\right)=\vect p^{\otimes n}$. Recall that $H(\vect p)\equiv H(X)_{\vect p}=-\sum_{x\in\cX}p(x)\log_2p(x)$ denotes the Shannon entropy of the source.
\begin{defs}[Typical sequences and sets]\label{deftseq}
    For any real $\delta>0$, a $\delta$\n weakly-typical (or entropy-typical) sequence $\vect x$ is one that satisfies
    \begin{equation}
        2^{-n\left[H(\vect p)+\delta\right]}\le p_n\left(\vect x\right)\le2^{-n\left[H(\vect p)-\delta\right]}.
    \end{equation}
    For any $x\in\cX$, let $f_x\left(\vect x\right)$ denote the frequency of occurrences of $x$ in $\vect x$, i.e.\ $f_x\left(\vect x\right)=\abs{\left\{j:\,x_j=x\right\}}/n$. A $\delta$\n strongly-typical (or letter-typical) sequence $\vect x$ is one that satisfies
    \begin{equation}
        \sum_{x\in\cX}\abs{f_x\left(\vect x\right)-p(x)}\le\delta.
    \end{equation}
    The set $\bar\cT_n^\delta$ of all $\delta$\n weakly-typical sequences is called the $\delta$\n weakly-typical set, and the set $\cT_n^\delta$ of all $\delta$\n strongly-typical sequences the $\delta$\n strongly-typical set.
\end{defs}
\begin{alem}[Asymp.\ equipartition property [AEP{]}]\label{laep}
    For any $\delta>0$ and large enough $n$, the weakly-typical set $\bar\cT_n^\delta$ has the following properties:
    \begin{enumerate}
        \item $\sum_{\vect x\in\bar\cT_n^\delta}p_n(\vect x)\ge1-\delta$;
        \item $(1-\delta)2^{n\left[H(\vect p)-\delta\right]}\le\abs{\bar\cT_n^\delta}\le2^{n\left[H(\vect p)+\delta\right]}$.
    \end{enumerate}
    Furthermore, there exists a continuous real function $\eta(\delta)$, such that $\eta(\delta)\stackrel{\delta\to0}\longrightarrow0$ and for large enough $n$, the strongly-typical set $\cT_n^\delta$ has the following properties:
    \begin{enumerate}
        \item $2^{-n\left[H(\vect p)+\eta(\delta)\right]}\le p_n\left(\vect x\right)\le2^{-n\left[H(\vect p)-\eta(\delta)\right]}$ for all $\vect x\in\cT_n^\delta$. In other words, $\cT_n^\delta\subset\bar\cT_n^{\eta(\delta)}$;
        \item $\sum_{\vect x\in\cT_n^\delta}p_n(\vect x)\ge1-\delta$;
        \item $(1-\delta)2^{n\left[H(\vect p)-\eta(\delta)\right]}\le\abs{\cT_n^\delta}\le2^{n\left[H(\vect p)+\eta(\delta)\right]}$.
    \end{enumerate}
\end{alem}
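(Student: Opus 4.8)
The plan is to reduce both halves of the statement to the weak law of large numbers for finitely many bounded random variables, which is available through Chebyshev's inequality since $\cX$ is finite.

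\textbf{Weakly-typical set.} First I would note that $Y_j := -\log_2 p(X_j)$ are i.i.d.\ with mean $\bbE[Y_j] = H(\vect p)$ and finite variance $v := \sum_{x}p(x)\left(\log_2 p(x)\right)^2 - H(\vect p)^2$ (finite because $\cX$ is finite and we restrict to the support of $\vect p$). The event ``$\vect x\in\bar\cT_n^\delta$'' is precisely $\abs{\tfrac1n\sum_j Y_j - H(\vect p)}\le\delta$, so Chebyshev's inequality gives $p_n\!\left(\bar\cT_n^\delta\right)\ge 1 - v/(n\delta^2)$, which exceeds $1-\delta$ once $n\ge v/\delta^3$; this is property 1. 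Properties 2 then follow by the standard counting argument: every $\vect x\in\bar\cT_n^\delta$ satisfies $2^{-n[H(\vect p)+\delta]}\le p_n(\vect x)\le 2^{-n[H(\vect p)-\delta]}$, so summing over the set and using $p_n\!\left(\bar\cT_n^\delta\right)\le 1$ gives $\abs{\bar\cT_n^\delta}\le 2^{n[H(\vect p)+\delta]}$, while using $p_n\!\left(\bar\cT_n^\delta\right)\ge 1-\delta$ gives $\abs{\bar\cT_n^\delta}\ge (1-\delta)2^{n[H(\vect p)-\delta]}$.

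\textbf{Strongly-typical set.} Here I would first establish the inclusion $\cT_n^\delta\subseteq\bar\cT_n^{\eta(\delta)}$, which also fixes a valid $\eta$. For $\vect x\in\cT_n^\delta$ (taken, as is conventional, to use only support letters, so all $\log_2 p(x_j)$ are finite), write $-\tfrac1n\log_2 p_n(\vect x) = -\sum_x f_x(\vect x)\log_2 p(x)$; subtracting $H(\vect p) = -\sum_x p(x)\log_2 p(x)$ and applying $\sum_x\abs{f_x(\vect x)-p(x)}\le\delta$ gives $\abs{-\tfrac1n\log_2 p_n(\vect x) - H(\vect p)}\le\delta\cdot\max_{x:\,p(x)>0}\abs{\log_2 p(x)}$. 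So $\eta(\delta):=\delta\cdot\max_{x:\,p(x)>0}\abs{\log_2 p(x)}$ works — it is continuous and vanishes as $\delta\to0$ — and the inclusion (hence property 1 of the strong part, the per-sequence probability bounds) follows. For the high-probability statement I would apply Chebyshev to each indicator average $f_x(\vect X)$, an average of $n$ i.i.d.\ Bernoulli$(p(x))$ variables with variance $\le 1/(4n)$: for $n\ge\abs{\cX}^3/\delta^3$ we get $\Pr\!\left[\abs{f_x(\vect X)-p(x)}>\delta/\abs{\cX}\right]\le\delta/\abs{\cX}$, and a union bound over the $\abs{\cX}$ letters yields $p_n\!\left(\cT_n^\delta\right)\ge1-\delta$. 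The cardinality bounds (property 3) then follow exactly as before, using the probability bounds from $\cT_n^\delta\subseteq\bar\cT_n^{\eta(\delta)}$ with $1-\delta\le p_n\!\left(\cT_n^\delta\right)\le1$.

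\textbf{Expected obstacle.} There is no genuine difficulty, since this is a standard result; the only point needing care is the handling of zero-probability letters in the strong-typicality definition — one must either adopt the convention that strongly-typical sequences use only support letters (so that $-\log_2 p(x_j)$ stays finite) or build this restriction into the union-bound step — and, relatedly, checking that $\eta(\delta)$ as chosen is independent of $n$. Everything else is bookkeeping around Chebyshev's inequality and the choice of $n$-thresholds.
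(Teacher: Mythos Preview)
The paper does not prove this lemma: it is stated in Appendix~\ref{apptyp} as standard background material, with the reader directed to Wilde's textbook \cite{wilde2013quantum} for details. Your proposal is a correct and complete proof following the standard textbook route (law of large numbers via Chebyshev, then the counting argument), so there is nothing to compare against.
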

Intuitively, we can understand AEP as follows: for a large enough sample from an i.i.d.\ source, the weakly-\ or strongly-typical set for any $\delta>0$ supports all but $\delta$ of the probability. The cardinality of this set scales exponentially, roughly as $\exp_2\left[nH(\vect p)\right]$; note that this is an exponentially small fraction of the cardinality $\abs\cX^n$ of all $n$-length sequences! Finally, the distribution of sequences within this set is nearly uniform.

\subsection{Quantum sources and typical subspaces}
As in the classical case, there is also a quantum AEP. Suppose a quantum source outputs i.i.d.\ copies of an elementary \emph{finite-dimensional} system $\rA$, each prepared in the state $\rho^{\rA}$. Considering a length-$n$ sample $\varrho_n\equiv\rho^{\otimes n}$, the quantum counterparts to the classical sequences $\vect x$ are directions in the Hilbert space of $\rA_n\equiv\rA^{\otimes n}$, and the quantum counterparts to sequence probabilities are the directional densities induced by $\varrho_n$. In particular, the eigenvectors of $\varrho_n$, and their associated eigenvalues, embody an informationally-complete description of its structure. If $\rho=\sum_{x\in[A]}r_x\psi_x$ is an eigendecomposition, then $\ket{\Psi_{\vect x}}:=\bigotimes_{k=1}^n\ket{\psi_{x_k}}$, for $\vect x\equiv\left(x_k\right)_{k=1}^n\in[A]^n$, constitute a complete basis of eigenvectors of $\varrho_n$, with associated eigenvalues $R_{\vect x}:=\prod_kr_{x_k}$. Based on these, we can define quantum counterparts of typical sequences and sets:
\begin{defs}[Typical eigenvectors and subspaces]
    For any real $\delta>0$, a $\delta$\n weakly-\ (respectively, strongly-) typical eigenvector $\ket{\Psi_{\vect x}}$ is one whose associated label sequence $\vect x$ is $\delta$\n weakly-\ (resp.\ strongly-) typical under the distribution $\vect r^{\otimes n}$. The $\delta$\n weakly-\ (resp.\ strongly-) typical subspace $\bar\cV_n^\delta$ (resp.\ $\cV_n^\delta$) is the span of the $\delta$\n weakly-\ (resp.\ strongly-) typical eigenvectors. Note that these subspaces do not depend on the choice of eigenbasis $\left\{\ket{\psi_x}\right\}_x$.
\end{defs}
\begin{alem}[Quantum AEP]\label{lqaep}
    For any $\delta>0$ and large enough $n$, the weakly-typical subspace $\bar\cV_n^\delta$ has the following properties:
    \begin{enumerate}
        \item $\tr\left(\eins_{\bar\cV_n^\delta}\varrho_n\right)\ge1-\delta$;
        \item $(1-\delta)2^{n\left[S(\rho)-\delta\right]}\le\dim\bar\cV_n^\delta\le2^{n\left[S(\rho)+\delta\right]}$.
    \end{enumerate}
    Again, there exists a continuous real-valued function $\eta(\delta)$, such that $\eta(\delta)\stackrel{\delta\to0}\longrightarrow0$ and for large enough $n$, the strongly-typical subspace $\cV_n^\delta$ has the following properties:
    \begin{enumerate}
        \item $2^{-n\left[S(\rho)+\eta(\delta)\right]}\le\eins_{\cV_n^\delta}\varrho_n\eins_{\cV_n^\delta}\le2^{-n\left[S(\rho)-\eta(\delta)\right]}$;
        \item $\tr\left(\eins_{\cV_n^\delta}\varrho_n\right)\ge1-\delta$;
        \item $(1-\delta)2^{n\left[S(\rho)-\eta(\delta)\right]}\le\dim\cV_n^\delta\le2^{n\left[S(\rho)+\eta(\delta)\right]}$.
    \end{enumerate}
\end{alem}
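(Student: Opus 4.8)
The plan is to reduce the quantum AEP entirely to the classical AEP (Lemma \ref{laep}) by passing to the spectral decomposition of $\varrho_n$. Fix an eigendecomposition $\rho=\sum_{x\in[A]}r_x\psi_x$ and set $\vect r:=(r_x)_{x\in[A]}$, a probability distribution on the finite alphabet $[A]$ with $H(\vect r)=S(\rho)$. As recalled in the main text, the vectors $\ket{\Psi_{\vect x}}:=\bigotimes_{k=1}^n\ket{\psi_{x_k}}$, for $\vect x\equiv(x_k)_k\in[A]^n$, form an eigenbasis of $\varrho_n$ with eigenvalues $R_{\vect x}=\prod_k r_{x_k}$, and the map $\vect x\mapsto R_{\vect x}$ is precisely the i.i.d.\ sequence distribution $\vect r^{\otimes n}$. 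Hence the $\delta$-weakly- (resp.\ strongly-) typical subspace is $\bar\cV_n^\delta=\mathrm{span}\{\ket{\Psi_{\vect x}}:\vect x\in\bar\cT_n^\delta\}$ (resp.\ $\cV_n^\delta=\mathrm{span}\{\ket{\Psi_{\vect x}}:\vect x\in\cT_n^\delta\}$), where $\bar\cT_n^\delta$ and $\cT_n^\delta$ are the classical $\delta$-typical sets of $\vect r$ from Definition \ref{deftseq}.

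I would first dispatch the weakly-typical claims. Note that $\eins_{\bar\cV_n^\delta}$ is exactly the spectral projector of $\varrho_n$ onto the part of its spectrum lying in the interval $[2^{-n(S(\rho)+\delta)},\,2^{-n(S(\rho)-\delta)}]$, so it depends only on $\varrho_n$ and not on the choice of eigenbasis; this is the one place where eigenvalue degeneracies of $\rho$ warrant a moment's attention, since label sequences related by permutations within a degenerate block are simultaneously typical or atypical, so no ambiguity arises. Property 1 is then immediate: $\tr(\eins_{\bar\cV_n^\delta}\varrho_n)=\sum_{\vect x\in\bar\cT_n^\delta}R_{\vect x}=\sum_{\vect x\in\bar\cT_n^\delta}\vect r^{\otimes n}(\vect x)\ge1-\delta$ by the first item of Lemma \ref{laep}. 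Property 2 follows from $\dim\bar\cV_n^\delta=\lvert\bar\cT_n^\delta\rvert$ together with the cardinality bounds in the second item of Lemma \ref{laep}.

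For the strongly-typical case I would use the very same function $\eta(\delta)$ supplied by Lemma \ref{laep}. The trace bound (item 2) and the dimension bounds (item 3) transfer verbatim from the argument above, replacing $\bar\cT_n^\delta$ by $\cT_n^\delta$ and invoking the corresponding items of the classical strong AEP. The operator inequality (item 1) is to be read as sandwiched between projectors $\eins_{\cV_n^\delta}$: the operator $\eins_{\cV_n^\delta}\varrho_n\eins_{\cV_n^\delta}=\sum_{\vect x\in\cT_n^\delta}R_{\vect x}\proj{\Psi_{\vect x}}$ is diagonal in the typical eigenbasis, so it suffices to bound each eigenvalue $R_{\vect x}=\vect r^{\otimes n}(\vect x)$ for $\vect x\in\cT_n^\delta$; by the first item of the strong part of Lemma \ref{laep} these all lie in $[2^{-n(S(\rho)+\eta(\delta))},\,2^{-n(S(\rho)-\eta(\delta))}]$, which is precisely the claimed two-sided bound (equivalently, $\cV_n^\delta\subseteq\bar\cV_n^{\eta(\delta)}$, mirroring $\cT_n^\delta\subseteq\bar\cT_n^{\eta(\delta)}$).

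Given Lemma \ref{laep}, there is no genuine analytic obstacle: the content is the classical AEP and the rest is translation. The step I would treat most carefully is the basis-independence of the typical subspaces under eigenvalue degeneracy of $\rho$, together with the bookkeeping that upgrades the scalar probability bounds of the classical strong AEP to the operator inequality in item 1 of the quantum strong AEP; both are routine but easy to state sloppily.
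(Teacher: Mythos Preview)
The paper does not provide its own proof of this lemma: it is stated as a standard background result in Appendix~\ref{apptyp} with a reference to Wilde's textbook for details. Your reduction to the classical AEP (Lemma~\ref{laep}) via the spectral decomposition of $\varrho_n$ is exactly the standard argument and is correct; the care you take with basis-independence under degeneracy and with upgrading the scalar eigenvalue bounds to the operator inequality is appropriate.
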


\section{Proof sketch for Conjecture \ref{pas}}\label{psc2}
We now present a detailed attempt at proving Conjecture \ref{pas} using the construction summarized in Section \ref{how2}. Note that the $\epsilon$'s and $\delta$'s below are not the same as those in the proof sketch for Conjecture \ref{pan}.
\begin{proof}[Proof sketch for Conjecture \ref{pas}]
    We will now try to construct a sequence of viable IO channels based on the one from Winter and Yang's maximal distillation protocol \cite{WY16}. 
    Each of Winter\n Yang's channels\footnote{Where there is scope for confusion, we use tildes to denote objects associated with Winter\n Yang's construction.} $\tilde\cE_n$ has the following special property: it can be decomposed into IO Kraus operators $\tilde K_s$ that take the form
    \begin{equation}
        \tilde K_s=\sum_m\kbra{m}{\tilde w_{s,m}}=\sqrt{M_n}\sum_m\kbra{m}{\tilde w_{s}}\eins_{\cI_m},
    \end{equation}
    where $\left\{\cI_m\right\}_m$ is a partitioning of $\cA_n\equiv \cA^n$ into $M_n=\exp_2\left(n\left[C_r(\rho)-\tilde\epsilon_n\right]\right)$ disjoint subalphabets, and $\eins_{\cI_m}$ denotes the projector onto $\mathrm{span}\left\{\ket{\vect a}:\;\vect a\in \cI_m\right\}$. Importantly, the fragments $\ket{\tilde w_{s,m}}$ for a given $m$ all lie within the subspace corresponding to $\cI_m$, independent of $s$. Consider the target effect $\tilde T_n=\sum_s\proj{\tilde w_s}$. The particular Kraus operators in Winter\n Yang's construction happen to involve $\ket{w_c}\equiv\ket{\tilde w_s}$ that form a basis of $\varrho_n$'s eigenvectors. But thanks to the above-noted $s$-independent block structure, \emph{any} convex decomposition of $\tilde T_n$ into arbitrary pure components $\ket{w_c}$ can as well be used to construct a set of IO Kraus operators (implementing the same $\tilde\cE_n$), since $\ket{w_{c,m}}:=\sqrt{M_n}\eins_{\cI_m}\ket{w_c}\in\mathrm{span}\left\{\ket{\vect a}:\;\vect a\in \cI_m\right\}$. More generally, we can even use some other $T_n$ sharing $\tilde T_n$'s essential properties and construct IO Kraus operators from arbitrary pure decompositions thereof:
    \begin{obs}\label{consmax}
    Suppose a sequence $T_n$ of effects satisfies
    \begin{enumerate}
        \item\label{MTM} $M_n\eins_{\cI_m}T_n\eins_{\cI_m}\le\eins_{\cI_m}$ for every value of $m$ indexing a partitioning $\left\{\cI_m\right\}_m$ of $\cA_n$ into $M_n=\exp_2\left(n\left[C_r(\rho)-o(1)\right]\right)$ disjoint subalphabets;
        \item\label{Tnrhon} $\tr\left(T_n\varrho_n\right)\to1$ as $n\to\infty$.
    \end{enumerate}
    Such a sequence can be used to construct a sequence of IO channels distilling asymptotically maximally from $\varrho_n$.
    \end{obs}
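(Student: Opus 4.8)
The plan is to realize $T_n$ directly as the target effect of an explicitly constructed IO subchannel $\cF_n$, use Condition~\ref{Tnrhon} to show that $\cF_n$ loses asymptotically no weight on $\varrho_n$, and then complete $\cF_n$ to a full IO channel $\cE_n$ whose output on $\varrho_n$ still converges to $\Psi_{M_n}$.

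First I would take the output system $\rM$ to have dimension $M_n$, with its incoherent basis indexed by the same set $\{m\}$ that labels the partition, so that $\ket{\Psi_{M_n}}=M_n^{-1/2}\sum_m\ket m$. Fix any pure decomposition $T_n=\sum_c\proj{w_c}$ (e.g.\ the eigendecomposition, which has finitely many terms), define the fragments $\ket{w_{c,m}}:=\sqrt{M_n}\,\eins_{\cI_m}\ket{w_c}$ and the operators $K_c:=\sum_m\ket m^\rM\bra{w_{c,m}}^\rA$, and set $\cF_n(\cdot):=\sum_c K_c(\cdot)K_c^\dagger$. Three routine checks then establish that $\cF_n$ is an IO subchannel with target effect $T_n$: (i) each $K_c$ satisfies the IO condition, because $\{\cI_m\}_m$ partitions $\cA_n$, so $\ket{w_{c,m}}$ is supported on $\mathrm{span}\{\ket{\vect a}:\vect a\in\cI_m\}$ and hence $K_c\ket{\vect a}\propto\ket{m(\vect a)}$ for the unique block $\cI_{m(\vect a)}$ containing $\vect a$; (ii) $\cF_n$ is trace-nonincreasing, since $\sum_c K_c^\dagger K_c=\sum_m\sum_c\proj{w_{c,m}}=M_n\sum_m\eins_{\cI_m}T_n\eins_{\cI_m}\le\sum_m\eins_{\cI_m}=\eins$ by Condition~\ref{MTM}; and (iii) since $\{\cI_m\}_m$ is a partition, $\sum_m\eins_{\cI_m}\ket{w_c}=\ket{w_c}$, so the purely algebraic computation underlying Observation~\ref{TgivesF} (which applies verbatim to subchannels) yields $T_{\cF_n}^{\Psi_{M_n}}=\sum_c\proj{w_c}=T_n$.

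Next I would use \eqref{trkt}: $\bra{\Psi_{M_n}}\cF_n(\varrho_n)\ket{\Psi_{M_n}}=\tr(\varrho_n T_n)\to1$ by Condition~\ref{Tnrhon}. Since $\cF_n(\varrho_n)\ge0$ and $\proj{\Psi_{M_n}}\le\eins^\rM$, we have $\tr(\varrho_n T_n)\le\tr\,\cF_n(\varrho_n)\le1$, hence $\tr\,\cF_n(\varrho_n)\to1$. To complete $\cF_n$ to a channel, observe that the deficit $D_n:=\eins-\sum_c K_c^\dagger K_c=\eins-M_n\sum_m\eins_{\cI_m}T_n\eins_{\cI_m}\ge0$ by Condition~\ref{MTM}; writing a pure decomposition $D_n=\sum_k\proj{d_k}$ and setting $L_k:=\ket 1^\rM\bra{d_k}^\rA$ produces IO Kraus operators (each sends every incoherent input to a multiple of the single incoherent state $\ket 1$), so $\cG_n(\cdot):=\sum_k L_k(\cdot)L_k^\dagger$ is IO with $\sum_k L_k^\dagger L_k=D_n$, and $\cE_n:=\cF_n+\cG_n$ is an IO channel, the collection $\{K_c\}\cup\{L_k\}$ being an IO Kraus decomposition. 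Since $\cG_n(\varrho_n)\ge0$ and $\tr\,\cG_n(\varrho_n)=1-\tr\,\cF_n(\varrho_n)\to0$, we get $F[\cE_n(\varrho_n),\Psi_{M_n}]=\bra{\Psi_{M_n}}\cE_n(\varrho_n)\ket{\Psi_{M_n}}\ge\bra{\Psi_{M_n}}\cF_n(\varrho_n)\ket{\Psi_{M_n}}=\tr(\varrho_n T_n)\to1$, while $\log_2M_n=n[C_r(\rho)-o(1)]$ is the maximal distillation rate; so $\{\cE_n\}$ distills asymptotically maximally from $\varrho_n$.

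I do not expect a genuine obstacle in this Observation itself: the only points needing care are the squeeze $\tr(\varrho_n T_n)\le\tr\,\cF_n(\varrho_n)\le1$ that upgrades ``high success probability'' to ``asymptotic trace preservation'', and the verification that flagging the leftover weight $D_n$ with a single incoherent output label keeps both the trace-preservation identity and the IO property intact. The real difficulty of Conjecture~\ref{pas}, of which this Observation is only one ingredient, is to exhibit a sequence $T_n$ that simultaneously obeys Conditions~\ref{MTM}--\ref{Tnrhon} \emph{and} admits a pure decomposition whose components individually attain the measurement-coherence bound of Conjecture~\ref{pan}; that is the content of the construction sketched in Section~\ref{howto}.
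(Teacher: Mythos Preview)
Your proof is correct and follows essentially the same approach as the paper: construct the IO subchannel $\cF_n$ from a pure decomposition of $T_n$ via $K_c=\sqrt{M_n}\sum_m\ket m\bra{w_c}\eins_{\cI_m}$, verify trace-nonincreasingness from Condition~\ref{MTM}, identify $T_n$ as the target effect so that Condition~\ref{Tnrhon} gives the fidelity, and complete to a channel. The only difference is in the completion step: the paper reapplies the same block-structured construction \eqref{kcfromwc} to the residual $T_n^{\setminus}:=M_n^{-1}(\eins-\vect T_n)$, whereas you take the more elementary route of flagging the entire deficit $D_n$ to a single incoherent label via $L_k=\ket1\bra{d_k}$\m both are valid IO completions, and yours is arguably simpler since it avoids re-verifying the block condition for the residual.
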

    \begin{proof}
    If $T_n=\sum_c\proj{w_c}$ is a decomposition into pure components, we define the Kraus operators
    \begin{equation}\label{kcfromwc}
        K_c:=\sum_m\kbra{m}{w_{c,m}}:=\sqrt{M_n}\sum_m\kbra{m}{w_{c}}\eins_{\cI_m}
    \end{equation}
    and the maps $\cF_n(\cdot):=\sum_cK_c(\cdot)K_c^\dagger$. Note that
    \begin{equation}
        \sum_cK_c^\dagger K_c=M_n\sum_m\eins_{\cI_m}T_n\eins_{\cI_m}=:\vect T_n.
    \end{equation}
    Therefore, as long as condition \ref{MTM} is satisfied, we can use the construction of \eqref{kcfromwc} on $T_n^{\setminus}:=M_n^{-1}\left(\eins^{\rA_n}-\vect T_n\right)$ to obtain IO subchannels $\cG_n$ such that $\cE_n:=\cF_n+\cG_n$ are channels. With these conditions met, the resulting $\cE_n$ are maximally-distilling channels if condition \ref{Tnrhon} is met, since the $T_n$ are (by construction) target witnesses of the subchannels $\cF_n$ with respect to the maximally-scaling targets $\Psi_{M_n}$.
    \end{proof}
    \renewcommand{\qedsymbol}{\framebox{\tiny?}}
    This concludes our formalization of point \ref{stepWY} in Section \ref{howto}. In our proof sketch for Conjecture \ref{pan}, we tried to show that the target effect\n based density operators $\tau_n$ satisfy $F\left(\tau_n,\varrho_n\right)\to1$ asymptotically. One way of trying to construct bound-attaining distillation channels is to construct a sequence $T_n$ satisfying the conditions of Observation \ref{consmax} starting from $\tau_n=\varrho_n$ itself. Since a necessity for condition \ref{Tnrhon} is $\tr T_n\ge2^{n\left[S(\rho)-o(1)\right]}$, we can aim to satisfy this weaker condition through a $\varrho_n$-based construction.
    
    As noted in point \ref{stepcf}, we can choose any optimal decomposition of $\varrho_n$ attaining $C_f\left(\varrho_n\right)$ [possibly after some pruning to uphold condition \ref{MTM}] to saturate the measurement coherence bound on average\m but we can try to do better.

    As summarized in point \ref{flc}, we will flatten relative to $c$ by decomposing $\varrho_n$ in a special way. Let $\rho=\sum_jq_j\phi_j$ be an optimal decomposition attaining $C_f(\rho)$; such a decomposition exists, by virtue of the finitude of $A$. Since $\varrho_n=\left(\sum_jq_j\phi_j\right)^{\otimes n}$, it can of course be decomposed convexly into pure states of the form $\ket{\Phi_{c\equiv\vect j}}\equiv\bigotimes_{k=1}^n\ket{\phi_{j_k}}$, where $\vect j\equiv\left(j_k\right)_k$. If we then take the strongly\n $\delta_\rJ$-typical set $\cT_n^{\delta_\rJ}$ under the distribution $\vect Q_n\equiv\vect q^{\otimes n}$ (see Definition \ref{deftseq} and Lemma \ref{laep}), the letter frequencies in any $\vect j\in\cT_n^{\delta_\rJ}$ satisfy $\abs{f_j\left(\vect j\right)-q_j}\le\delta_\rJ$. By the additivity of the relative entropy of coherence under tensor products,
    \begin{align}
        C_r\left(\Phi_{\vect j}\right)&=\sum_j\left[f_j\left(\vect j\right)n\right]C_r\left(\phi_j\right)\nonumber\\
        &=n\sum_j\left[q_j+O\left(\delta_\rJ\right)\right]C_r\left(\phi_j\right)\nonumber\\
        &=n\left[C_f(\rho)+O\left(\delta_\rJ\right)\right]
    \end{align}
    for any of these $\vect j$. The part of $\varrho_n$ composed thereof is $\varrho_n^{\delta_\rJ}\equiv\sum_{\vect j\in\cT_n^{\delta_\rJ}}Q_n\left(\vect j\right)\Phi_{\vect j}$. By the AEP, it has weight
    \begin{equation}
        \tr\varrho_n^{\delta_\rJ}=Q_n^{\delta_\rJ}\ge1-\delta_\rJ,
    \end{equation}
    where $Q_n^{\delta_\rJ}:=\sum_{\vect j\in\cT_n^{\delta_\rJ}}Q_n\left(\vect j\right)$. As before, letting $\varrho_n^{|\delta_\rJ}$ denote the normalized version thereof, the behaviour of the fidelity under convex mixtures (point \ref{fidcon} of Appendix \ref{appfid}) yields
    \begin{equation}\label{rnrnj}
        F\left(\varrho_n,\varrho_n^{|\delta_\rJ}\right)\ge\left(Q_n^{\delta_\rJ}\right)^2\ge1-2\delta_\rJ.
    \end{equation}
    This concludes point \ref{flc}. We will now subject the remaining $\ket{\Phi_{\vect j}}$ to a minor modification to ensure that our final construction can achieve $\tr T_n\approx\exp_2\left[nS(\rho)\right]$. Let $\cV^{\delta_\rS}$ be $\varrho_n$'s $\delta_\rS$\n strongly-typical subspace. Define $\ket{\Phi_{\vect j}^{\delta_\rS}}:=\eins_{\cV^{\delta_\rS}}\ket{\Phi_{\vect j}}$ and
    \begin{equation}
        \varrho_n^{\delta_\rJ,\delta_\rS}:=\sum_{\vect j\in\cT_n^{\delta_\rJ}}Q_n\left(\vect j\right)\Phi_{\vect j}^{\delta_\rS}.
    \end{equation}
    By the quantum AEP (Lemma \ref{lqaep}),
    \begin{equation}
        2^{-n\left[S(\rho)+\eta\left(\delta_\rS\right)\right]}\eins\le\varrho_n^{\delta_\rS}\le2^{-n\left[S(\rho)-\eta\left(\delta_\rS\right)\right]}\eins,
    \end{equation}
    where $\eta(\delta)\stackrel{\delta\to0}\longrightarrow0$. Noting that $\varrho_n^{\delta_\rS}=\varrho_n^{\delta_\rJ,\delta_\rS}+\varrho_n^{\setminus\delta_\rJ,\delta_\rS}$,
    \begin{equation}\label{eigbound}
        \varrho_n^{\delta_\rJ,\delta_\rS}\le\varrho_n^{\delta_\rS}\le2^{-n\left[S(\rho)-\eta\left(\delta_\rS\right)\right]}.
    \end{equation}
    Since the $\vect j$'s we have retained are all strongly-typical, the further restriction to a strongly-typical subspace has a negligible effect on the distribution of incoherent label strings $\vect a$ conditioned on $\vect j$. To see this, let $\rho=\sum_ir_i\psi_i$ be an eigendecomposition, and suppose $\ket{\phi_j}=\sum_{i}\chi_{ji}\ket{\psi_i}$. Then, $\sum_jq_j\abs{\chi_{ji}}^2=r_i$, and therefore every $\vect i$ sequence that is strongly-typical \emph{conditional} on $\vect j$ is also part of the \emph{unconditional} strongly-typical set that determines the typical subspace. As such, the distribution of $\vect a$ conditioned on $\vect j$ is unaffected except for a $\delta_\rS$ loss of total measure. We will therefore consider the distribution of $\vect a$ in $\ket{\Phi_{\vect j}}$. 
    
    Each $\ket{\Phi_{\vect j}}$ can be made arbitrarily close to a uniform superposition in the incoherent $\ket{\vect a}$ basis. Let $\ket{\phi_j}=\sum\limits_a\sqrt{\xi_{a|j}}e^{i\varphi_{j,a}}\ket a$ for $\vecg\xi_{|j}\equiv\left(\xi_{a|j}\right)_a$ a distribution and $\varphi_{j,a}\in\bbR$; note that $H\left(\vecg\xi_{|j}\right)=C_r\left(\phi_j\right)$. Then,
    \begin{equation}
        \ket{\phi_j}^{\otimes n_j}=\left(\sum\limits_a\sqrt{\xi_{a|j}}e^{i\varphi_{j,a}}\ket a\right)^{\otimes n_j}.
    \end{equation}
    We now apply (weak) asymptotic typicality on the sequences $\vect a_j\equiv\left(a_k\right)_{k=1}^{n_j}$ (where the subscript $j$ in $\vect a_j$ signifies that the latter takes values in $\cA_{n_j}$ and not $\cA_n$) under the distribution $\vecg\xi_{|j}^{\otimes n_j}$. The number of $\delta_\rA$\n weakly-typical sequences is $\le2^{n_j\left[C_r\left(\phi_j\right)+\delta_\rA\right]}$, and they collectively garner an amplitude $\ge\sqrt{1-\delta_\rA}$; this holds alike for all $j$. Note that the strongly-typical $\ket{\Phi_{\vect j}}$ described above can be obtained by applying subsystem permutations (which are incoherent unitaries) on $\bigotimes_j\ket{\phi_{j}}^{\otimes n_j\equiv\left[q_j+O\left(\delta_\rJ\right)\right]n}$. Therefore, if $\ket{\Phi_{\vect j}^{\delta_\rA}}$ denote $\ket{\Phi_{\vect j}}$ projected onto the span of $\vect a\in \cA_n$ simultaneously $\delta_\rA$-typicalized under all the $\vecg\xi_{|j}^{\otimes n_j}$,
    \begin{align}\label{rcpj}
        r_C\left(\Phi_{\vect j}^{|\delta_\rA}\right)&\le2^{n\left[C_f\left(\rho\right)+\delta_\rA+O\left(\delta_\rJ\right)\right]};\\
        \abs{\bket{\Phi_{\vect j}^{\delta_\rA}}{\Phi_{\vect j}}}&\ge\sqrt{\left(1-\delta_\rA\right)^J}=1-O\left(\delta_\rA\right),
    \end{align}
    where $\Phi_{\vect j}^{|\delta_\rA}$ denotes the normalized $\Phi_{\vect j}^{\delta_\rA}$ and $J$ is the number of components in the optimal decomposition $\rho=\sum_jq_j\phi_j$. Applying the same arguments to the subspace-typicalized $\ket{\Phi_{\vect j}^{\delta_\rS,\delta_\rA}}$,
    \begin{align}\label{rcpjd}
        r_C\left(\Phi_{\vect j}^{|\delta_\rS,\delta_\rA}\right)&\le2^{n\left[C_f\left(\rho\right)+\delta_\rA+O\left(\delta_\rJ\right)+O\left(\delta_\rS\right)\right]};\\
        \abs{\bket{\Phi_{\vect j}^{\delta_\rS,\delta_\rA}}{\Phi_{\vect j}}}&\ge1-O\left(\delta_\rS\right)-O\left(\delta_\rA\right).
    \end{align}
    Applying Lemma \ref{crcon} (the asymptotic continuity of the relative entropy of coherence \cite{WY16}), we also have
    \begin{align}\label{crpj}
        C_r&\left(\Phi_{\vect j}^{|\delta_\rS,\delta_\rA}\right)\nonumber\\
        &=n\left[C_f\left(\rho\right)+O\left(\delta_\rJ\right)+O\left(\delta_\rS\right)+O\left(\delta_\rA\right)\right].
    \end{align}
    Now define $\varrho_n^{\delta_\rJ,\delta_\rS,\delta_\rA}:=\sum_{\vect j\in\cT_n^{\delta_\rJ}}Q_n\left(\vect j\right)\Phi_{\vect j}^{\delta_\rS,\delta_\rA}$ and let $\varrho_n^{|\delta_\rJ,\delta_\rS,\delta_\rA}$ denote its normalized version. The joint concavity of the square-root fidelity implies
    \begin{align}
        F\left(\varrho_n^{|\delta_\rJ},\varrho_n^{|\delta_\rJ,\delta_\rS,\delta_\rA}\right)&\nonumber\\
        \ge&\left(\sum_{\vect j\in\cT_n^{\delta_\rJ}}\frac{Q_n\left(\vect j\right)}{Q_n^{\delta_\rJ}}\abs{\bket{\Phi_{\vect j}^{\delta_\rS,\delta_\rA}}{\Phi_{\vect j}}}\right)^2\nonumber\\
        \ge&1-O\left(\delta_\rS\right)-O\left(\delta_\rA\right),
    \end{align}
    whence
    \begin{equation}\label{rrjsa}
        F\left(\varrho_n,\varrho_n^{|\delta_\rJ,\delta_\rS,\delta_\rA}\right)\ge1-\left(2\delta_\rJ\boxplus\left[O\left(\delta_\rS\right)+O\left(\delta_\rA\right)\right]\right).
    \end{equation}
    For convenience, let $Q_n^{\delta_\rS,\delta_\rA}\left(\vect j\right)$ denote the appropriate normalized measure such that
    \begin{equation}
        \varrho_n^{|\delta_\rJ,\delta_\rS,\delta_\rA}=\sum_{\vect j\in\cT_n^{\delta_\rJ}}Q_n^{\delta_\rS,\delta_\rA}\left(\vect j\right)\Phi_{\vect j}^{|\delta_\rS,\delta_\rA}.
    \end{equation}
    This $\varrho_n^{|\delta_\rJ,\delta_\rS,\delta_\rA}$ can be decomposed into the pure components $\Phi_{\vect j}^{|\delta_\rS,\delta_\rA}$, each of whose coherence rank $\le\exp_2\left(n\left[C_f\left(\rho\right)+\delta_\rA+O\left(\delta_\rJ\right)+O\left(\delta_\rS\right)\right]\right)$, thus accomplishing point \ref{flm1} (point \ref{typs} is achieved by the $\cV^{\delta_\rS}$ projection).

    Next, as anticipated in point \ref{flm2}, we will show that the logarithm of the remaining coherence rank within each $\cI_m$ is tightly concentrated around the value $n\ell(\rho)$. To this end, for any given $\vect j\in\cT_n^{\delta_\rJ}$, define
    \begin{equation}
        \cM_n^{\vect j,\delta_\rR}:=\left\{m:\;r_C\left(\Phi_{\vect j,m}^{|\delta_\rS,\delta_\rA}\right)\le2^{n\left[\ell(\rho)+\tilde\epsilon_n+\delta_\rR\right]}\right\},
    \end{equation}
    where $\Phi_{\vect j,m}^{|\delta_\rS,\delta_\rA}:=\eins_{\cI_m}\Phi_{\vect j,m}^{|\delta_\rS,\delta_\rA}\eins_{\cI_m}/\tr\left(\eins_{\cI_m}\Phi_{\vect j,m}^{|\delta_\rS,\delta_\rA}\right)$ and $\delta_\rR>0$. If $\mu_n^{\vect j,\delta_\rR}:=\abs{\cM_n^{\vect j,\delta_\rR}}/M_n$,
    \begin{align}
        2&{}^{n\left[C_f\left(\rho\right)+\delta_\rA+O\left(\delta_\rJ\right)+O\left(\delta_\rS\right)\right]}\nonumber\\
        &\ge r_C\left(\Phi_{\vect j}^{|\delta_\rS,\delta_\rA}\right)\nonumber\\
        &=\sum_mr_C\left(\Phi_{\vect j,m}^{|\delta_\rS,\delta_\rA}\right)\nonumber\\
        &\ge\left(1-\mu_n^{\vect j,\delta_\rR}\right)M_n2^{n\left[\ell(\rho)+\tilde\epsilon_n+\delta_\rR\right]}.
    \end{align}
    Thus, if we choose $\delta_\rR\gg$ the $\delta_\rA+O\left(\delta_\rJ\right)+O\left(\delta_\rS\right)$ in the exponent on the left,
    \begin{equation}\label{ranktail}
        1-\mu_n^{\vect j,\delta_\rR}\lesssim\exp_2\left(-n\delta_\rR\right),
    \end{equation}
    where we used the fact that $M_n2^{n\left[\ell(\rho)+\tilde\epsilon_n\right]}=2^{nC_f(\rho)}$. We have thus shown that the fraction of $m$ values whose logarithmic coherence rank exceeds $n\ell(\rho)$ decays exponentially. However, this still does not rule out the possibility that such a small fraction of $m$'s actually contribute most of the norm (and of the coherence) of $\Phi_{\vect j}^{|\delta_\rS,\delta_\rA}$\m for, the actual measure over $m$ is not the uniform distribution (which we used above) but the one induced by the ranks $r_C\left(\Phi_{\vect j,m}^{|\delta_\rS,\delta_\rA}\right)$ themselves. Intuitively, one expects that a measure concentration around a small fraction of $m$'s is inconsistent with the fact that this operator approximates $\varrho_n$, which in turn is known to be maximallly-distillable under this block structure; we will now formalize this intuition.

    Here we will exploit another special property of the Winter\n Yang construction: each $\tilde\cE_n$ can be dilated unitarily within $\cH^{\rA_n}$, without appending any auxiliary system:
    \begin{equation}
        \tilde\cE_n^{\rA_n\to\rM_n}\left(\varrho_n\right)=\tr^{\rS_n}\left[\cU^{\rA_n\to\rS_n\rM_n}\left(\varrho_n\right)\right],
    \end{equation}
    where $\rS_n$ and $\rM_n$ are suitable subsystems of $\rA_n$ and $\cU_n(\cdot)\equiv U_n(\cdot)U_n^\dagger$ for some unitary $U_n$. To be sure, the unitary is preceded by a type measurement and a ``good vs.\ bad $m$'' measurement, but both of these have asymptotically-deterministic outcomes. Since the measurements are block-incoherent, the post-measurement unitary can be expanded to a fully-unitary protocol that subsumes the measurements, by controlling on the incoherent basis and defining an arbitrary unitary action (e.g.,\ the identity) on the outlying outcomes.
    
    Since $\tilde\cE_n$ are valid maximal distillation channels, there is an asymptotically-vanishing sequence $\tilde\epsilon_n^{(1)}$ such that
    \begin{align}\label{feur}
        F\left[\tilde\cE_n\left(\varrho_n\right),\Psi_{M_n}\right]&\ge1-\tilde\epsilon_n^{(1)}\nonumber\\
        \Rightarrow\bra{\Psi_{M_n}}\tilde\cE_n\left(\varrho_n\right)\ket{\Psi_{M_n}}&\ge1-\tilde\epsilon_n^{(1)}\nonumber\\
        \Rightarrow\tr\left[\left(\eins^{\rS_n}\otimes{\Psi_{M_n}}^{\rM_n}\right)\cU_n\left(\varrho_n\right)\right]&\ge1-\tilde\epsilon_n^{(1)}.
    \end{align}
    Let
    \begin{align}
        \cV':=\left(\eins^{\rS_n}\otimes{\Psi_{M_n}}^{\rM_n}\right)&\left[\mathrm{supp}\;\cU_n\left(\varrho_n\right)\right]\nonumber\\
        &=:\cW^{\rS_n}\otimes\ket{\Psi_{M_n}}^{\rM_n};
    \end{align}
    note that $\cV'$ and $\cW$ are vector spaces by construction. For any $\ket{w}\in\cW$,
    \begin{equation}\label{evenm}
        \ket v:=U_n^\dagger\left(\ket{w}\otimes\ket{\Psi_{M_n}}\right)={M_n}^{-1/2}\sum_m\ket{v_m}
    \end{equation}
    with $\ket{v_m}:=U_n^\dagger\left(\ket{w}\otimes\ket{m}\right)$, whereby $\bkt{v_m}=\bkt v$ for all $m$. Moreover, since $U_n$ induces the block structure discussed in point \ref{stepWY}, $\ket{v_m}\in\mathrm{span}\left\{\ket{\vect a}:\;\vect a\in\cI_m\right\}$. The span of all such $\ket v$'s is $\cV:=U_n^\dagger\cV'$. By unitarity, \eqref{feur} implies
    \begin{equation}
        \tr\left(\eins_{\cV}\varrho_n\right)\ge1-\tilde\epsilon_n^{(1)}.
    \end{equation}
    Combining this with \eqref{rrjsa}, we have
    \begin{align}
        \tr\left(\eins_{\cV}\varrho_n^{|\delta_\rJ,\delta_\rS,\delta_\rA}\right)&\ge1-\epsilon_n^{(2)}\Rightarrow\nonumber\\
        \sum_{\vect j\in\cT_n^{\delta_\rJ}}Q_n^{\delta_\rS,\delta_\rA}\left(\vect j\right)\bkt{\Phi_{\vect j}^{|\delta_\rS,\delta_\rA;\cV}}&\ge1-\epsilon_n^{(2)},
    \end{align}
    where $\ket{\Phi_{\vect j}^{|\delta_\rS,\delta_\rA;\cV}}:=\eins_{\cV}\ket{\Phi_{\vect j}^{|\delta_\rS,\delta_\rA}}$ (unnormalized). For any $\epsilon\in\left(\epsilon_n^{(2)},1\right]$, let
    \begin{align}
        \cT_n^{\delta_\rJ,\epsilon}&:=\nonumber\\
        &\left\{\vect j\in\cT_n^{\delta_\rJ}:\bkt{\Phi_{\vect j}^{|\delta_\rS,\delta_\rA;\cV}}\ge1-\epsilon\right\}
    \end{align}
    and $Q_n^{\delta_\rJ,\epsilon}:=\sum_{\vect j\in\cT_n^{\delta_\rJ,\epsilon}}Q_n^{\delta_\rS,\delta_\rA}\left(\vect j\right)$. Then,
    \begin{align}
        Q_n^{\delta_\rJ,\epsilon}\cdot1+\left(1-Q_n^{\delta_\rJ,\epsilon}\right)\left(1-\epsilon\right)&\ge1-\epsilon_n^{(2)}\nonumber\\
        \Rightarrow Q_n^{\delta_\rJ,\epsilon}&\ge1-\frac{\epsilon_n^{(2)}}\epsilon.
    \end{align}
    In particular, choosing $\epsilon\equiv\epsilon_n^{(3)}:=\sqrt{\epsilon_n^{(2)}}$,
    \begin{equation}\label{qnde}
        Q_n^{\delta_\rJ,\epsilon^{(3)}}\ge1-\epsilon_n^{(3)}.
    \end{equation}
    Consequently,
    \begin{widetext}
        \begin{equation}\label{trvrje}
            \tr\left[\sum_{\vect j\in\cT_n^{\delta_\rJ,\epsilon_n^{(3)}}}Q_n^{\delta_\rS,\delta_\rA}\left(\vect j\right)\Phi_{\vect j}^{|\delta_\rS,\delta_\rA;\cV}\right]\ge Q_n^{\delta_\rJ,\epsilon_n^{(3)}}\min_{\vect j\in\cT_n^{\delta_\rJ,\epsilon_n^{(3)}}}\bkt{\Phi_{\vect j}^{|\delta_\rS,\delta_\rA;\cV}}\ge\left[1-\epsilon_n^{(3)}\right]^2\ge1-2\epsilon_n^{(3)}.
        \end{equation}
    \end{widetext}
    Henceforth we shall only consider $\vect j\in\cT_n^{\delta_\rJ,\epsilon_n^{(3)}}$. Define
    \begin{equation}\label{rje2}
        \varrho_n^{|\delta_\rJ,\delta_\rS,\delta_\rA,\epsilon_n^{(3)}}:=\frac1{Q_n^{\delta_\rJ,\epsilon}}\sum_{\vect j\in\cT_n^{\delta_\rJ,\epsilon_n^{(3)}}}Q_n^{\delta_\rS,\delta_\rA}\left(\vect j\right)\Phi_{\vect j}^{|\delta_\rS,\delta_\rA}.
    \end{equation}
    Note that we have not projected the vectors onto $\cV$ in this definition: we have merely restricted the values of $\vect j$ further. Due to \eqref{qnde},
    \begin{equation}\label{rnjrnje}
        F\left(\varrho_n^{|\delta_\rJ,\delta_\rS,\delta_\rA},\varrho_n^{|\delta_\rJ,\delta_\rS,\delta_\rA,\epsilon_n^{(3)}}\right)\ge1-2\epsilon_n^{(3)},
    \end{equation}
    which implies via \eqref{rrjsa} that
    \begin{equation}\label{rnrnje}
        F\left(\varrho_n,\varrho_n^{|\delta_\rJ,\delta_\rS,\delta_\rA,\epsilon_n^{(3)}}\right)\ge1-\epsilon_n^{(4)}.
    \end{equation}
    Now define $\ket{\Phi_{\vect j}^{|\delta_\rS,\delta_\rA,\cV}}$ by normalizing $\ket{\Phi_{\vect j}^{|\delta_\rS,\delta_\rA;\cV}}$. By construction, for each remaining $\vect j$,
    \begin{align}\label{withjm}
        \abs{\bket{\Phi_{\vect j}^{|\delta_\rS,\delta_\rA}}{\Phi_{\vect j}^{|\delta_\rS,\delta_\rA,\cV}}}^2&\ge1-\epsilon_n^{(3)}\nonumber\\
        \Rightarrow\sum_m\abs{\bra{\Phi_{\vect j}^{|\delta_\rS,\delta_\rA}}\eins_{\cI_m}\ket{\Phi_{\vect j}^{|\delta_\rS,\delta_\rA,\cV}}}^2&\ge1-\epsilon_n^{(3)}.
    \end{align}
    Since each $\ket{\Phi_{\vect j}^{|\delta_\rS,\delta_\rA,\cV}}$ is a unit vector in $\cV$, its norm within each $\cI_m$ is exactly $M_n^{-1}$ [see \eqref{evenm}]:
    \begin{equation}
        \bra{\Phi_{\vect j}^{|\delta_\rS,\delta_\rA,\cV}}\eins_{\cI_m}\ket{\Phi_{\vect j}^{|\delta_\rS,\delta_\rA,\cV}}=M_n^{-1}.
    \end{equation}
    If $\alpha_m:=\sqrt{\bra{\Phi_{\vect j}^{|\delta_\rS,\delta_\rA}}\eins_{\cI_m}\ket{\Phi_{\vect j}^{|\delta_\rS,\delta_\rA}}}$ (for brevity, we suppress its dependency on $\vect j$ and other parameters), the Cauchy\n Schwartz inequality implies
    \begin{equation}
        \abs{\bra{\Phi_{\vect j}^{|\delta_\rS,\delta_\rA}}\eins_{\cI_m}\ket{\Phi_{\vect j}^{|\delta_\rS,\delta_\rA,\cV}}}^2\le\frac{\abs{\alpha_m}^2}{M_n}.
    \end{equation}
    Thus, defining $\ket\alpha:=\sum_m\alpha_m\ket m$, \eqref{withjm} begets
    \begin{equation}\label{alphapsi}
        \abs{\bket{\alpha}{\Psi_{M_n}}}^2\ge1-\epsilon_n^{(3)}.
    \end{equation}
    Again applying Lemma \ref{crcon},
    \begin{align}
        C_r\left(\proj\alpha\right)&=C_r\left(\Psi_{M_n}\right)+n\epsilon_n^{(5)}\nonumber\\
        &=\log_2M_n+n\epsilon_n^{(5)}\nonumber\\
        &=n\left[C_r(\rho)+\epsilon_n^{(6)}\right].
    \end{align}
    Define $\ket{\Phi_{\vect j,m}^{|\delta_\rS,\delta_\rA}}$ as the normalized $\eins_{\cI_m}\ket{\Phi_{\vect j}^{|\delta_\rS,\delta_\rA}}$, and through these, the vector
    \begin{equation}
        \ket{\bar\Phi_{\vect j}^{|\delta_\rS,\delta_\rA}}:=M_n^{-1/2}\sum_m\ket{\Phi_{\vect j,m}^{|\delta_\rS,\delta_\rA}}.
    \end{equation}
    Although this vector is not necessarily in $\cV$, it has even $m$ amplitudes and, moreover, satisfies $\bket{\bar\Phi_{\vect j}^{|\delta_\rS,\delta_\rA}}{\Phi_{\vect j}^{|\delta_\rS,\delta_\rA}}=\bket{\alpha}{\Psi_{M_n}}$. Thus, applying Lemma \ref{crcon} again and using \eqref{crpj},
    \begin{equation}
        C_r\left(\bar\Phi_{\vect j}^{|\delta_\rS,\delta_\rA}\right)=n\left[C_f\left(\rho\right)+\epsilon_n^{(7)}\right].
    \end{equation}
    Note that
    \begin{align}
        C_r\left(\bar\Phi_{\vect j}^{|\delta_\rS,\delta_\rA}\right)&=C_r\left(\Psi_{M_n}\right)+\frac{\sum\limits_mC_r\left(\Phi_{\vect j,m}^{|\delta_\rS,\delta_\rA}\right)}{M_n}\nonumber\\
        &=n\left[C_r(\rho)-\tilde\epsilon_n\right]+\frac{\sum\limits_mC_r\left(\Phi_{\vect j,m}^{|\delta_\rS,\delta_\rA}\right)}{M_n}.
    \end{align}
    Thus,
    \begin{equation}\label{uniavgm}
        M_n^{-1}\sum_mC_r\left(\Phi_{\vect j,m}^{|\delta_\rS,\delta_\rA}\right)=n\left[\ell(\rho)+\epsilon_n^{(8)}\right].
    \end{equation}
    Now recall \eqref{ranktail}, where we showed that for $\delta_\rR\gg\delta_\rA+O\left(\delta_\rJ\right)+O\left(\delta_\rS\right)$, the fraction of $m$ values (under the uniform measure $M_n^{-1}$) with $r_C\left(\Phi_{\vect j,m}^{|\delta_\rS,\delta_\rA}\right)>\exp_2\left(n\left[\ell(\rho)+\tilde\epsilon_n+\delta_\rR\right]\right)$ is no larger than $\exp_2\left(-n\delta_\rR\right)$. Define
    \begin{equation}
        \bar\cM_n^{\vect j,\delta_\rR}:=\left\{m:C_r\left(\Phi_{\vect j,m}^{|\delta_\rS,\delta_\rA}\right)\le n\left[\ell(\rho)+\tilde\epsilon_n+\delta_\rR\right]\right\}
    \end{equation}
    Since $C_r\le\log_2r_C$,
    \begin{equation}
        1-\frac{\abs{\bar\cM_n^{\vect j,\delta_\rR}}}{M_n}\le\exp_2\left(-n\delta_\rR\right).
    \end{equation}
    Thus, \eqref{uniavgm} implies
    \begin{equation}
        M_n^{-1}\sum\limits_{m\in\bar\cM_n^{\vect j,\delta_\rR}}C_r\left(\Phi_{\vect j,m}^{|\delta_\rS,\delta_\rA}\right)\ge n\left[\ell(\rho)+\epsilon_n^{(9)}\right].
    \end{equation}
    As in the context of \eqref{trvrje}, this is again a situation where the average value of a function is close to an extremal value. Choosing $\delta_\rR\approx\epsilon_n^{(10)}:=\sqrt{\epsilon_n^{(9)}}$ and making the other $\delta$'s small enough, we can apply similar arguments to get
    \begin{equation}\label{mprune}
        \frac{\abs{\cM_n^{\vect j,\pm\epsilon_n^{(10)}}}}{M_n}\ge1-\epsilon_n^{(10)},
    \end{equation}
    where
    \begin{align}
        \cM&{}_n^{\vect j,\pm\epsilon_n^{(10)}}:=\nonumber\\
        &\left\{m:\abs{\frac{C_r\left(\Phi_{\vect j,m}^{|\delta_\rS,\delta_\rA}\right)}n-\ell(\rho)}\le\epsilon_n^{(10)}\right\}.
    \end{align}
    Now defining
    \begin{equation}
        \ket{\bar\Phi_{\vect j}^{\delta_\rS,\delta_\rA,\epsilon_n^{(10)}}}:=M_n^{-1/2}\sum_{m\in\cM_n^{\vect j,\pm\epsilon_n^{(10)}}}\ket{\Phi_{\vect j,m}^{|\delta_\rS,\delta_\rA}},
    \end{equation}
    it follows from \eqref{mprune} that
    \begin{equation}
        \abs{\bket{\bar\Phi_{\vect j}^{\delta_\rS,\delta_\rA,\epsilon_n^{(10)}}}{\bar\Phi_{\vect j}^{|\delta_\rS,\delta_\rA}}}\ge1-\epsilon_n^{(10)}.
    \end{equation}
    Similarly defining $\ket{\Phi_{\vect j}^{\delta_\rS,\delta_\rA,\epsilon_n^{(10)}}}$ and then applying
    \begin{equation}
        \abs{\bket{\Phi_{\vect j}^{|\delta_\rS,\delta_\rA}}{\bar\Phi_{\vect j}^{|\delta_\rS,\delta_\rA}}}=\abs{\bket{\alpha}{\Psi_{M_n}}}\ge\sqrt{1-\epsilon_n^{(3)}}
    \end{equation}
    twice, we have
    \begin{align}
        \abs{\bket{\Phi_{\vect j}^{\delta_\rS,\delta_\rA,\epsilon_n^{(10)}}}{\Phi_{\vect j}^{\delta_\rS,\delta_\rA}}}&\approx\abs{\bket{\Phi_{\vect j}^{\delta_\rS,\delta_\rA,\epsilon_n^{(10)}}}{\bar\Phi_{\vect j}^{\delta_\rS,\delta_\rA}}}\nonumber\\
        &=\abs{\bket{\Phi_{\vect j}^{|\delta_\rS,\delta_\rA}}{\bar\Phi_{\vect j}^{\delta_\rS,\delta_\rA,\epsilon_n^{(10)}}}}\nonumber\\
        &\approx\abs{\bket{\bar\Phi_{\vect j}^{|\delta_\rS,\delta_\rA}}{\bar\Phi_{\vect j}^{\delta_\rS,\delta_\rA,\epsilon_n^{(10)}}}}.
    \end{align}
    Thus,
    \begin{equation}\label{fincomp}
        \abs{\bket{\Phi_{\vect j}^{\delta_\rS,\delta_\rA,\epsilon_n^{(8)}}}{\Phi_{\vect j}^{|\delta_\rS,\delta_\rA}}}\ge1-\epsilon_n^{(11)}.
    \end{equation}
    We now define
    \begin{equation}
        \tau_n:=\mathrm{norm}\sum_{\vect j\in\cT_n^{\delta_\rJ,\epsilon_n^{(3)}}}Q_n^{\delta_\rS,\delta_\rA}\left(\vect j\right)\Phi_{\vect j}^{\delta_\rS,\delta_\rA,\epsilon_n^{(10)}},
    \end{equation}
    where ``norm'' denotes normalization. The joint concavity of the square-root fidelity, together with \eqref{rnrnje} and \eqref{fincomp}, yields
    \begin{equation}
        F\left(\tau_n,\varrho_n\right)\ge1-\epsilon_n^{(12)}.
    \end{equation}
    Furthermore, every modification since \eqref{eigbound} has been among the following types:
    \begin{enumerate}
        \item Unnormalized projection of the entire operator;
        \item Unnormalized projection of a pure component in a convex decomposition;
        \item Renormalization of the entire operator by a factor close to 1.
    \end{enumerate}
    Therefore,
    \begin{equation}
        \tau_n\le\left(1+\epsilon_n^{(13)}\right)2^{-n\left[S(\rho)-\eta\left(\delta_\rS\right)\right]}.
    \end{equation}
    Defining $\bar S_n:=\left(\nrm{\tau_n}_\infty\right)^{-1}$,
    \begin{equation}
        T_n:=\bar S_n\tau_n
    \end{equation}
    satisfies $\tr\,T_n\approx\left(1-\epsilon_n^{(13)}\right)2^{n\left[S(\rho)-\eta\left(\delta_\rS\right)\right]}$ and $T_n\le\eins$. But this $T_n$ may not satisfy condition \ref{MTM}, since in general
    \begin{equation}
        M_n\eins_{\cI_m}T_n\eins_{\cI_m}\nleq\eins_{\cI_m}.
    \end{equation}
    This adds to our growing list of obstacles against completing our proof. A possible remedy for this particular challenge might be to appeal to the symmetric and random nature of the choice of partitioning (see \cite[Supplemental Material, Lemma 16]{WY16}; \cite[Prop.~2.4]{DW05}) to prune out a small fraction of offending $m$ values. Alternately, in the step \eqref{rje2} where (as we then explicated) we only restricted $\vect j$ to $\cT_n^{\delta_\rJ,\epsilon_n^{(3)}}$, we could additionally project each pure component $\ket{\Phi_{\vect j}^{|\delta_\rS,\delta_\rA}}$ onto $\cV$, which by its special structure automatically satisfies
    \begin{equation}
        M_n\eins_{\cI_m}\eins_\cV\eins_{\cI_m}\le\eins_{\cI_m}.
    \end{equation}
    If we therefore ensure that the overall $T_n$ (at this projection step) satisfies $T_n\lesssim\eins_\cV$, the subsequent modifications would still maintain $M_n\eins_{\cI_m}T_n\eins_{\cI_m}\lesssim\eins_{\cI_m}$, even though the eventual $T_n$ may not be supported on $\cV$. This would enable us to uphold condition \ref{MTM} by applying a normalization factor close to 1 (as in the other steps).

    However, projecting $\ket{\Phi_{\vect j}^{|\delta_\rS,\delta_\rA}}$ onto $\cV$ brings another problem: while we were able to approximate each $\ket{\Phi_{\vect j}^{|\delta_\rS}}$ with a near-uniform superposition $\ket{\Phi_{\vect j}^{|\delta_\rS,\delta_\rA}}$ by appealing to the former's tensor-product structure, no such structure is assured for $\ket{\Phi_{\vect j}^{|\delta_\rS,\delta_\rA,\cV}}$. Though we are able to show that this projected vector is close to the unprojected one, it is not close enough to assure the retention of the near-uniformity of the superposition. Here again, we speculate that the symmetry and randomness in the choice of partitioning might help argue for the existence of near-uniform approximations to the $\cV$-projected vectors.

    Finally, if everything worked out, the residual $T_n^\setminus$ would have trace exponentially smaller than that of $T_n$. Thus, the additional coherent measurement cost from it, as well as its contribution to the average (measured in logarithmic rank), would be negligible.
\end{proof}

\section{Potential SDP-based approach}\label{appsdp}
Here we discuss a possible semidefinite programming (SDP)\n based approach to the problem of estimating the coherent measurement cost of coherence distillation. We start with some definitions and results from Ringbauer \textit{et al.}\ \cite{RBC+18}. Define the set
\begin{align}
    \cC_k:&=\left\{\text{density operators }\sigma:\;r_C(\sigma)\le k\right\}\nonumber\\
    &=\mathrm{cvx}\left\{\psi:\;\bkt\psi=1,\,r_C(\psi)\le k\right\}.
\end{align}
The scope of the density operators is to be understood as determined by the underlying system Hilbert space, which we leave implicit.
\begin{defs}[Robustness of multilevel coherence]
    For a density operator $\rho$, its \emph{generalized robustness of $(k+1)$-coherence} is defined as
    \begin{equation}
        R_{\cC_k}(\rho):=\inf_{\sigma}\left\{s\ge0:\;\frac{\rho+s\sigma}{1+s}\in\cC_k\right\},
    \end{equation}
    where the infimum is over all density operators $\sigma$.
\end{defs}
For a system $\rA$, denote
\begin{equation}
\cP_{k}:=\left\{\cI\subseteq\cA:\;\left|\cI\right|=k\right\}.
\end{equation}
\begin{alem}[{\cite[Supplemental Material G]{RBC+18}}]
    The robustness of multilevel coherence can be cast as the following SDP:
    \begin{equation}
        \begin{array}{rl}
            R_{{\cC}_k}(\rho)=\min  &\tr\left(\sum_{\cI\in\mathcal{P}_k}\sigma_{\cI}\right)-1\\
            \textup{s.t.}           &\sum_{\cI\in\mathcal{P}_k}\sigma_{\cI}\geq\rho\\
                                    &\left. \begin{array}{l}
                                                \sigma_{\cI}\geq0\\
                                                \eins_{\cI}\sigma_{\cI}\eins_\cI=\sigma_\cI
                                            \end{array}\right\}\forall\cI\in\mathcal{P}_k.
        \end{array}
    \end{equation}
\end{alem}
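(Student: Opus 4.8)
The plan is to unfold both definitions and verify that the auxiliary density operator appearing in the robustness can be eliminated, leaving exactly the stated program. \textbf{Step 1: a block characterization of $\cC_k$.} A pure state $\psi=\proj\psi$ with $r_C(\psi)\le k$ has $\ket\psi$ supported on $\mathrm{span}\{\ket a:a\in\cI\}$ for some $\cI\in\cP_k$ (pad a smaller support up to size $k$, assuming as usual $k\le A$). Grouping the pure components of any convex decomposition of an element of $\cC_k$ according to the $\cI$ on which each component lives, and conversely noting that each normalized $\sigma_\cI$ is a mixture of pure states of coherence rank $\le k$, gives
\[
    \cC_k=\left\{\sum_{\cI\in\cP_k}\sigma_\cI:\;\sigma_\cI\ge0,\;\eins_\cI\sigma_\cI\eins_\cI=\sigma_\cI,\;\sum_{\cI\in\cP_k}\tr\sigma_\cI=1\right\}.
\]
Here one should also observe that $\cC_k$ is the convex hull of a compact set (a finite union over $\cI\in\cP_k$ of compact sets of pure states), hence compact and in particular closed; this is what legitimizes the regrouping above.

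\textbf{Step 2: eliminate the auxiliary state.} Substitute the block characterization into $R_{\cC_k}(\rho)=\inf_\sigma\{s\ge0:(\rho+s\sigma)/(1+s)\in\cC_k\}$. Given a feasible pair $(s,\sigma)$ with $\sigma$ a density operator, $\cC_k$-membership yields block operators $\{\tau_\cI\}$ with $\sum_\cI\tau_\cI=(\rho+s\sigma)/(1+s)$; putting $\sigma_\cI:=(1+s)\tau_\cI$ gives $\sum_\cI\sigma_\cI=\rho+s\sigma\ge\rho$ (using $s\ge0$, $\sigma\ge0$), with $\tr(\sum_\cI\sigma_\cI)-1=s$ and the $\sigma_\cI$ satisfying the SDP constraints. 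Conversely, from any SDP-feasible $\{\sigma_\cI\}$ set $\omega:=\sum_\cI\sigma_\cI\ge\rho$ and $s:=\tr\omega-1\ge0$: if $s>0$ then $\sigma:=(\omega-\rho)/s$ is a density operator and $(\rho+s\sigma)/(1+s)=\omega/\tr\omega\in\cC_k$; if $s=0$ then $\tr\omega=\tr\rho=1$ together with $\omega\ge\rho\ge0$ forces $\omega=\rho$, so $\rho\in\cC_k$ and $R_{\cC_k}(\rho)=0$, matching the SDP value $0$. Hence $R_{\cC_k}(\rho)=\inf\{\tr(\sum_\cI\sigma_\cI)-1:\text{SDP constraints hold}\}$.

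\textbf{Step 3: attainment.} To justify writing ``$\min$'', note feasibility (take $\sigma_\cI=\eins_\cI$ for all $\cI$, whose sum is a positive multiple of $\eins\ge\rho$), and that for any $c$ the set of feasible tuples $(\sigma_\cI)_{\cI}$ with $\tr(\sum_\cI\sigma_\cI)\le c$ is closed and bounded in the finite-dimensional space of tuples of Hermitian operators, hence compact; the objective is continuous, so a minimizer exists. \emph{Main obstacle:} there is really no deep difficulty here — the argument is definition-chasing within the positive-semidefinite cone. The two points that need care are (i) the regrouping in Step 1 of a convex decomposition of a general element of $\cC_k$ into at most $|\cP_k|$ block density operators, which is precisely where closedness/compactness of $\cC_k$ is used, and (ii) the boundary case $s=0$, where one must check that the formula still reproduces $R_{\cC_k}(\rho)=0$ exactly when $\rho\in\cC_k$.
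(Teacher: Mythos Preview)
Your proof is correct and complete; the three steps (block characterization of $\cC_k$, elimination of the auxiliary state, and attainment via compactness) are exactly the standard argument for casting a generalized robustness measure as an SDP. Note, however, that the paper does not actually prove this lemma: it is quoted as a result from Ringbauer \textit{et al.}\ \cite{RBC+18} (Supplemental Material G), so there is no in-paper proof to compare against. Your argument is essentially what one finds in that reference.
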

We now note that this definition can be extended to arbitrary positive-semidefinite operators (beyond density operators), by defining
\begin{equation}
    \begin{array}{rl}
        R_{{\cC}_k}(T):=\min    &\tr\left(\sum_{\cI\in\mathcal{P}_k}S_{\cI}-T\right)\\
        \textup{s.t.}           &\sum_{\cI\in\mathcal{P}_k}S_{\cI}\ge T\\
                                &\left. \begin{array}{l}
                                            S_{\cI}\ge0\\
                                            \eins_{\cI}S_{\cI}\eins_\cI=S_\cI
                                        \end{array}\right\}\forall\cI\in\mathcal{P}_k,
    \end{array}
\end{equation}
also an SDP. The resulting ``robustness'' measure is itself not normalized; but it has the convenient property of being (non)zero iff the corresponding value evaluated on the normalized density operator $\tau\equiv T/\tr T$ is (non)zero. Now, note that the coherence rank can be defined through the robustness measures as
\begin{equation}
    r_C(T):=\min\left\{k\in\bbZ_+:\;R_{{\cC}_k}(T)=0\right\}.
\end{equation}
This definition has the desirable property that $r_C(T)=r_C(xT)\;\forall x>0$. Thereby, it lends itself well to use in quantifying the coherent measurement cost of a channel $\cE$ in terms of the requisite measurement coherence rank $r_C\left(T_\cE\right)$. This quantity can be computed directly through (iterations over instances of) the latter SDP, without the tricky business of having to normalize $T_\cE$. Thus, using the properties of the target effect construction, we have the following:
\begin{obs}
    Given an input state $\rho^\rA$ and some $M,k\in[A]+1$ and $\epsilon\in\bbR_+$, distilling $\Psi_M$ with fidelity $\ge1-\epsilon$ by MIO requires measurements of coherence rank $>k$ if $R_k^\epsilon\left(\rho,A,M\right)>0$, where
    \begin{equation}
    \begin{array}{rl}
        R_k^\epsilon\left(\rho,A,M\right)&:=\\
        \min    &\tr\left(\sum_{\cI\in\mathcal{P}_k}S_{\cI}-T\right)\\
        \textup{s.t.}           &0\le T\le\eins\\
        &\tr\left(T\proj a\right)=M^{-1}\quad\forall a\in[A]\\
        &\tr(\rho T)\ge1-\epsilon\\
                                &\sum_{\cI\in\mathcal{P}_k}S_{\cI}\ge T\\
        &\left. \begin{array}{l}
                                            S_{\cI}\ge0\\
                                            \eins_{\cI}S_{\cI}\eins_\cI=S_\cI
                                        \end{array}\right\}\quad\forall\cI\in\mathcal{P}_k.
    \end{array}
    \end{equation}
    Therefore, the requisite coherence rank for the task is no smaller than
    \begin{align}
        r_C^\epsilon\left(\rho,A,M\right)&:=\nonumber\\
        &\min\left\{k\in\bbZ_+:R_k^\epsilon\left(\rho,A,M\right)=0\right\}.
    \end{align}
\end{obs}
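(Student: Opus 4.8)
The plan is to prove the contrapositive through the target effect of Definition \ref{defTE}. Suppose an MIO channel $\cE^{\rA\to\rM}$, with $\rM$ taken $M$-dimensional, satisfies $F\left[\cE(\rho),\Psi_M\right]\ge1-\epsilon$, and set $T_\cE:=\cE^\dagger\left(\Psi_M\right)$. I would first show that $T_\cE$ is a feasible value of the variable $T$ in the SDP defining $R_k^\epsilon\left(\rho,A,M\right)$; granting that, if moreover $r_C\left(T_\cE\right)\le k$, then a decomposition witnessing this low coherence rank completes the program to a solution of objective value $0$, forcing $R_k^\epsilon=0$. Contraposing then gives the claim, and taking $k=r_C^\epsilon-1$ converts it into the asserted lower bound.

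Checking feasibility of $T_\cE$ is where I would spend the (modest) effort. By Definition \ref{defTE} it is an effect, so $0\le T_\cE\le\eins^\rA$. By Observation \ref{tbound}, $\bra aT_\cE\ket a=\tr\left[\cE\left(\proj a\right)\Delta\left(\Psi_M\right)\right]$ for all $a\in\cA$; since $\Delta\left(\Psi_M\right)=\eins^\rM/M$ and $\cE$ is trace preserving, the right-hand side is $M^{-1}\tr\cE\left(\proj a\right)=M^{-1}$, which is exactly the per-label constraint. Finally, because $\Psi_M$ is pure, $F\left[\cE(\rho),\Psi_M\right]=\bra{\Psi_M}\cE(\rho)\ket{\Psi_M}=\tr\left[\Psi_M\,\cE(\rho)\right]$, and by the defining identity \eqref{trkt} the last expression equals $\tr\left(\rho\,T_\cE\right)$, so the fidelity constraint $\tr\left(\rho\,T_\cE\right)\ge1-\epsilon$ holds.

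To close, recall from the extension of the robustness to unnormalized positive-semidefinite operators (discussed just above the statement) that $r_C\left(T_\cE\right)\le k$ is equivalent to $R_{\cC_k}\left(T_\cE\right)=0$, i.e.\ to the existence of $S_\cI\ge0$ with $\eins_\cI S_\cI\eins_\cI=S_\cI$, $\cI\in\cP_k$, and $\sum_{\cI\in\cP_k}S_\cI=T_\cE$. Taking $T=T_\cE$ with these $S_\cI$ satisfies every constraint of the $R_k^\epsilon$ program and attains objective $\tr\left(\sum_{\cI\in\cP_k}S_\cI-T_\cE\right)=0$, so $R_k^\epsilon\left(\rho,A,M\right)=0$; contrapositively $R_k^\epsilon>0$ forces $r_C\left(T_\cE\right)>k$ for every MIO channel realizing the task. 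Since an operator supported on a subalphabet of size $j<k$ is supported on some size-$k$ subalphabet, any size-$j$ decomposition lifts to a size-$k$ one with the same $T$ and objective, so $R_k^\epsilon$ is non-increasing in $k$; and whenever the task is feasible the value at $k=A$ is $0$ (then $\cP_A=\{\cA\}$ and one may take $S_\cA=T_\cE$). Hence $\left\{k\in\bbZ_+:R_k^\epsilon\left(\rho,A,M\right)=0\right\}$ is nonempty and upward closed, $r_C^\epsilon\left(\rho,A,M\right)$ is well defined as its minimum, and for every $k<r_C^\epsilon$ we get $R_k^\epsilon>0$ and thus $r_C\left(T_\cE\right)>k$, giving $r_C\left(T_\cE\right)\ge r_C^\epsilon\left(\rho,A,M\right)$.

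I do not anticipate a real obstacle: the argument is essentially the recognition that the three defining properties of the target effect---positivity and boundedness, the per-incoherent-label value $M^{-1}$, and ``fidelity with $\Psi_M$ equals the expectation of $T_\cE$''---are precisely the constraints imposed on $T$ in the SDP, together with the fact that $r_C$ and $R_{\cC_k}$ extended to general positive-semidefinite operators enjoy the scale-invariance $r_C(xT)=r_C(T)$ for $x>0$, which makes ``$r_C\left(T_\cE\right)\le k$'' literally the vanishing of the objective (no intervening normalization needed). The one genuinely delicate point is the per-label equality $\tr\left(T_\cE\proj a\right)=M^{-1}$: it relies on the output system being truly $M$-dimensional, equivalently on the image of $\cE$ being confined to the $M$ levels carrying $\Psi_M$. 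For a distillate living on a larger output system one would instead have to relax this equality to the weaker trace bound $\tr T_\cE\ge A/M$ in the spirit of Remark \ref{remtrace2}, obtaining a looser but more broadly applicable version of the SDP.
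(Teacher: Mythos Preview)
Your proof is correct and follows exactly the approach the paper intends: the observation is stated without an explicit proof, immediately after the paragraph explaining that $r_C(T_\cE)$ (with the robustness extended to unnormalized effects) quantifies the coherent measurement cost, so your spelling-out of ``$T_\cE$ is feasible for $T$; if $r_C(T_\cE)\le k$ then the witnessing decomposition gives objective $0$; contrapose'' is precisely the argument the paper is gesturing at. Your added remarks on the monotonicity of $R_k^\epsilon$ in $k$, the well-definedness of $r_C^\epsilon$, and the output-dimension caveat are welcome clarifications beyond what the paper provides.
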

Given any specific tuple of $(A,M,\rho,\epsilon,k)$, the computation of $R_k^\epsilon\left(\rho,A,M\right)$ is an SDP. But it may not be efficient in terms of these parameters' sizes: in particular, $\abs{\cP_k}=\left(\begin{array}{c}
     A\\
     k 
\end{array}\right)$ grows roughly exponentially with $A$. Computing $r_C^\epsilon\left(\rho,A,M\right)$ would further entail iterating over many $k$ values, potentially up to $k\approx A/M$.

Nevertheless, we hope that the above SDP formulation of $R_k^\epsilon\left(\rho,A,M\right)$ affords insights into the behaviour of $r_C^\epsilon\left(\rho,A,M\right)$, e.g.\ for the case of maximal asymptotic distillation. We leave this for future work as a potential approach towards settling our conjectures.

\section{Lami's SIO monotones under rank-constrained IO}\label{secmuk}
In order to tease out SIO-distillable coherence, Lami \cite{lami_2019} introduced a family of functions $\mu_k$ that are monotones under SIO but not under general IO. For any $\rho^\rA$, let
\begin{equation}\label{rrho}
    R^\rho:=\left[\Delta(\rho)\right]^{-1/2}\rho\left[\Delta(\rho)\right]^{-1/2},
\end{equation}
with the inverse defined on the support of $\Delta(\rho)$. Then,
\begin{equation}
    \mu_k(\rho):=\max_{\cI\subseteq\cA:\,\abs{\cI}=k}\log_2\nrm{\eins_\cI R^\rho\eins_\cI}_\infty.
\end{equation}
One way of approaching the problem of quantifying the requisite coherent measurement cost of coherence distillation would be to study the behaviour of these SIO monotones under IO with constrained coherent measurement action.
\begin{defs}[$L$-ary IO]\label{lary}
    An IO $\cE$ is \emph{$L$-ary} if it can be decomposed in terms of IO Kraus operators each of whose rows has $\le L$ nonzero entries.
\end{defs}
For convenience, also define
\begin{widetext}
\begin{equation}
\cP_{k,L}:=\left\{\sI\equiv\left\{\cI_m\subseteq\cA\right\}_{m\in[k]}:\;\left|\cI_m\right|\le L\;\&\;\left|\cI_m\cap \cI_{m'}\right|\propto\delta_{mm'}\;\forall m,m'\in[k]\right\}.
\end{equation}
\end{widetext}
At first glance, it may seem that $\cP_{k,1}\cong\cP_{k}$ in the notation of Appendix \ref{appsdp} (where we use ``$\cong$'' to signify the identification of $\sI\equiv\left\{\{j_m\}\right\}_{m\in[k]}$ with $\cI\equiv\{j_m\}_{m\in[k]}$). But actually, since $\abs{\cI_m}$ is allowed to be $<L$ in the definition of $\cP_{k,L}$, we have, instead, $\cP_{k,1}\cong\bigcup\limits_{l=0}^k\cP_{l}$.
\begin{remm}
	A generic $L$-ary IO Kraus operator with an $M$-dimensional output has the form
	\begin{equation}\label{genlio}
	K=\sum_m\kbra{m}{v_m},
	\end{equation}
	where $\ket{v_m}\in\mathrm{span}\left\{\ket i\right\}_{i\in \cI_m}$ for some $\sI\in\cP_{M,L}$.
\end{remm}
We will now perform an $L$-ary construction analogous to the definition \eqref{rrho}. For any given $\sI\in\cP_{M,L}$, define
\begin{equation}
\Delta_\sI(\cdot):=\sum_m\eins_{\cI_m}(\cdot)\eins_{\cI_m},
\end{equation}
and therewith, for any $\rho$,
\begin{equation}
R^\rho_\sI:=\left[\Delta_\sI(\rho)\right]^{-1/2}\rho\left[\Delta_\sI(\rho)\right]^{-1/2}.
\end{equation}
It is worth noting that
\begin{equation}\label{rid}
\eins_{\cI_m}R^\rho_\sI\eins_{\cI_m}=\eins_{\rho_{\cI_m}\equiv\eins_{\cI_m}\rho\eins_{\cI_m}}.
\end{equation}
Now define
\begin{equation}\label{maxlio}
\mu_{M,L}(\rho):=\max_{\sI\in\cP_{M,L}}\log_2\left\|R^\rho_\sI\right\|_\infty.
\end{equation}
\begin{obs}
	Given an input $\rho$,
	\begin{enumerate}
		\item For an individual $L$-ary IO Kraus operator $K$ that partitions its input according to some $\sI\in\cP_{M,L}$, let $\sigma:=K\rho K^\dagger$. Then,  $\left\|R^\sigma\right\|_\infty\le\left\|R^\rho_\sI\right\|_\infty$. By convexity, $\mu_M\left[\cE(\rho)\right]\le\mu_{M,L}(\rho)$ for any $L$-ary IO $\cE$.
		\item For each $\sI\in\cP_{M,L}$, there exists an $L$-ary IO Kraus operator $K$ partitioning by $\sI$ such that, for $\sigma:=K\rho K^\dagger$, $\left\|R^\sigma\right\|_\infty=\left\|R^\rho_\sI\right\|_\infty$.
	\end{enumerate}
	Consequently,
	\begin{equation}
	\max_{L\textnormal{-ary IO }\cE}\mu_M\left[\cE(\rho)\right]=\mu_{M,L}(\rho).
	\end{equation}
\end{obs}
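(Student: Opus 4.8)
The plan is to reduce both $\nrm{R^\rho_\sI}_\infty$ and $\nrm{R^{K\rho K^\dagger}}_\infty$ to Rayleigh-quotient optimizations over vectors, after which Part~1 is a one-line domain restriction and Part~2 an explicit choice of optimizer. First I would record the identity $\nrm{R^\rho_\sI}_\infty=\max\bigl\{\bra\phi\rho\ket\phi/\bra\phi\Delta_\sI(\rho)\ket\phi:\ \ket\phi\in\mathrm{supp}\,\Delta_\sI(\rho)\bigr\}$, which follows from $\nrm{R^\rho_\sI}_\infty=\max_{\nrm\eta=1}\bra\eta[\Delta_\sI(\rho)]^{-1/2}\rho[\Delta_\sI(\rho)]^{-1/2}\ket\eta$ by the substitution $\ket\eta=[\Delta_\sI(\rho)]^{1/2}\ket\phi$ (the pseudoinverse being taken on $\mathrm{supp}\,\Delta_\sI(\rho)$). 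Next, for an $L$-ary IO Kraus operator $K=\sum_m\kbra m{v_m}$ whose $m$-th row is supported on $\cI_m$, with the $\cI_m$ pairwise disjoint (precisely the IO condition), I would set $\sigma:=K\rho K^\dagger$, so $\sigma_{mm'}=\bra{v_m}\rho\ket{v_{m'}}$, and rewrite $\bra\psi R^\sigma\ket\psi=\bra W\rho\ket W$ with $\ket W:=\sum_m(\psi_m/\sqrt{\sigma_{mm}})\ket{v_m}$. The one substantive bookkeeping point is that disjointness of the $\cI_m$ gives $\eins_{\cI_m}\ket W=(\psi_m/\sqrt{\sigma_{mm}})\ket{v_m}$, hence $\bra W\Delta_\sI(\rho)\ket W=\sum_m\abs{\psi_m}^2=\nrm\psi^2$; therefore $\nrm{R^\sigma}_\infty=\max\bigl\{\bra W\rho\ket W/\bra W\Delta_\sI(\rho)\ket W:\ \ket W\in\mathrm{span}\{\ket{v_m}\}_m\bigr\}$. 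A short preliminary lemma — using $\rho\ge0$ together with the fact that the component of $\ket{v_m}$ in $\ker(\eins_{\cI_m}\rho\eins_{\cI_m})$ also lies in $\ker\rho$ — lets me assume WLOG that each $\ket{v_m}\in\mathrm{supp}(\eins_{\cI_m}\rho\eins_{\cI_m})$, so that $\mathrm{span}\{\ket{v_m}\}_m\subseteq\mathrm{supp}\,\Delta_\sI(\rho)$ and the relevant $\sigma_{mm}$ are strictly positive.

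With these reformulations, Part~1 is immediate: the optimization for $\nrm{R^\sigma}_\infty$ is the one for $\nrm{R^\rho_\sI}_\infty$ restricted to the subspace $\mathrm{span}\{\ket{v_m}\}_m$, so $\nrm{R^\sigma}_\infty\le\nrm{R^\rho_\sI}_\infty$. For the upgrade to an arbitrary $L$-ary IO $\cE(\cdot)=\sum_cK_c(\cdot)K_c^\dagger$ — which by Definition~\ref{lary} may be a subchannel — I would write $\cE(\rho)=\sum_c\sigma_c$ with $\sigma_c=K_c\rho K_c^\dagger$, so $\Delta(\cE(\rho))=\sum_c\Delta(\sigma_c)$, and apply the mediant inequality to $\bra\phi\cE(\rho)\ket\phi/\bra\phi\Delta(\cE(\rho))\ket\phi=\bigl(\sum_c\bra\phi\sigma_c\ket\phi\bigr)/\bigl(\sum_c\bra\phi\Delta(\sigma_c)\ket\phi\bigr)$ to get $\nrm{R^{\cE(\rho)}}_\infty\le\max_c\nrm{R^{\sigma_c}}_\infty$; since each $K_c$ partitions its input by some $\sI_c\in\cP_{M,L}$, each term is $\le\max_{\sI\in\cP_{M,L}}\nrm{R^\rho_\sI}_\infty=2^{\mu_{M,L}(\rho)}$ by the single-Kraus bound. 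Taking $\log_2$ gives $\mu_M[\cE(\rho)]\le\mu_{M,L}(\rho)$, hence $\max_{L\text{-ary IO }\cE}\mu_M[\cE(\rho)]\le\mu_{M,L}(\rho)$.

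For Part~2, fix $\sI\in\cP_{M,L}$, let $\ket{\phi^*}\in\mathrm{supp}\,\Delta_\sI(\rho)$ attain $\nrm{R^\rho_\sI}_\infty$, and put $\ket{v_m}:=\eins_{\cI_m}\ket{\phi^*}$ and $K:=\sum_m\kbra m{v_m}$, rescaled so that $K^\dagger K\le\eins^\rA$ (harmless, as $R^{K\rho K^\dagger}$ is invariant under $K\mapsto tK$). Then $K$ is manifestly an $L$-ary IO Kraus operator partitioning by $\sI$, and $\cE(\cdot):=K(\cdot)K^\dagger$ is an $L$-ary IO subchannel. Since $\ket{\phi^*}=\sum_m\ket{v_m}\in\mathrm{span}\{\ket{v_m}\}_m$, the quotient optimization for $\nrm{R^{K\rho K^\dagger}}_\infty$ already includes the value $\bra{\phi^*}\rho\ket{\phi^*}/\bra{\phi^*}\Delta_\sI(\rho)\ket{\phi^*}=\nrm{R^\rho_\sI}_\infty$, so $\nrm{R^{K\rho K^\dagger}}_\infty\ge\nrm{R^\rho_\sI}_\infty$; combined with Part~1 this is an equality. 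The final identity then follows at once: Part~1 plus the quasi-convexity above give the $\le$ direction, and taking $\sI$ to be a maximizer in $\mu_{M,L}(\rho)=\max_{\sI\in\cP_{M,L}}\log_2\nrm{R^\rho_\sI}_\infty$ and applying Part~2 exhibits an $L$-ary IO $\cE$ attaining it.

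I expect the step needing the most care — the remainder being routine linear algebra with pseudoinverses and the mediant inequality — to be the identity $\bra W\Delta_\sI(\rho)\ket W=\nrm\psi^2$ together with the support reduction $\ket{v_m}\in\mathrm{supp}(\eins_{\cI_m}\rho\eins_{\cI_m})$; it is exactly the pairwise disjointness of the partition $\{\cI_m\}_m$ (equivalently, that the rows of $K$ have disjoint incoherent supports, i.e.\ $K$ is IO and not merely has short rows) that makes $\Delta_\sI$ — not the full dephasing $\Delta$ — the operator in the denominator, and thereby makes the two Rayleigh quotients comparable. A minor convenience is that Definition~\ref{lary} admits subchannels, so the single Kraus operator of Part~2 is itself an admissible $\cE$; if one insisted on a trace-preserving $\cE$ it could in any case be completed within the $L$-ary IO class by standard means.
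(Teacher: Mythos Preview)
Your argument is correct and essentially identical to the paper's. Both proofs hinge on the same substitution relating $R^\sigma$ to $R^\rho_\sI$: the paper sets $\ket{u_m}:=\sqrt{\rho_{\cI_m}}\ket{v_m}$ and compares $\bra\phi R^\sigma\ket\phi$ with $\bra\psi R^\rho_\sI\ket\psi$ directly, while you invert this substitution and phrase everything as a Rayleigh-quotient restriction $\bra W\rho\ket W/\bra W\Delta_\sI(\rho)\ket W$; these are the same computation read in opposite directions (indeed your $\ket W$ equals the paper's $[\Delta_\sI(\rho)]^{-1/2}\ket\psi$). Two small presentational differences: (i) for the multi-Kraus extension the paper simply writes ``by convexity'' whereas you spell out the mediant inequality, which is a clean way to make that step explicit; (ii) for the attainability clause the paper completes $K$ to a full channel by appending Kraus operators with disjoint output supports, whereas you observe that subchannels are already admissible under the paper's definition of IO\m a tidier route to the same conclusion.
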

\begin{proof}
	Let
	\begin{equation}
	K=\sum_m\ket m\bra{v_m},
	\end{equation}
	where $\ket{v_m}\in\mathrm{span}\left\{\ket a\right\}_{a\in \cI_m}$ for some $\sI\in\cP_{M,L}$. For every $m$, define $\ket{u_m}:=\sqrt{\rho_{\cI_m}}\ket{v_m}\in\mathrm{supp}\left(\rho_{\cI_m}\right)$. Then, for $\sigma:=K\rho K^\dagger$,
	\begin{align}
		\bra{m_1}\sigma\ket{m_2} & =\bra{v_{m_1}}\rho\ket{v_{m_2}}\nonumber                                         \\
		                         & =\bra{v_{m_1}}\sqrt{\rho_{\cI_{m_1}}}R^\rho_\sI\sqrt{\rho_{\cI_{m_2}}}\ket{v_{m_2}}\nonumber \\
		                         & =\bra{u_{m_1}}R^\rho_\sI\ket{u_{m_2}}                                            
	\end{align}
	for all $m_1$, $m_2$. Therefore,
	\begin{align}\label{sigrho}
		R^\sigma & =\sum_{m_1,m_2}\frac{\proj{m_1}\sigma\proj{m_2}}{\sqrt{\bra{m_1}\sigma\ket{m_1}\bra{m_2}\sigma\ket{m_2}}}\nonumber                                                     \\
		         & =\sum_{m_1,m_2}\frac{\ket{m_1}\bra{u_{m_1}}R^\rho_\sI\ket{u_{m_2}}\bra{m_2}}{\sqrt{\bra{u_{m_1}}R^\rho_\sI\ket{u_{m_1}}\bra{u_{m_2}}R^\rho_\sI\ket{u_{m_2}}}}\nonumber \\
		         & =\sum_{m_1,m_2}\frac{\ket{m_1}\bra{u_{m_1}}R^\rho_\sI\ket{u_{m_2}}\bra{m_2}}{\sqrt{\bra{u_{m_1}}\left.u_{m_1}\right\rangle\bra{u_{m_2}}\left.u_{m_2}\right\rangle}},   
	\end{align}
	the last line following from \eqref{rid}. Now, for any $\ket\phi\in\mathrm{supp}\left(R^\sigma\right)$, define
	\begin{equation}
	\ket\psi:=\sum_m\frac{\ket{u_m}\bra m\phi\rangle}{\sqrt{\bra{u_m}\left.u_m\right\rangle}}.
	\end{equation}
	$\bra\psi\psi\rangle=\bra\phi\phi\rangle$ holds by construction, while $\bra\psi R^\rho_\sI\ket\psi=\bra\phi R^\sigma\ket\phi$ follows from \eqref{sigrho}. Thus, $\left\|R^\sigma\right\|_\infty\le\left\|R^\rho_\sI\right\|_\infty$.
	
	Conversely, given any $\sI\in\cP_{M,L}$ and $\ket\psi:=\sum\limits_m\ket{u_m}$
	with $\ket{u_m}\in\mathrm{supp}\left(\rho_{\cI_m}\right)$, define $\ket{v_m}:=\rho_{\cI_m}^{-1/2}\ket{u_m}$ for each $m$. Then, $K:=\sum_m\ket m\bra{v_m}$ is an $L$-ary IO Kraus operator such that, for $\sigma:=K\rho K^\dagger$, \eqref{sigrho} holds. Defining
	\begin{equation}
	\ket\phi:=\sum_m\sqrt{\bra{u_m}\left.u_m\right\rangle}\ket m,
	\end{equation}
	we obtain $\bra\phi\phi\rangle=\bra\psi\psi\rangle$ and $\bra\phi R^\sigma\ket\phi=\bra\psi R^\rho_\sI\ket\psi$. Thus, $\left\|R^\sigma\right\|_\infty\le\left\|R^\rho_\sI\right\|_\infty$ can be saturated by choosing $\ket\psi$ appropriately.
	
	Finally, if $\sI$ is a partitioning that attains the maximum on the right side of \eqref{maxlio}, and $K$ a Kraus operator constructed as above to saturate $\left\|R^\sigma\right\|_\infty\le\left\|R^\rho_\sI\right\|_\infty$, we can complete this to an $L$-ary IO channel $\cE$ with other Kraus operators whose output spaces don't overlap with that of $K$. This $\cE$ then attains the maximum.
\end{proof}
This bound is tight when the maximization is over \emph{all} $L$-ary IO $\cE$. But what is relevant in the context of distillation is the $\mu_M$ for $M$ equalling the dimension of the channel's output. Clearly, this is a much more constrained quantity and would generically not attain the bound we have found. Specifically, the argument used in the last paragraph of the proof would fail when the output space is constrained to be $M$-dimensional. If we are to use this bound to pin down the distillation rate under $L$-ary IO, we need to understand the behaviour of $\mu_{M,L}$ under tensor product copies (i.e.,\ an analog of \cite[Proposition 16]{lami_2019}). By virtue of asymptotic typicality, we might get by without having to incorporate the constraint mentioned above.

\section{Decoupling schemes}\label{secdec}
Here we will study a type of decoupling task, whereof coherence distillation is a special case. In particular, we will look at \emph{exact deterministic decoupling of a pure output}. The input in this task is some state $\rho$, and the goal is to apply a channel $\cE$ such that $\cE(\rho)=\proj\alpha$ for some specified pure state $\ket\alpha:=\sum_m\alpha_m\ket m$. On the face of it, it may appear superfluous to view this as decoupling\m for example, $\cE$ could just ignore the input and prepare the required output. But the utility of the decoupling perspective will become apparent presently.

Recall that exactly producing a pure output $\proj\alpha^\rM$ from $\rho^\rA$ entails that the channel $\cE$ map the entire space $\cL\left(\cV\equiv\mathrm{supp}\rho\right)$ to (scalar multiples of) $\proj\alpha$. Let $S:=\dim\cV$ and $\left\{\ket{v_s}\right\}_{s\in\cS}$ be a $\cV$ basis. Then, $\cE$ must have a dilation $V$ with the action
\begin{equation}
V^{\rA\to\rC\rM}\ket{v_s}^\rA=\ket{u_s}^\rC\ket{\alpha}^\rM
\end{equation}
with orthonormal $\left\{\ket{u_s}\right\}_{s\in\cS}$. If $V$ can be completed to a unitary within $\cH^\rA$, we then have some well-defined $\ket{v_{s,m}}^\rA:=V^\dagger\left(\ket{u_s}^\rC\ket m^\rM\right)\in\cH^\rA$. Again by unitarity, the collection $\sV_m:=\left\{\ket{v_{s,m}}\right\}_{s\in\cS}$ for each $m\in\cM$ must be orthonormal; in fact, the entire collection $\sV_\cM:=\bigcup\limits_{m\in\cM}\sV_m$ must be orthonormal.

The conditions above are already necessary for the existence of a unitary $V^{\rA\to\rC\rM}$ that accomplishes the task; the requirement that it be an IO dilation will bring in further constraints. But we will ignore these for the time being and try to learn what we can about unitary decoupling in general.
\begin{defs}[$\alpha$-decoupling scheme]\label{ddec2}
    For a dimension-$S$ subspace $\cV$ of a Hilbert space $\cH^\rA\cong\bbC^A$ and a unit vector $\alpha\equiv\left(\alpha_m\right)_{m\in\cM\equiv[M]}\in\bbC^M$, a collection $\sV_\cM\equiv\left\{\ket{v_{s,m}}^\rA\in\cH^\rA\right\}_{m\in\cM,s\in\cS}=\bigcup_m\left(\sV_m\equiv\left\{\ket{v_{s,m}}^\rA\right\}_s\right)$ of orthonormal vectors is an \emph{$\alpha$-decoupling scheme for $\cV$ in $\cH^\rA$} if $\sV\equiv\left\{\ket{v_s}=\sum_m\alpha_m\ket{v_{s,m}}\right\}_s$ is a basis for $\cV$.
\end{defs}
\begin{obs}\label{odec}
    For an $\alpha$-decoupling scheme $\sV_\cM$ in $\cH^\rA$ for $\cV$, let $\ket\alpha:=\sum_m\alpha_m\ket m$ and $\cV_\cM:=\mathrm{span}\sV_\cM$. For some system $\rC$ of dimensionality $C\ge A/M$, define a unitary $V^{\rA\to\rC\rM}$ whose action on $\cV_\cM$ is given by
    \begin{equation}\label{Vdec}
        V^{\rA\to\rC\rM}_{\cV_\cM}=\sum_{s\in\cS,m\in\cM}\ket s^\rC\ket m^\rM\bra{v_{s,m}}^\rA.
    \end{equation}
    Then, for any $\ket v\equiv\sum_s\xi_s\ket{v_s}\in\cV$,
    \begin{equation}
        V\ket v=\left(\sum_s\xi_s\ket s\right)^\rC\otimes\ket\alpha^\rM.
    \end{equation}
\end{obs}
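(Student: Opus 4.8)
The plan is a direct verification: expand the defining relation of the $\alpha$-decoupling scheme together with the explicit form \eqref{Vdec} of $V$ on $\cV_\cM$, and collapse the resulting double sum by orthonormality. The one preliminary I would settle first is that \eqref{Vdec} really does specify a legitimate (co)isometry. Since $\sV_\cM$ consists of $SM$ orthonormal vectors in $\cH^\rA\cong\bbC^A$ (with $S=\dim\cV$), necessarily $SM\le A$, whence $S\le A/M\le C$; so the target vectors $\ket s^\rC\ket m^\rM$, $s\in\cS$, $m\in\cM$, themselves form a well-defined orthonormal family in $\cH^{\rC\rM}$. The assignment $\ket{v_{s,m}}\mapsto\ket s^\rC\ket m^\rM$ therefore extends linearly to an isometry from $\cV_\cM$ into $\cH^{\rC\rM}$, and since $CM\ge A\ge\dim\cV_\cM$ it can be completed to an isometry $V^{\rA\to\rC\rM}$ on all of $\cH^\rA$ — a genuine unitary precisely when $C=A/M$ (with $M\mid A$), and otherwise to be read as an isometry, which is all the downstream arguments use.

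With this in hand the identity is immediate. For $\ket v=\sum_s\xi_s\ket{v_s}\in\cV$, I would substitute the scheme's defining relation $\ket{v_s}=\sum_m\alpha_m\ket{v_{s,m}}$ to write $\ket v=\sum_{s,m}\xi_s\alpha_m\ket{v_{s,m}}$, which manifestly lies in $\cV_\cM=\mathrm{span}\,\sV_\cM$; hence $V\ket v=V_{\cV_\cM}\ket v$, and applying \eqref{Vdec} with $\bket{v_{s',m'}}{v_{s,m}}=\delta_{s's}\delta_{m'm}$ gives
\begin{align*}
    V\ket v&=\sum_{s',m'}\ket{s'}^\rC\ket{m'}^\rM\,\bket{v_{s',m'}}{v}=\sum_{s,m}\xi_s\alpha_m\,\ket s^\rC\ket m^\rM\\
    &=\left(\sum_s\xi_s\ket s\right)^\rC\otimes\left(\sum_m\alpha_m\ket m\right)^\rM=\left(\sum_s\xi_s\ket s\right)^\rC\otimes\ket\alpha^\rM,
\end{align*}
as claimed.

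There is essentially no hard step: the statement is a bookkeeping identity saying that a decoupling scheme is exactly the data needed to dilate the operation ``map $\cV$ to $\alpha$'' unitarily. If pressed to name the only point requiring a moment's care, it is the dimension count — confirming $S\le C$ so that the $\ket s^\rC$ exist, and $\dim\cV_\cM\le CM$ so that the isometry extends — both of which follow from the orthonormality of $\sV_\cM$ inside $\bbC^A$ together with the hypothesis $C\ge A/M$; and, if one insists on a genuine unitary rather than an isometry, the divisibility assumption $C=A/M$ familiar from Lemma \ref{lwcu}. The conceptual content I would emphasize in the writeup is not the computation but the converse reading: every $\cV\to\alpha$ distillation dilates through some such $V$, so decoupling schemes furnish a complete combinatorial parametrization of these dilations, against which the later IO-specific constraints can be imposed.
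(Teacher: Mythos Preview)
Your proposal is correct and is precisely the direct verification the paper has in mind; in fact the paper omits the proof entirely, remarking that the observation is straightforward. Your added dimension-count discussion (ensuring $S\le C$ and that the isometry extends, with the unitary case requiring $C=A/M$) is a welcome clarification that the paper leaves implicit.
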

We omit a proof for the above observation, since it is straightforward. Nevertheless, it is instructive in explicating the operational motivation for our definition of the term ``$\alpha$-decoupling scheme''. It also suggests that the decoupling perspective on the problem has potential nontrivial utility when the space $\cH^\rA$ within which the channel can be dilated is explicitly considered.

So far we restricted ourselves to cases where it can be done through unitary action within $\cH^\rA$. Such a unitary gave us a decoupling scheme consisting of orthogonal vectors $\ket{v_{s,m}}^\rA$. But in general, we need to consider decoupling isometries that may not be completable to a unitary within $\cH^\rA$. The associated analogs to decoupling schemes could then be oblique and overcomplete. To understand these cases, let us once again examine the basic condition on an $\alpha$-decoupling isometry $V$ on a space $\cV$:
\begin{equation}
V^{\rA\to\rC\rM}\ket{v_s}^\rA=\ket{u_s}^\rC\ket{\alpha}^\rM,
\end{equation}
where $\ket{v_s}$ is a basis on $\cV$. Now let us expand $\rA$ to some $\bar\rA$, of dimensionality $\bar A:=CM$, such that $V^{\rA\to\rC\rM}$ can be completed to the unitary $V^{\bar\rA\to\rC\rM}$. By assumption, $V^\dagger\ket{u_s}^\rC\ket{\alpha}^\rM=\ket{v_s}^\rA=:\ket{\bar v_s}^{\bar\rA}$. Indeed, if we complete $\left\{\ket{u_s}\right\}_{s\in\cS}$ to an $\cH^\rC$ basis $\left\{\ket{u_c}\right\}_{c\in\cC}$ and thereby define $\ket{\bar v_c}^{\bar\rA}:=V^\dagger\ket{u_c}^\rC\ket{\alpha}^\rM$ for all $c\in\cC$, then by construction, $\bar\cV:=\mathrm{span}\left[\bar\sV\equiv\left\{\ket{\bar v_c}^{\bar\rA}\right\}_{c\in\cC}\right]$ is an $\alpha$-decouplable subspace of $\cH^{\bar\rA}$.

Recalling our ultimate aim to estimate the coherent measurement cost of an IO whose dilation is some $V$ as above, let us inspect a set of Kraus operators that would implement the combined action of the embedding of $\rA$ in $\bar\rA$, then the unitary $V$, and finally a $\rC$\n partial trace; since we have kept $\rC$ and $\ket{u_c}^\rC$ generic, we lose no generality in decomposing the partial trace in the canonical basis $\bra c^\rC$, as in Remark \ref{iodil}. The Kraus operators, then, are
\begin{equation}\label{genio2}
    K_c^{\rA\to\rM}=\bra{c}^\rC V^{\bar\rA\to\rC\rM}\eins^\rA=:\sum_{m\in\cM}\ket m^\rM\bra{w_{c,m}}^\rA.
\end{equation}
Here $\ket{w_{c,m}}^\rA=\eins^\rA\ket{\bar w_{c,m}}^{\bar\rA}$, the latter defined through
\begin{align}
\sum_{c,m}\ket{c}^\rC\ket{m}^\rM\bra{\bar w_{c,m}}^{\bar\rA}&=\sum_{c,m}\ket{u_c}^\rC\ket{m}^\rM\bra{\bar v_{c,m}}^{\bar\rA}\nonumber\\
&(=V).
\end{align}
Notice that $\bar\sW_\cM\equiv\left\{\ket{\bar w_{c,m}}^{\bar\rA}\right\}_{c,m}$ is, just like $\bar\sV_\cM$, an $\alpha$-decoupling scheme for $\bar\cV$ in $\cH^{\bar\rA}$; the schemes are associated, respectively, with the $\bar\cV$ bases $\bar\sW\equiv\left\{\ket{\bar w_{c}}^{\bar\rA}:=\sum_m\alpha_m\ket{\bar w_{c,m}}^{\bar\rA}\right\}_{c}$ and $\bar\sV$. We summarize the above observations in the following\dots well, observation.
\begin{obs}\label{odil}
    If $\cE^{\rA\to\rM}$ is a channel (IO or otherwise) that deterministically maps a subspace $\cV\in\cH^\rA$ to $\proj{\alpha}^\rM$, \emph{any} Kraus operator decomposition of $\cE$ must involve operators of the form
    \begin{equation}
        K_c=\sum_{m\in\cM}\ket m^\rM\bra{w_{c,m}}^\rA
    \end{equation}
    with $\ket{w_{c,m}}^\rA=\eins^\rA\ket{\bar w_{c,m}}^{\bar\rA}$ projections of an $\alpha$-decoupling scheme $\bar\sW_\cM\equiv\left\{\ket{\bar w_{c,m}}^{\bar\rA}\right\}_{c,m}$ for some $\bar\cV\supset\cV$ in some $\cH^{\bar\rA}\supset\cH^\rA$.
\end{obs}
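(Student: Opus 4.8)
The plan is to package the construction sketched in the paragraphs leading up to the statement into one argument that applies to an \emph{arbitrary} operator-sum representation of $\cE$. First I would fix any decomposition $\cE(\cdot)=\sum_{c\in\cC}K_c(\cdot)K_c^\dagger$, write $C:=\abs\cC$, and form the associated Stinespring isometry $V^{\rA\to\rC\rM}:=\sum_{c\in\cC}\ket c^\rC\otimes K_c$. Trace preservation of $\cE$ gives $V^\dagger V=\sum_cK_c^\dagger K_c=\eins^\rA$ and $\cE=\tr^\rC\circ\left(V(\cdot)V^\dagger\right)$. Since $V$ is a genuine isometry from an $A$-dimensional space into a $CM$-dimensional one, necessarily $CM\ge A$, so $\cH^\rA$ can be embedded as a subspace of a space $\cH^{\bar\rA}$ with $\bar A=CM$ and $V$ can be extended to a unitary $\bar V^{\bar\rA\to\rC\rM}$.

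The essential input is the purity of the target. Fix an orthonormal basis $\left\{\ket{v_s}\right\}_{s\in\cS}$ of $\cV$. Because $\cE$ maps all of $\cV$ onto $\proj\alpha$, it sends $\cL(\cV)$ uniformly to $\proj\alpha$ up to scalars (the reasoning preceding Lemma \ref{TWisV}), so each $V\ket{v_s}$ is a purification of $\proj\alpha^\rM$ on $\rC\rM$; a purification of a pure state factorizes, so $V\ket{v_s}=\ket{u_s}^\rC\otimes\ket\alpha^\rM$ for a unit vector $\ket{u_s}^\rC$, and $\bket{v_s}{v_{s'}}=\bra{v_s}V^\dagger V\ket{v_{s'}}=\bket{u_s}{u_{s'}}$ makes $\left\{\ket{u_s}\right\}_s$ orthonormal. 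I would then complete this to an orthonormal basis $\left\{\ket{u_c}\right\}_{c\in\cC}$ of $\cH^\rC$ and set $\ket{\bar v_{c,m}}^{\bar\rA}:=\bar V^\dagger\left(\ket{u_c}^\rC\otimes\ket m^\rM\right)$. Unitarity of $\bar V$ makes $\left\{\ket{\bar v_{c,m}}\right\}_{c,m}$ orthonormal, one has $\sum_m\alpha_m\ket{\bar v_{c,m}}=\bar V^\dagger\left(\ket{u_c}\otimes\ket\alpha\right)=:\ket{\bar v_c}$, and $\ket{\bar v_s}=\bar V^\dagger\bar V\ket{v_s}=\ket{v_s}$ for $s\in\cS$, so $\bar\cV:=\mathrm{span}\left\{\ket{\bar v_c}\right\}_c$ contains $\cV$. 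By Definition \ref{ddec2}, $\left\{\ket{\bar v_{c,m}}\right\}_{c,m}$ is an $\alpha$-decoupling scheme for $\bar\cV$ in $\cH^{\bar\rA}$, and $\bar V$ is precisely the decoupling unitary it induces (Observation \ref{odec}).

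Finally I would re-expand $\bar V$ in the canonical $\rC$-basis: writing $\bar V=\sum_{c,m}\ket c^\rC\ket m^\rM\bra{\bar w_{c,m}}^{\bar\rA}$ forces $\ket{\bar w_{c,m}}:=\sum_{c'}\overline{\bket c{u_{c'}}}\,\ket{\bar v_{c',m}}$, a unitary reshuffling of the $c$-index that acts only within a fixed $m$. Hence $\left\{\ket{\bar w_{c,m}}\right\}_{c,m}$ is again orthonormal and $\sum_m\alpha_m\ket{\bar w_{c,m}}$ is again a basis of $\bar\cV$, so $\bar\sW_\cM:=\left\{\ket{\bar w_{c,m}}\right\}_{c,m}$ is an $\alpha$-decoupling scheme for $\bar\cV$; and contracting $\bar V$ against $\bra c^\rC$ and restricting to the embedded $\cH^\rA$ recovers $K_c=\bra c^\rC\bar V\,\eins^\rA=\sum_m\ket m^\rM\bra{w_{c,m}}^\rA$ with $\ket{w_{c,m}}=\eins^\rA\ket{\bar w_{c,m}}$, which is the asserted form (cf.\ Remark \ref{iodil}).

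Most of this is bookkeeping; the one substantive point is the factorization $V\ket{v_s}=\ket{u_s}^\rC\otimes\ket\alpha^\rM$, which is where the purity of $\alpha$ is genuinely used and which is what makes $\cE$ act on \emph{all} of $\cL(\cV)$ as the constant map to $\proj\alpha$, thereby pinning down the (generally oblique and overcomplete) decoupling scheme. The only subtlety to watch is that the argument be run for \emph{every} Kraus decomposition, not just a minimal one: since any two operator-sum representations are related by an isometry on the $\rC$ register and every step above is covariant under such transformations, nothing is lost. I do not expect a hard obstacle here — the statement is essentially a careful repackaging of the discussion that precedes it.
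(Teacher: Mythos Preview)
Your proposal is correct and follows essentially the same route as the paper: the observation is explicitly stated there as a summary of the preceding paragraphs, and you have faithfully packaged that discussion\m Stinespring dilation of an arbitrary Kraus decomposition, unitary completion in an enlarged $\bar\rA$, the factorization $V\ket{v_s}=\ket{u_s}\otimes\ket\alpha$ from purity of the target, and the basis change from $\ket{u_c}$ to $\ket c$ relating $\bar\sV_\cM$ and $\bar\sW_\cM$\m into a single self-contained argument. Your added remarks (the explicit justification of the factorization and the covariance under isometries on $\rC$) only make the exposition tighter.
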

Thus, the problem of finding the least-coherent IO channel(s) executing a given decoupling task involves optimizing over all possible decoupling schemes within arbitrarily large $\bar\rA$ and $\bar\cV$.

Since $\bar\sW$ (introduced before Observation \ref{odil}) is a $\bar\cV$-basis and $\cV\subset\bar\cV$,
    \begin{align}\label{wwbar}
        \eins_{\bar\cV}=\sum_c\proj{\bar w_c}&\ge\eins_\cV\nonumber\\
        \Rightarrow T_\cE=\eins^{\rA}\left(\sum_c\proj{\bar w_c}\right)\eins^{\rA}&\ge\eins^{\rA}\eins_\cV\eins^{\rA}=\eins_\cV,
    \end{align}
where we used the fact that $\cV\subset\cH^\rA$. Meanwhile, since $\eins_{\bar\cV}\le\eins^{\bar\rA}$, $T_\cE=\eins^{\rA}\eins_{\bar\cV}\eins^{\rA}\le\eins^{\rA}\eins^{\bar\rA}\eins^{\rA}=\eins^{\rA}$. Thus, using the concept of decoupling schemes we have arrived at the result of Lemma \ref{TWisV} via a different route. More research into these structures may allow us to harness their properties to attack channel-related problems in the dilation picture.

\end{document}